\newtheorem{theorem}{Theorem}[section]
\newtheorem{lemma}[theorem]{Lemma}
\newtheorem{corollary}[theorem]{Corollary}
\newtheorem{definition}[theorem]{Definition}
\definecolor{shapiq}{rgb}{0.937,0.153,0.651}
\definecolor{svarmiq}{rgb}{0,0.706,0.847}
\definecolor{baseline}{rgb}{0.49,0.325,0.871}
\begin{document}

%
\runningtitle{SVARM-IQ: Efficient Approximation of Any-order Shapley Interactions through Stratification}

%
\runningauthor{Kolpaczki, Muschalik, Fumagalli, Hammer, Hüllermeier}

\twocolumn[

\aistatstitle{SVARM-IQ: Efficient Approximation of Any-order \\ Shapley Interactions through Stratification}

\aistatsauthor{Patrick Kolpaczki\\Paderborn University \And Maximilian Muschalik\\University of Munich (LMU)\\Munich Center for Machine Learning\And Fabian Fumagalli\\CITEC\\Bielefeld University\AND Barbara Hammer\\CITEC\\Bielefeld University\And Eyke Hüllermeier\\University of Munich (LMU)\\Munich Center for Machine Learning}
\aistatsaddress{ } 

]

\begin{abstract}    
    Addressing the limitations of individual attribution scores via the Shapley value (SV), the field of explainable AI (XAI) has recently explored intricate interactions of features or data points.
    In particular, \mbox{extensions}~of~the SV, such as the Shapley Interaction Index (SII), have been proposed as a measure
    to still benefit from the axiomatic basis of the SV.
    However, similar to the SV, their exact computation remains computationally prohibitive.
    Hence, we propose with SVARM-IQ a sampling-based approach to efficiently approximate Shapley-based interaction indices of any order.
    SVARM-IQ can be applied to a broad class of interaction indices, including the SII, by leveraging a novel stratified representation.
    We provide non-asymptotic theoretical guarantees on its approximation quality and empirically demonstrate that SVARM-IQ achieves state-of-the-art estimation results in practical XAI scenarios on different model classes and application domains.
\end{abstract}

\section{INTRODUCTION} \label{sec:Introduction}

Interpreting black box machine learning (ML) models via feature attribution scores is a widely applied technique in the field of explainable AI (XAI) \citep{Adadi.2022,Covert_Lundberg_Lee_2021,chen2023algorithms}.
However, in real-world applications, such as genomics \citep{Wright.2016} or tasks involving natural language \citep{Tsang.2020b}, isolated  features are less meaningful.
In fact, it was shown that, in the presence of strong feature correlation or higher order interactions, feature attribution scores are not sufficient to capture the reasoning of a trained ML model
\citep{Wright.2016,Slack.2020,Sundararajan.2020b,Kumar.2020,Kumar.2021}.
As a remedy, \emph{feature interactions} extend feature attributions to arbitrary groups of features (see \cref{fig_intro_example}).

\begin{figure}[t]
    \centering
    \includegraphics[width=0.82\columnwidth]{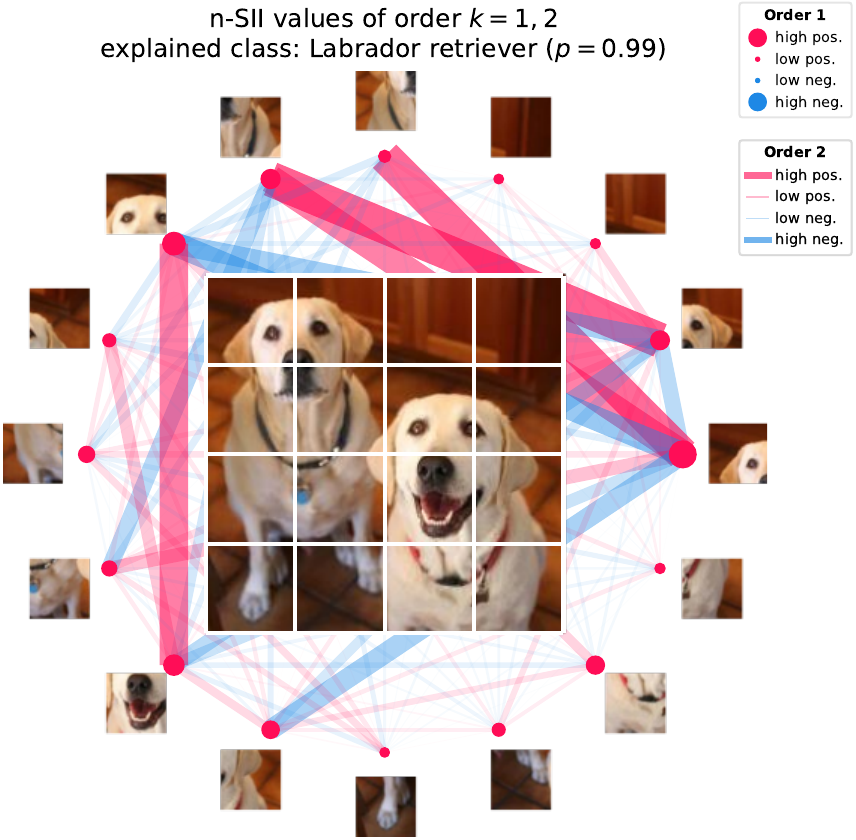}
    \caption{By dividing an ImageNet picture into multiple patches, attribution scores for single patches and interactions scores for pairs aid explaining a vision transformer.
    }
    \label{fig_intro_example}
\end{figure}

A prevalent approach to define feature attributions is based on the Shapley value (SV) \citep{Shapley.1953}, an axiomatic concept from cooperative game theory that fairly distributes the payout achieved by a group among its members.
Extensions of the SV to Shapley-based interaction indices, i.e., interaction indices that reduce to the SV for single players, have been proposed \citep{Grabisch_Roubens_1999,pmlr-v206-bordt23a,Sundararajan_Dhamdhere_Agarwal_2020,Tsai_Yeh_Ravikumar_2022}.
Yet, the exact computation of the SV and Shapley-based interactions without further assumptions on the ML model quickly becomes infeasible due to its exponential complexity \citep{Deng.1994}.

In this work, we present \emph{SVARM Interaction Quantification} (SVARM-IQ), a novel approximation technique for a broad class of interaction indices, including Shapley-based interactions, which is applicable to any cooperative game.
SVARM-IQ extends Stratified SVARM \citep{Kolpaczki.2023} to any-order interactions by introducing a novel representation of interaction indices through stratification.

\paragraph{Contribution.}
Our core contributions include:
\begin{enumerate}
    \item \emph{SVARM-IQ} (\cref{sec_method}): A model-agnostic approximation algorithm for estimating Shapley-based interaction scores of any order through leveraging a \emph{stratified} representation.
    \item \emph{Theoretical Analysis} (\cref{sec:TheoreticalResults}): 
    We prove, under mild assumptions, that SVARM-IQ is unbiased and provide bounds on the approximation error.
    \item \emph{Application} (\cref{sec_experiments}):
    An open-source implementation\footnote{\url{https://github.com/kolpaczki/svarm-iq}} and empirical evaluation demonstrating SVARM-IQ's superior approximation quality over state-of-the-art techniques.
\end{enumerate}

\paragraph{Related work.}
In cooperative game theory, Shapley-based interactions, as an extension to the SV, were first proposed with the Shapley-Interaction index (SII) \citep{Grabisch_Roubens_1999}.
Besides the SII, the Shapley-Taylor Interaction index (STI) \citep{Sundararajan_Dhamdhere_Agarwal_2020} and Faithful Shapley-Interaction index (FSI) \citep{Tsai_Yeh_Ravikumar_2022} were introduced, which, in contrast to the SII, directly require the efficiency axiom.
Beyond Shapley-based interaction indices, extensions of the Banzhaf value were studied by \cite{Hammer_Holzman_1992}.
In ML, limitations of feature attribution scores have been discussed in \cite{Wright.2016}, \cite{Sundararajan.2020b}, and \cite{Kumar.2020,Kumar.2021} among others. 
Model-specific interaction measures have been proposed for neural networks \citep{Tsang.2018,Singh.2019,Janizek.2021}.
Model-agnostic measures were introduced via functional decomposition \citep{Hooker.2004,Hooker.2007} in \citep{Lou.2013,Molnar.2019,Lengerich.2020,Hiabu.2023}.
Applications include complex language \citep{Murdoch.2018} and image classification \citep{Tsang.2020b} models, as well as application domains, such as gene interactions \citep{Wright.2016}.
Besides pure explanation purposes, e.g.\ understanding sentiment predictions from NLP models \citep{Fumagalli.2023}, \cite{Chu.2020} leveraged the SII to improve feature selection for tree classifiers. 

Approximation techniques for the SV have been proposed via permutation sampling \citep{Castro.2009}, which has been extended to the SII and STI \citep{Sundararajan_Dhamdhere_Agarwal_2020,Tsai_Yeh_Ravikumar_2022}.
For the SV, \cite{Castro.2017} demonstrated the impact of stratification on approximation performance.
Alternatively, the SV can be represented as a solution to a least squares problem \citep{Charnes_Golany_Keane_Rousseau_1988}, which was exploited for approximation \citep{Lundberg.2017,Covert.2021} and extended to FSI \citep{Tsai_Yeh_Ravikumar_2022}.
Recent work proposed a model-agnostic sampling-based approach \citep{Fumagalli.2023} for Shapley-based interactions, which was further linked to \cite{Covert.2021}.
On the model-specific side \cite{Muschalik.2024} extended the polynomial-time exact computation of the SV for local feature importance in decision trees \citep{Lundberg.2020} to the SII.
While permutation-based approaches are restricted to update single estimates, \cite{Kolpaczki.2023} proposed wit Stratified SVARM a novel approach for the SV that is capable of updating all estimates using only a single value function call.

\section{SHAPLEY-BASED INTERACTION INDICES}
\label{sec:InteractionIndices}
In the following, we are interested in properties of a \emph{cooperative game}, that is a 
tuple $(\mathcal{N}, \nu)$ containing a \emph{player set} $\mathcal{N} = \{1,\ldots,n\}$ with $n \in \mathbb{N}$ players and a \textit{value function} $\nu : 2^{\mathcal{N}} \to \mathbb{R}$ mapping each subset $S \subseteq \mathcal{N}$ of players, also called coalition, to a real-valued number $\nu(S)$.
In the field of XAI, the value function typically represents a specific \emph{model behavior} \citep{Covert_Lundberg_Lee_2021}, such as the prediction of an instance or the dataset loss.
The player set represents the entities whose attribution will be determined, e.g., the contribution of features to a prediction or the dataset loss.
To determine the worth of individual players, the Shapley value (SV) \citep{Shapley.1953} can be expressed as a weighted average over marginal contributions.

\begin{definition}[Shapley Value \citep{Shapley.1953}]
The SV is 
\begin{equation*}
    \phi_i = \sum\limits_{S \subseteq \mathcal{N} \setminus \{i\}} \frac{1}{n \binom{n-1}{|S|}} \Delta_i(S),
\end{equation*}
where $i \in \mathcal N$ and $\Delta_i(S) := \nu(S \cup \{i\}) - \nu(S)$.
\end{definition}

The SV is provably the unique attribution measure that fulfills the following axioms: linearity (linear combinations of value functions yield linear combinations of attribution), dummy (players that do not impact the worth of any coalition receive zero attribution), symmetry (two players contributing equally to all coalitions receive the same attribution), and efficiency (the sum of of all players' attributions equals the worth of all players) \citep{Shapley.1953}.
In many ML related applications, however, the attribution via the SV is limited 
in the presence of strong feature correlation or higher order interaction \citep{Slack.2020,Sundararajan.2020b,Kumar.2020,Kumar.2021}.
It is therefore necessary to study \emph{interactions between players} in cooperative games.
The SV is a weighted average of marginal contributions $\Delta_i$ of single players, and a natural extension to pairs of players is 
\begin{equation*}
    \Delta_{i,j}(S) :=\nu(S \cup \{i,j\}) - \nu(S) - \Delta_i(S) - \Delta_j(S)
\end{equation*}
for $S \subseteq \mathcal{N} \setminus \{i,j\}$.
Generalizing this recursion to higher order interactions yields the following definition.

\begin{definition}[Discrete Derivative \citep{DBLP:journals/geb/FujimotoKM06}]
    For $K \subseteq \mathcal N$, the \emph{K-derivative} of $\nu$ at $S \subseteq \mathcal N \setminus K$ is 
    \begin{equation*}
    \Delta_K(S) := \sum\limits_{W \subseteq K} (-1)^{\vert K \vert-\vert W \vert} \cdot \nu(S \cup W).
\end{equation*}
\end{definition}

The Shapley interaction index (SII) was the first axiomatic extension of the SV to higher order interaction \citep{Grabisch_Roubens_1999}.
It can be represented as a weighted average of discrete derivatives.

\begin{definition}[Shapley Interaction Index \citep{Grabisch_Roubens_1999}]
    The SII of $K\subseteq \mathcal N$ is defined as
    \begin{equation*}
        I^{\text{SII}}_K =  \sum\limits_{S \subseteq \mathcal{N} \setminus K} \frac{1}{(n-\vert K \vert+1)\binom{n-\vert K \vert}{\vert S \vert}}\Delta_{K}(S).
    \end{equation*}
\end{definition}

\paragraph{Cardinal Interaction Indices.}
Besides the SII, the Shapley-Taylor interaction index (STI) \citep{Sundararajan_Dhamdhere_Agarwal_2020} and Faithful Shapley interaction index (FSI) \citep{Tsai_Yeh_Ravikumar_2022} have been proposed as extensions of the SV to interactions.
More general, the SII can be viewed as a particular instance of a broad class of interaction indices, known as cardinal interaction indices (CIIs) \citep{DBLP:journals/geb/FujimotoKM06}, which are defined as a weighted average over discrete derivatives:
\begin{equation*}
    I_K = \sum\limits_{S \subseteq \mathcal{N} \setminus K}\lambda_{k,\vert S \vert} \Delta_{K}(S)
\end{equation*}
with weights $\lambda_{k,|S|}$.
In particular, 
every interaction index satisfying the (generalized) linearity, symmetry and dummy axiom, e.g., SII, STI and FSI, can be represented as a CII \citep{Grabisch_Roubens_1999}.
Beyond Shapley-based interaction indices, CIIs also include other interaction indices, such as a generalized Banzhaf value \citep{Hammer_Holzman_1992}.
In Section~\ref{sec_method}, we propose a unified approximation that applies to any CII.
For details about other CIIs and their specific weights, we refer to Appendix~\ref{app:CII}.

The SII is the provably unique interaction index that fulfills the (generalized) linearity, symmetry and dummy axiom, as well as a novel recursive axiom that links higher order interactions to lower order interactions \citep{Grabisch_Roubens_1999}.
For interaction indices it is also possible to define a generalized efficiency condition, i.e. that $\sum_{K \subseteq \mathcal N, \vert K \vert \leq k_{\max}} I_K = \nu(\mathcal N)$ for a maximum interaction order $k_{\max}$.
In ML applications, this condition ensures that the sum of contributions equals the model behavior of $\mathcal N$, such as the prediction of an instance.
The SII scores can be aggregated to fulfill efficiency, which yield the n-Shapley values (n-SII) \citep{pmlr-v206-bordt23a}.
Furthermore, other variants, such as STI and FSI, extend the SV to interactions by directly requiring an efficiency axiom.
In contrast to the SV, however, a unique index is only obtained by imposing further conditions.
Similar to the SV, whose computation is NP-hard \citep{Deng.1994}, the weighted sum of discrete derivatives requires $2^n$ model evaluations, necessitating approximation techniques.

\subsection{Approximations of Shapley-based Interaction Scores}
Different approximation techniques have been proposed to overcome the computational complexity of Shapley-based interaction indices, which extend on existing techniques for the SV.

\paragraph{Permutation Sampling.}
For the SV, permutation sampling \citep{Castro.2009} was proposed, where the SV is represented as an average over randomly drawn permutations of the player set.
For each drawn permutation, 
the algorithm successively adds players to the subset, starting from the empty set using the given order.
By comparing the evaluations successively, the marginal contributions are used to update the estimates.
Extensions of permutation sampling have been proposed for the SII \citep{Tsai_Yeh_Ravikumar_2022} and STI \citep{Sundararajan_Dhamdhere_Agarwal_2020}.
For the SII, only interactions that appear in a consecutive order in the permutation can be updated, resulting in very few updates per permutation. 
For the STI, all interaction scores can be updated with a single permutation, however, the computational complexity increases, as the discrete derivatives have to be computed for every subset, resulting in an increase by a factor of $2^k$ per interaction.

\paragraph{Kernel-based Approximation.}
Besides the weighted average, the SV also admits a representation as a solution to a constrained weighted least square problem \citep{Charnes_Golany_Keane_Rousseau_1988}.
This optimization problem requires again $2^n$ model evaluations.
However, it was proposed to approximate the optimization problem through sampling 
and solve the resulting optimization problem explicitly, which is known as KernelSHAP \citep{Lundberg.2017}.
An extension of kernel-based approximation was proposed for FSI \citep{Tsai_Yeh_Ravikumar_2022}, but it remains open, whether this approach can be generalized to other indices, while its theoretical properties are unknown.

\paragraph{Unbiased KernelSHAP and SHAP-IQ.}
Unbiased KernelSHAP \citep{Covert.2021} constitutes a variant of KernelSHAP to approximate the SV, which yields stronger theoretical results, including an unbiased estimate.
While this approach is motivated through a kernel-based approximation, it was shown that it is possible to simplify the calculation to a sampling-based approach \citep{Fumagalli.2023}.
Using the sampling-based approach, SHAP-IQ \citep{Fumagalli.2023} extends Unbiased KernelSHAP to general interaction indices.

\subsection{Stratified Approximation for the SV}
Stratification partitions a population into distinct sub-populations, known as strata, where sampling is then separately executed for each stratum.
If the strata~are chosen as homogeneous groups with lower variability, stratified sampling yields a better approximation.
First proposed for the SV by \cite{Maleki.2013}, it was shown empirically that stratification by coalition size can 
improve the approximation 
\citep{Castro.2017}, while recent work extended it by more sophisticated techniques \citep{Burgress.2021}.
With Stratified SVARM, \cite{Kolpaczki.2023} proposed an approach that abstains from sampling marginal contributions.
Instead, it samples coalitions to leverage its novel representation of the SV, which splits the marginal contributions into two coalitions and stratifies them by size.
This allows one to assign each sampled coalition to one stratum per player, thus efficiently computing SV estimates for all players simultaneously.
Hence in contrast to permutation sampling, Stratified SVARM reaches a new level of efficiency as all estimates are updated using a single model evaluation.
In comparison to KernelSHAP, it is well understood theoretically and shows significant performance improvements compared to Unbiased KernelSHAP \citep{Kolpaczki.2023}.
In the following, we extend Stratified SVARM to Shapley-based interaction indices, and even general CIIs.

\section{SVARM-IQ: A STRATIFIED APPROACH}
\label{sec_method}

Since the practical infeasibility of computing the CII incentivizes its approximation as a remedy, we formally state our considered approximation problem under the fixed-budget setting in \cref{subsec:Problem}.
We continue by introducing our stratified representation of the CII in \cref{subsec:Representation}, which stands at the core of our new method \emph{SVARM-IQ} presented in \cref{subsec:SVARM-IQ}.

\subsection{Approximation Problem} \label{subsec:Problem}

Given a cooperative game $(\mathcal{N}, \nu)$, an order $k \geq 2$, a budget $B \in \mathbb{N}$, and the weights $(\lambda_{k,\ell})_{\ell \in \mathcal{L}_k}$, with $\mathcal{L}_k := \{0,\ldots,n-k\}$ specifying the desired CII, the goal is to approximate all the latent but unknown CII $I_K$ with $K \in \mathcal{N}_k := \{S \subseteq \mathcal{N} \mid |S| = k\}$ precisely.
The budget $B$ is the number of coalition evaluations or in other words accesses to $\nu$ that the approximation algorithm is allowed to perform.
It captures a time or resource constraint on the computation and is justified by the fact that the access to $\nu$ frequently imposes a bottleneck on the runtime due to costly inference, manipulation of data, or even retraining of models. 
We denote by $\hat{I}_K$ the algorithm's estimate of $I_K$.
Since we consider randomized algorithms, returning stochastic estimates, the approximation quality of an estimate $\hat{I}_K$ is judged by the following two commonly used measures that are to be minimized:
First, the mean squared error (MSE) of any set $K$: $\mathbb{E} \left[ ( \hat{I}_K - I_K )^2 \right]$, and second, a bound on the probability $\mathbb{P} (|\hat{I}_K - I_K| \geq \varepsilon) \leq \delta$ to exceed a threshold $\varepsilon > 0$, commonly known as a $(\epsilon, \delta)$-approximation.

\subsection{Stratified Representation} \label{subsec:Representation}

Our sampling-based approximation algorithm SVARM-IQ leverages a novel stratified representation of the CII.
For the remainder, we stick to the general notion of the CII of any fixed order $k \geq 2$.
The concrete interaction type to be approximated can be specified by the weights $\lambda_{k,\ell}$.
We stratify the CII $I_K$ by coalition size and split the discrete derivatives $\Delta_K(S)$ into multiple strata to obtain: 
\begin{equation*}
    I_K = \sum\limits_{\ell=0}^{n-k} \binom{n-k}{\ell} \lambda_{k,\ell} \sum\limits_{W \subseteq K} (-1)^{k-|W|} \cdot I_{K,\ell}^W.
\end{equation*}
with strata terms for all $W \subseteq K$ and $\ell \in \mathcal{L}_k$:
\begin{equation}\label{eq_def_stratum}
    I_{K,\ell}^W := \frac{1}{\binom{n-k}{\ell}} \sum\limits_{\substack{S \subseteq \mathcal{N} \setminus K \\ |S| = \ell}} \nu(S \cup W).
\end{equation}
This representation is a generalization of the SV representation utilized by Stratified SVARM \citep{Kolpaczki.2023} as it extends from the SV to the CII.
Since each stratum contains $\binom{n-k}{\ell}$ many coalitions, $I_{K,\ell}^W$ is a uniform average of all eligible coalition worths and hence we obtain its estimate $\hat{I}_{K,\ell}^W$ by taking the sample-mean of evaluated coalitions belonging to that particular stratum.
Further, we can express any CII by means of the strata $I_{K,\ell}^W$ trough manipulating their weighting according to the weights $\lambda_{k,\ell}$.
Subsequently, the aggregation of the strata estimates, mimicking our representation, yields the desired CII estimate:
\begin{equation*}
     \hat{I}_K = \sum\limits_{\ell=0}^{n-k} \binom{n-k}{\ell} \lambda_{k,\ell} \sum\limits_{W \subseteq K} (-1)^{k-|W|} \cdot \hat{I}_{K,\ell}^W.
\end{equation*}
Further, we demonstrate the popular special case of SII between pairs, i.e., $k=2$, in Appendix~\ref{app:Pairs}.

\subsection{SVARM-IQ} \label{subsec:SVARM-IQ}

Instead of naively sampling coalitions separately from each of the $2^k \binom{n}{k} (n-k+1)$ many strata, we propose with SVARM-IQ a more sophisticated mechanism, similar to \cite{Kolpaczki.2023}, which leverages the stratified representation of the CII.

\paragraph{Update Mechanism.}
SVARM-IQ (given in \cref{alg:SVARM-IQ} and \cref{fig_SVARMIQ}) updates for a single sampled coalition $A \subseteq \mathcal{N}$ one strata estimate of each of the $\binom{n}{k}$ many considered subsets $K$.
This is made feasible by the observation that any coalition $A$ belongs into exactly one stratum associated with $I_K$.
This is in the spirit of the \emph{maximum sample reuse} principle, employed previously for the Banzhaf value \citep{Wang.2023} and the SV \citep{Kolpaczki.2023} with the underlying motivation that each seen observation should be utilized to update all interaction estimates.
To be more precise, for each $K \in \mathcal{N}_k$ we update
\begin{equation}\label{eq_W_ell}
    \hat{I}_{K,\ell}^W \text{ with } W = A \cap K \text{ and } \ell = |A| - |W|.
\end{equation}
Notably, our sampling ensures that for \emph{every interaction} $A \setminus K \sim \text{unif}(\{S \subseteq \mathcal N \setminus K \mid \vert S \vert = \ell\})$, as the probability of $A \setminus K$ conditioned on $W = A \cap K$ and $\ell = \vert A \vert - \vert A \cap K \vert$ is uniform.
This is required as $\hat I_{K,\ell}^W$ is a uniform average, cf. Eq.~(\ref{eq_def_stratum}), and allows to update estimates for every interaction by sampling a single subset $A$.
Considering the limited budget $B$, this update rule elicits information from $\nu$ in a more ``budget-efficient'' manner, since it contributes to $\binom{n}{k}$ many estimates with only a single evaluation.
To guide the sampling, we first draw in each time step $b$ a coalition size $a_b$ from a probability distribution $P_k$ over the eligible sizes, 
and draw then $A_b$ uniformly at random among all coalitions of size $a_b$.
We store the evaluated worth $\nu(A_b)$ in order to reuse it for all estimate updates, one for each $K$.
This is done by calling \textsc{\texttt{UpdateMean}} (see Appendix~\ref{app:Pseudocode}), which sets the associated estimate $\hat{I}_{K,\ell}^W$ to the new average, taking the sampled worth $\nu(A_b)$ and the number of so far observed samples $c_{K,\ell}^W$ of that particular estimate into account.
We set $P_k$ to be the uniform distribution over all sizes, i.e., $P_k=\text{unif}(0,n)$.
A specifically tailored distribution for $k=2$ allows us to express sharper theoretical results in \cref{sec:TheoreticalResults}.

\begin{figure}[t]
    \centering
    \includegraphics[width=0.95\columnwidth]{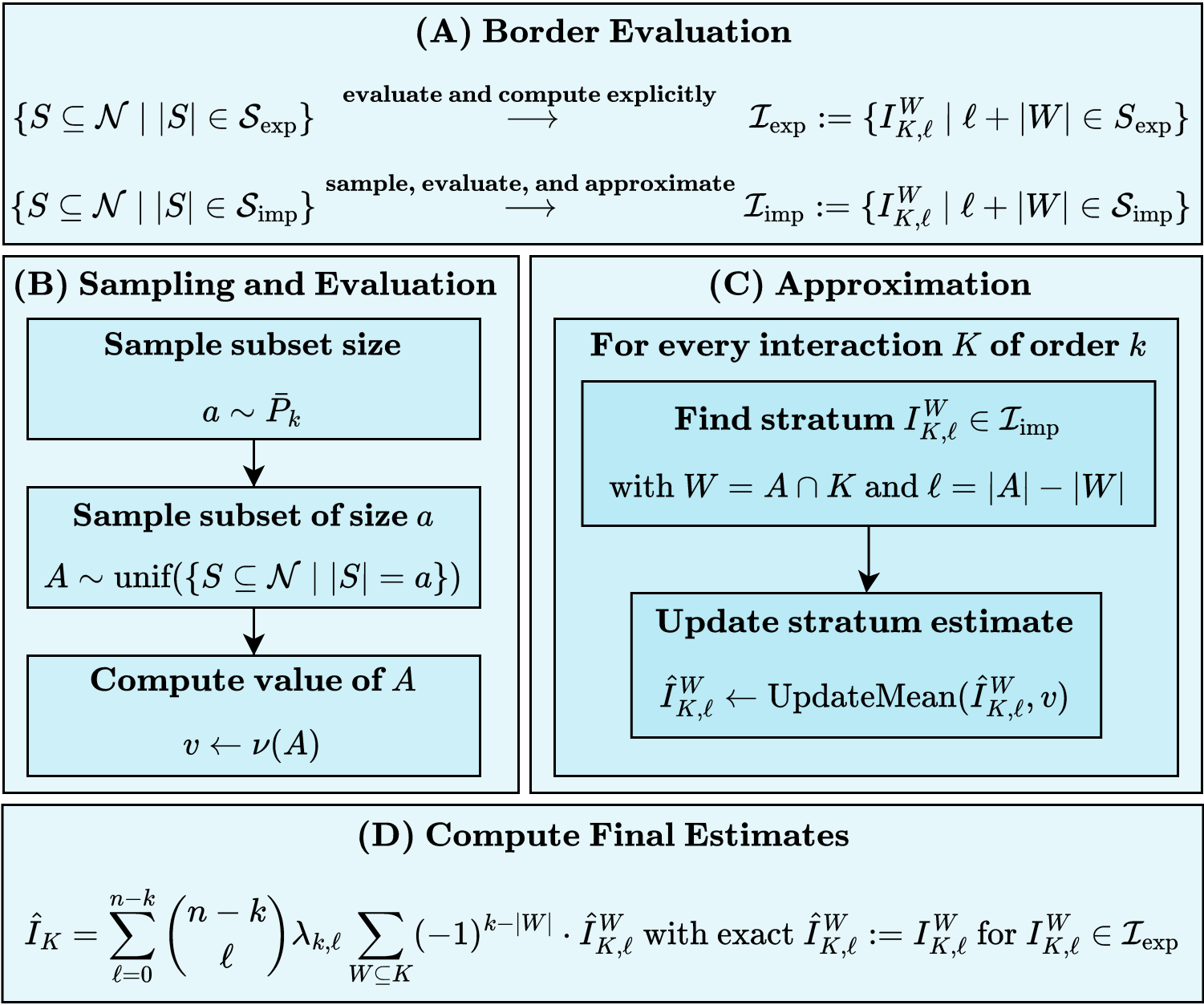}
    \caption{Schematic overview of SVARM-IQ.}
    \label{fig_SVARMIQ}
\end{figure}

\paragraph{Border Sizes.}
Further, we enhance our approach by transferring a technique, introduced by \cite{Fumagalli.2023}.
We observe that for very low and very high $s$ only a few coalitions of size $s$ exist, $\binom{n}{s}$ many to be precise.
Thus, evaluating all these coalitions and calculating the associated strata $I_{K,\ell}^W$ explicitly upfront saves budget, as it avoids duplicates, i.e., coalitions sampled multiple times.
Given the budget $B$ and the probability distribution over sizes $P_k$, we determine a set of subset sizes $\mathcal{S}_{\text{exp}} = \{0,\ldots,s_{\text{exp}},n-s_{\text{exp}},\ldots,n\}$, for which the expected number of samples exceeds the number of coalitions of each subset size.
Consequently, we evaluate \emph{all coalitions} of sizes in $\mathcal{S}_{\text{exp}}$, i.e., $S \subseteq \mathcal N$ with $\vert S \vert \in \mathcal S_{\text{exp}}$.
From the remaining sizes $\mathcal S_{\text{imp}} := \{s_{\text{exp}}+1,\ldots,n-s_{\text{exp}}-1\}$, we sample coalitions.
This split allows to compute all strata
\begin{equation*}
    \mathcal I_{\text{exp}} := \{ I_{K,\ell}^W \mid \ell + \vert W \vert \in S_\text{exp}\}
\end{equation*}
\emph{explicitly}, which follows from Eq.~(\ref{eq_W_ell}) and $\ell + \vert W \vert = \vert A \vert$.
The remaining strata 
\begin{align*}
     \mathcal I_{\text{imp}} &:= \{ I_{K,\ell}^W \mid \ell + \vert W \vert \in \mathcal S_{\text{imp}}\}
\end{align*}
are \emph{approximated} with $\hat I_{K,\ell}^W$ by sampling coalitions. 
The procedure to determine $\mathcal S_{\text{exp}}$ and $\mathcal S_{\text{imp}}$, named \textsc{\texttt{ComputeBorders}} (see Appendix~\ref{app:Pseudocode}), is applied before the sampling loop in \cref{alg:SVARM-IQ}. 
Hence, SVARM-IQ enters its sampling loop with a leftover budget of $\bar{B} := B - \sum\nolimits_{s \in \mathcal{S}_{\text{exp}}} \binom{n}{s}$, and repeatedly applies the update mechanism.
The distribution $P_k$ is altered to $\bar{P}_k$ by setting $\bar{P}_k(s) = 0$ for all $s \in \mathcal{S}_{\text{exp}}$ and upscaling all entries $s \in \mathcal{S}_{\text{imp}}$ such that they sum up to 1.
Note that this technique yields exact CII values for $B = 2^n$.

\begin{algorithm}[t]
\caption{SVARM-IQ}
\label{alg:SVARM-IQ}
\begin{algorithmic}[1]
    \STATE {\bfseries Input:} $(\mathcal{N}, \nu)$, $B \in \mathbb{N}$, $k \in \{1,\ldots,n\}$, $(\lambda_{k,\ell})_{\ell \in \mathcal{L}_k}$
    \STATE $\hat{I}_{K,\ell}^W, c_{K,\ell}^W \leftarrow 0$ $\forall K \in \mathcal{N}_k, \ell \in \mathcal{L}_k, W \subseteq K$      
    \STATE $\mathcal{S}_{\text{exp}}, \mathcal{S}_{\text{imp}} \leftarrow$ \textsc{\texttt{ComputeBorders}}
    \STATE $\bar{B} \leftarrow B - \sum\nolimits_{s \in \mathcal{S}_{\text{exp}}} \binom{n}{s}$
    \FOR{$b = 1, \ldots, \bar{B}$}
        \STATE Draw size $a_b \in \mathcal{S}_{\text{imp}} \sim \bar{P}_k$
        \STATE Draw $A_b$ from $\{S \subseteq \mathcal{N} \mid |S| = a_b\}$ u.a.r.
        \STATE $v_b \leftarrow \nu(A_b)$ \COMMENT{store coalition worth}
        \FOR{$K \in \mathcal{N}_k$}
            \STATE $W \leftarrow A_b \cap K$ \COMMENT{get stratum set}
            \STATE $\ell \leftarrow a_b - |W|$ \COMMENT{get stratum size}
            \STATE $\hat{I}_{K,\ell}^W \leftarrow$ \textsc{\texttt{UpdateMean}}$(\hat{I}_{K,\ell}^W, c_{K,\ell}^W, v_b)$
            \STATE $c_{K,\ell}^{W} \leftarrow c_{K,\ell}^{W} + 1$ \COMMENT{increment counter}
        \ENDFOR
    \ENDFOR
    \STATE $\hat{I}_k \leftarrow \sum\limits_{\ell=0}^{n-k} \binom{n-k}{\ell} \lambda_{k,\ell} \sum\limits_{W \subseteq K} (-1)^{k-|W|} \hat{I}_{K,\ell}^W$ $\forall K \in \mathcal{N}_k$
    \STATE {\bfseries Output:} $\hat{I}_K$ for all $K \in \mathcal{N}_k$
\end{algorithmic}
\end{algorithm}


\paragraph{Approximating Multiple Orders and Indices.}

SVARM-IQ is not restricted to approximate only one specific order $k$ at the time.
Quite to the contrary, it can be extended to maintain strata estimates $\hat{I}_{K,\ell}^W$ for multiple orders, which are then simultaneously updated within the sampling loop without imposing further budget costs.
The aggregation to interaction estimates $\hat{I}_K$ is then carried out for each considered subset $K$ separately.
Note that this also entails the SV, i.e., $k=1$, thus allowing one to approximate attribution and interaction simultaneously.
Since the stratification allows to combine the strata to any CII, SVARM-IQ can approximate multiple CII's at the same time, notably without even the need to specify them during sampling.
This can be realized by specifying multiple weighting sequences $(\lambda_{k,\ell})_{\ell \in \mathcal{L}_k}$, one for each CII of interest, and performing the final estimate computation $\hat{I}_K$ for each type.
Note that this comes without incurring any additional budget cost.

\section{THEORETICAL RESULTS} \label{sec:TheoreticalResults}

In the following, we present the results of our theoretical analysis for SVARM-IQ.
All proofs are deferred to Appendix~\ref{app:Proofs}.
In order to make the analysis feasible, a natural assumption is to observe at least one sample for each approximated stratum $I_{K,\ell}^W \in \mathcal{I}_{\text{imp}}$.
We realize this requirement algorithmically only for the remainder of this chapter by executing a \textsc{\texttt{WarmUp}} procedure (see Appendix~3) between \textsc{\texttt{ComputeBorders}} and SVARM-IQ's sampling loop.
For each $I_{K,\ell}^W \in \mathcal{I}_{\text{imp}}$ it samples a coalition $A \subseteq \mathcal{N} \setminus K$ of size $\ell$ and sets $\hat{I}_{K,\ell}^W$ to $\nu(A \cup W)$.
Hence, SVARM-IQ enters its sampling loop with a leftover budget of $\tilde{B} := B - \sum\nolimits_{s \in \mathcal{S}_{\text{exp}}} \binom{n}{s} - \vert \mathcal{I}_{\text{imp}} \vert$.
We automatically set $s_{\text{exp}} \geq 1$, which consumes only $2n+2$ evaluations. 
Hence for $n=3$, all strata are already explicitly calculated.
Since \textsc{\texttt{ComputeBorders}} evaluates then at least all coalitions of size $s \in \{0,1,n-1,n\}$, the initial distribution $P_k$ over sizes has support $\{2,\ldots,n-2\}$.
For $k \geq 3$, this allows us to specify $P_k$ to be the uniform distribution:
\begin{equation*}
    P_k(s) := \frac{1}{n-3} \text{ for all } s \in \{2,\ldots,n-2\}.
\end{equation*}
Further for the remainder of the analysis, we use a specifically tailored distribution in the case of $k=2$:
\begin{equation*}
    P_2(s) :=
    \begin{cases}
        \frac{\beta_n}{s(s-1)} & \text{if } s \leq \frac{n-1}{2} \\
        \frac{\beta_n}{(n-s)(n-s-1)} & \text{if } s \geq \frac{n}{2}
    \end{cases}
\end{equation*}
with $\beta_n = \frac{n^2-2n}{2(n^2-4n+2)}$ for even $n \geq 4$ and $\beta_n = \frac{n-1}{2(n-3)}$ for odd $n \geq 5$.
This allows us to express sharper bounds in comparison to the uniform distribution.

\paragraph{Notation and assumptions.}
We introduce some notation, coming in helpful in expressing our results legibly.
For any $w \in \{0,\ldots,k\}$ we denote by $\mathcal{L}_k^w := \{\ell \in \mathcal{L}_k \mid \ell + w \in \mathcal{S}_{\text{imp}}\}$.
For any $K \in \mathcal{N}_k$ and $\ell \in \mathcal{L}_k$ let $A_{K,\ell}$ be a random coalition with distribution $\mathbb{P}(A_{K,\ell} = S) = \binom{n-k}{\ell}^{-1}$ for all $S \subseteq \mathcal{N} \setminus K$ with $|S|=\ell$.
For any $W \subseteq K$ we denote the \emph{stratum variance} by $\sigma_{K,\ell,W}^2 := \mathbb{V}[\nu(A_{K,\ell} \cup W)]$
and the \emph{stratum range} by $r_{K,\ell,W} := \max\nolimits_{\substack{S \subseteq \mathcal{N} \setminus K \\ |S| = \ell}} \nu(S \cup W) - \min\nolimits_{\substack{S \subseteq \mathcal{N} \setminus K \\ |S| = \ell}} \nu(S \cup W)$.
For a comprehensive overview of the used notation, we refer to Appendix~\ref{app:symbols}.
As our only assumptions, we demand $n \geq 4$ and the budget to be large enough 
to execute \textsc{\texttt{ComputeBorders}}, \textsc{\texttt{WarmUp}} , and the sampling loop for one iteration, i.e., \ $\tilde{B} > 0$.

\paragraph{Unbiasedness, Variance, and MSE.}
We begin by showing that SVARM-IQ's estimates are unbiased, which is not only desirable but will also turn out useful shortly after in our analysis.
\begin{theorem} \label{the:Unbiased}
    SVARM-IQ's CII estimates are unbiased for all $K \in \mathcal{N}_k$, i.e., $\mathbb{E} [\hat{I}_K] = I_K$.
\end{theorem}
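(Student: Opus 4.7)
}
The plan is to reduce unbiasedness of $\hat I_K$ to unbiasedness of every stratum estimate $\hat I_{K,\ell}^W$, and then to verify the latter by a conditioning argument.

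First, by the definition of $\hat I_K$ in \cref{alg:SVARM-IQ} and linearity of expectation,
\begin{equation*}
    \mathbb E\bigl[\hat I_K\bigr]
    = \sum_{\ell=0}^{n-k} \binom{n-k}{\ell}\lambda_{k,\ell} \sum_{W\subseteq K}(-1)^{k-|W|}\,\mathbb E\bigl[\hat I_{K,\ell}^W\bigr],
\end{equation*}
so the representation $I_K=\sum_\ell\binom{n-k}{\ell}\lambda_{k,\ell}\sum_W(-1)^{k-|W|}I_{K,\ell}^W$ derived in \cref{subsec:Representation} reduces the theorem to showing $\mathbb E[\hat I_{K,\ell}^W]=I_{K,\ell}^W$ for every triple $(K,\ell,W)$.

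Next I would split the strata into the two disjoint classes $\mathcal I_{\text{exp}}$ and $\mathcal I_{\text{imp}}$. For $I_{K,\ell}^W\in\mathcal I_{\text{exp}}$, the procedure \textsc{\texttt{ComputeBorders}} evaluates \emph{all} coalitions of size $\ell+|W|\in\mathcal S_{\text{exp}}$; by \cref{eq_def_stratum}, $\hat I_{K,\ell}^W=I_{K,\ell}^W$ holds deterministically. For $I_{K,\ell}^W\in\mathcal I_{\text{imp}}$, \textsc{\texttt{WarmUp}} contributes one sample, and each iteration $b$ of the sampling loop may contribute additional ones. Let $Z_b:=\mathbb 1\{a_b=\ell+|W|,\ A_b\cap K=W\}$ and $Y_b:=\nu(A_b)$ when $Z_b=1$; also denote the warm-up sample by $Y_0$. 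Then $c_{K,\ell}^W=1+\sum_{b=1}^{\tilde B}Z_b\ge 1$ and $\hat I_{K,\ell}^W=(Y_0+\sum_{b:Z_b=1}Y_b)/c_{K,\ell}^W$.

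The crux is that, conditional on $Z_b=1$, the set $A_b\setminus K$ is uniform over $\{S\subseteq\mathcal N\setminus K:|S|=\ell\}$: indeed the two-step draw (size $a_b\sim\bar P_k$, then $A_b$ uniform over coalitions of size $a_b$) is exchangeable among the $\binom{n-k}{\ell}$ coalitions of size $\ell+|W|$ that intersect $K$ in exactly $W$, so each such coalition arises with equal probability. Therefore $\mathbb E[Y_b\mid Z_b=1]=\tfrac{1}{\binom{n-k}{\ell}}\sum_{S\subseteq\mathcal N\setminus K,|S|=\ell}\nu(S\cup W)=I_{K,\ell}^W$, and the same holds for the warm-up sample $Y_0$ by its explicit uniform draw. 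Because the iterates $A_1,\ldots,A_{\tilde B}$ and the warm-up draw are mutually independent, conditioning on the full indicator sequence $(Z_1,\ldots,Z_{\tilde B})$ fixes $c_{K,\ell}^W$ while leaving the samples $Y_0$ and $\{Y_b:Z_b=1\}$ i.i.d.\ with mean $I_{K,\ell}^W$. A single application of the tower property then gives
\begin{equation*}
    \mathbb E\bigl[\hat I_{K,\ell}^W\bigr]
    = \mathbb E\!\left[\mathbb E\!\left[\tfrac{1}{c_{K,\ell}^W}\!\!\sum_{i=1}^{c_{K,\ell}^W}\!Y_i\,\Big|\,(Z_b)_{b=1}^{\tilde B}\right]\right]
    = I_{K,\ell}^W.
\end{equation*}

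Substituting back completes the proof. The main obstacle is the delicate coupling between the random count $c_{K,\ell}^W$ in the denominator and the samples in the numerator; this ratio is not obviously an unbiased estimator. The conditioning argument above resolves this, using that $c_{K,\ell}^W$ is measurable with respect to the sequence $(Z_b)$, while the residual randomness of each contributing $A_b\setminus K$ remains uniform regardless of $(Z_b)$, so the conditional mean is trivially $I_{K,\ell}^W$ for any realized count.
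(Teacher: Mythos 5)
Your proposal is correct and follows essentially the same route as the paper: reduce to unbiasedness of each stratum estimate via linearity, handle the explicitly computed strata trivially, and for the sampled strata condition on the (randomly determined) sample count so that the contributing coalitions are conditionally uniform within the stratum, then finish with the tower property. Your conditioning on the full indicator sequence $(Z_b)$ rather than just on $m_{K,\ell}^W$ is a slightly finer (and arguably more carefully justified) version of the same argument the paper makes in its Lemma on unbiased strata estimates.
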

The unbiasedness enables us to reduce the MSE of any $\hat{I}_K$ to its variance.
In fact, the bias-variance decomposition states that $\mathbb{E}[(\hat{I}_K - I_K)^2] = (\mathbb{E}[\hat{I}_K] - I_K)^2 + \mathbb{V}[\hat{I}_K]$.
Hence, a variance analysis of the obtained estimates suffices to bound the MSE.
The variance of $\hat{I}_K$ is tightly linked to the number of samples SVARM-IQ collects for each stratum estimate $\hat{I}_{K,\ell}^W$.
At this point, we distinguish in our analysis between $k=2$ and $k \geq 3$ to obtain sharper bounds for the former case facilitated by our carefully designed probability distribution $P_2$ over coalition sizes.
To keep the presented results concise, we introduce $\gamma_k := 2(n-1)^2$ for $k=2$ and $\gamma_k := n^{k-1} (n-k+1)^2$ for all $k \geq 3$.
This stems from the aforementioned difference in precision on the lower bound of collected samples.
\begin{theorem} \label{the:Variance}
    For any $K \in \mathcal{N}_k$ the variance of the CII estimate $\hat{I}_K$ returned by SVARM-IQ is bounded by
    \begin{equation*}
        \mathbb{V} \left[ \hat{I}_K \right] \leq \frac{\gamma_k}{\tilde{B}} \sum\limits_{W \subseteq K} \sum\limits_{\ell \in \mathcal{L}_k^{|W|}} \binom{n-k}{\ell}^2 \lambda_{k,\ell}^2 \sigma_{K,\ell,W}^2.
    \end{equation*}
\end{theorem}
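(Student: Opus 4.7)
The plan is to invoke the law of total variance after conditioning on the vector $\mathbf{c}=(c_{K,\ell}^W)$ of per-stratum sample counts produced by SVARM-IQ (equivalently, on the sequence of sizes $a_b$ and intersections $A_b\cap K$ for $b=1,\ldots,\tilde{B}$ together with the warmup assignments). The key fact, by symmetry of the two-stage sampling, is that given this information the residual randomness within each stratum is an iid uniform draw from $\{S\subseteq\mathcal{N}\setminus K : |S|=\ell\}$: once $|A_b|$ and $A_b\cap K=W$ are fixed, $A_b\setminus K$ is uniform among subsets of $\mathcal{N}\setminus K$ of size $\ell=|A_b|-|W|$, and the warmup sample is drawn in exactly this way by construction. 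Consequently each $\hat{I}_{K,\ell}^W$ is conditionally unbiased for $I_{K,\ell}^W$, and by linearity of the aggregation formula (as in \cref{the:Unbiased}) we obtain $\mathbb{E}[\hat{I}_K\mid\mathbf{c}]=I_K$, which makes the between-groups term vanish and leaves $\mathbb{V}[\hat{I}_K]=\mathbb{E}[\mathbb{V}[\hat{I}_K\mid\mathbf{c}]]$.

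The next step exploits the structural observation that, for any fixed $K$, a single sampled coalition $A_b$ updates exactly one stratum estimate (the one with $W=A_b\cap K$, $\ell=|A_b|-|W|$). Thus the samples underlying distinct strata estimates come from disjoint rounds, so conditional on $\mathbf{c}$ the estimates $\hat{I}_{K,\ell}^W$ are independent. Combined with $\mathbb{V}[\hat{I}_{K,\ell}^W\mid\mathbf{c}]=\sigma_{K,\ell,W}^2/c_{K,\ell}^W$ for $\ell+|W|\in\mathcal{S}_{\text{imp}}$ and zero variance for the explicitly computed strata in $\mathcal{I}_{\text{exp}}$, the representation of $\hat{I}_K$ yields
\[
\mathbb{V}[\hat{I}_K\mid\mathbf{c}]=\sum_{W\subseteq K}\sum_{\ell\in\mathcal{L}_k^{|W|}}\binom{n-k}{\ell}^2\lambda_{k,\ell}^2\,\frac{\sigma_{K,\ell,W}^2}{c_{K,\ell}^W}.
\]

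Taking expectations reduces matters to controlling $\mathbb{E}[1/c_{K,\ell}^W]$. The warmup guarantees one sample, and the loop contributes $\mathrm{Bin}(\tilde{B},p_{K,\ell}^W)$ further hits with $p_{K,\ell}^W=\bar{P}_k(\ell+|W|)\binom{n-k}{\ell}/\binom{n}{\ell+|W|}$, so the classical identity $\mathbb{E}[1/(1+\mathrm{Bin}(m,p))]=(1-(1-p)^{m+1})/((m+1)p)\leq 1/(mp)$ gives $\mathbb{E}[1/c_{K,\ell}^W]\leq 1/(\tilde{B}\,p_{K,\ell}^W)$. Substituting back produces the intermediate bound
\[
\mathbb{V}[\hat{I}_K]\leq\frac{1}{\tilde{B}}\sum_{W\subseteq K}\sum_{\ell\in\mathcal{L}_k^{|W|}}\frac{\binom{n-k}{\ell}\binom{n}{\ell+|W|}}{\bar{P}_k(\ell+|W|)}\,\lambda_{k,\ell}^2\sigma_{K,\ell,W}^2.
\]

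The hard part will be the remaining combinatorial step of showing that the coefficient $\binom{n}{\ell+|W|}/(\binom{n-k}{\ell}\,\bar{P}_k(\ell+|W|))$ is uniformly upper bounded by $\gamma_k$, so that factoring out $\gamma_k/\tilde{B}$ leaves the desired $\binom{n-k}{\ell}^2\lambda_{k,\ell}^2\sigma_{K,\ell,W}^2$. For $k\geq 3$, plugging in the uniform distribution $\bar{P}_k(s)=1/(n-3)$ on $\{2,\ldots,n-2\}$ reduces this to bounding $\binom{n}{s}/\binom{n-k}{s-|W|}$ via falling factorials and yields $\gamma_k=n^{k-1}(n-k+1)^2$. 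For $k=2$, the tailored $P_2$ was engineered so that its extra mass near $s=0$ and $s=n$ cancels the blow-up of $\binom{n}{s}/\binom{n-2}{s-|W|}$ at the boundary: a case split on $s\leq(n-1)/2$ versus $s\geq n/2$ makes the factors $s(s-1)$ or $(n-s)(n-s-1)$ appearing in $P_2$ absorb the dominant terms of the binomial ratio and, together with the parity-dependent normalizer $\beta_n$, produces the sharper constant $\gamma_2=2(n-1)^2$. Carrying out this case-by-case algebra carefully is the bulk of the technical effort.
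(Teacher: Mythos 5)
Your proposal follows essentially the same route as the paper's proof: condition on the per-stratum sample counts (the conditional variance decomposes with zero covariances because each sampled coalition updates exactly one stratum per $K$, giving $\sum_{W,\ell}\binom{n-k}{\ell}^2\lambda_{k,\ell}^2\sigma_{K,\ell,W}^2/c_{K,\ell}^W$), then apply the inverse-binomial identity $\mathbb{E}[1/(1+\mathrm{Bin}(m,p))]\le 1/(mp)$ to the warm-up-plus-loop counts, and finish with the combinatorial lower bound on the per-stratum hitting probability that is exactly what defines $\gamma_k$. Two cosmetic caveats: you should write $\bar P_k(s)\ge P_k(s)$ on $\mathcal{S}_{\text{imp}}$ (not equality) before substituting $1/(n-3)$ or the tailored $P_2$, and the case analysis you defer is the real content of the paper's expected-sample-number lemma — for $k=2$ it is a twelve-case verification that $f(\ell,|W|)P_2(\ell+|W|)\ge \tfrac{n}{2(n-1)}$, and for $k\ge 3$ the crude falling-factorial estimate only settles $k\le 4$, with an induction over $k$ required for $k\ge 5$.
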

Note that our efforts in optimizing the analysis for $k=2$ reduced the bound by a factor of $\frac{n}{2}$ in comparison to substituting $k$ with 2 in our bound for the general case.
This is caused by the severe increase in complexity when trying to give a lower bound for the number of samples each stratum receives.
Although our approach allows one to obtain a sharper bound for special cases as $k=3$ or $k=4$ with a similarly dedicated analysis, we abstain from doing so as we prioritize a concise presentation of our results.
\begin{corollary} \label{cor:MSE}
    For any $K \in \mathcal{N}_k$, the MSE of $\hat{I}_K$ returned by SVARM-IQ is bounded by $\mathbb{E}[(\hat{I}_K - I_K)^2] \leq$
    \begin{equation*}
        \frac{\gamma_k}{\tilde{B}} \sum\limits_{W \subseteq K} \sum\limits_{\ell \in \mathcal{L}_k^{|W|}} \binom{n-k}{\ell}^2 \lambda_{k,\ell}^2 \sigma_{K,\ell,W}^2.
    \end{equation*}
\end{corollary}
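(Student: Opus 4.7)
The plan is to observe that this corollary follows almost immediately from the two preceding results, namely Theorem~\ref{the:Unbiased} (unbiasedness) and Theorem~\ref{the:Variance} (variance bound), via the standard bias-variance decomposition. So the proof is essentially a one-line chain of inequalities and no new technical work is required at this stage.

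Concretely, I would first invoke the bias-variance decomposition, which states that for any square-integrable random variable $\hat{I}_K$ and any deterministic target $I_K$,
\begin{equation*}
    \mathbb{E}\bigl[(\hat{I}_K - I_K)^2\bigr] = \bigl(\mathbb{E}[\hat{I}_K] - I_K\bigr)^2 + \mathbb{V}\bigl[\hat{I}_K\bigr].
\end{equation*}
Then I would apply Theorem~\ref{the:Unbiased}, which guarantees $\mathbb{E}[\hat{I}_K] = I_K$, so the squared-bias term vanishes and the MSE equals the variance. Finally, I would invoke Theorem~\ref{the:Variance} to bound $\mathbb{V}[\hat{I}_K]$ by $\frac{\gamma_k}{\tilde{B}} \sum_{W \subseteq K} \sum_{\ell \in \mathcal{L}_k^{|W|}} \binom{n-k}{\ell}^2 \lambda_{k,\ell}^2 \sigma_{K,\ell,W}^2$, which delivers the claimed inequality.

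There is no genuine obstacle here; the corollary is a direct consequence of plugging previously established facts into a standard identity. All of the difficulty has already been absorbed by the proofs of Theorems~\ref{the:Unbiased} and~\ref{the:Variance}, the former requiring a careful analysis of the conditional sampling distribution to show that each $\hat{I}_{K,\ell}^W$ is a true sample-mean estimator of $I_{K,\ell}^W$, and the latter requiring the delicate lower bound on the number of samples collected per stratum (which is exactly where the case split between $k=2$ and $k \geq 3$, and the tailored distribution $P_2$, play their role). Once both are in hand, the MSE bound is essentially a rewriting step.
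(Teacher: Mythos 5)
Your proposal is correct and matches the paper's proof exactly: the paper likewise applies the bias--variance decomposition, kills the squared-bias term via Theorem~\ref{the:Unbiased}, and bounds the remaining variance via Theorem~\ref{the:Variance}. Nothing more is needed.
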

We 
state this result more explicitly for the 
frequently considered interaction type: the SII for pairs of players $i$ and $j$.
In this case our bound boils down to
\begin{equation*}
    \mathbb{E} \left[ \left( \hat{I}_{i,j}^{\text{SII}} - I_{i,j}^{\text{SII}} \right)^2 \right] \leq \frac{2}{\tilde{B}} \sum\limits_{W \subseteq \{i,j\}} \sum\limits_{\ell \in \mathcal{L}_2^{|W|}} \sigma_{i,j,\ell,W}^2.
\end{equation*}
The simplicity achieved by this result supports a straightforward and natural interpretation.
The MSE bound of each SII estimate is inversely proportional to the available budget for the sampling loop and each stratum variance contributes equally to its growth.

We \emph{intentionally abstain} from expressing our bounds in asymptotic notation w.r.t.\ $B$ and $n$ only, as it would not do justice to the motivation behind employing stratification.
The performance of SVARM-IQ is based on lower strata variances (and also strata ranges) compared to the whole population of all coalition values within the powerset of $\mathcal{N}$.
This improvement can not be reflected adequately by the asymptotics in which the variances vanish to constants.

\paragraph{$(\epsilon, \delta)$-Approximation.}
Combining \cref{the:Variance} with Chebyshev's inequality immediately yields a bound on the probability that the absolute error of a fixed $\hat{I}_K$ exceeds some $\varepsilon > 0$ given the budget at~hand.

\begin{figure*}[t]
    \centering
    \begin{minipage}[c]{0.49\textwidth}
    \centering
    \includegraphics[width=0.49\textwidth]{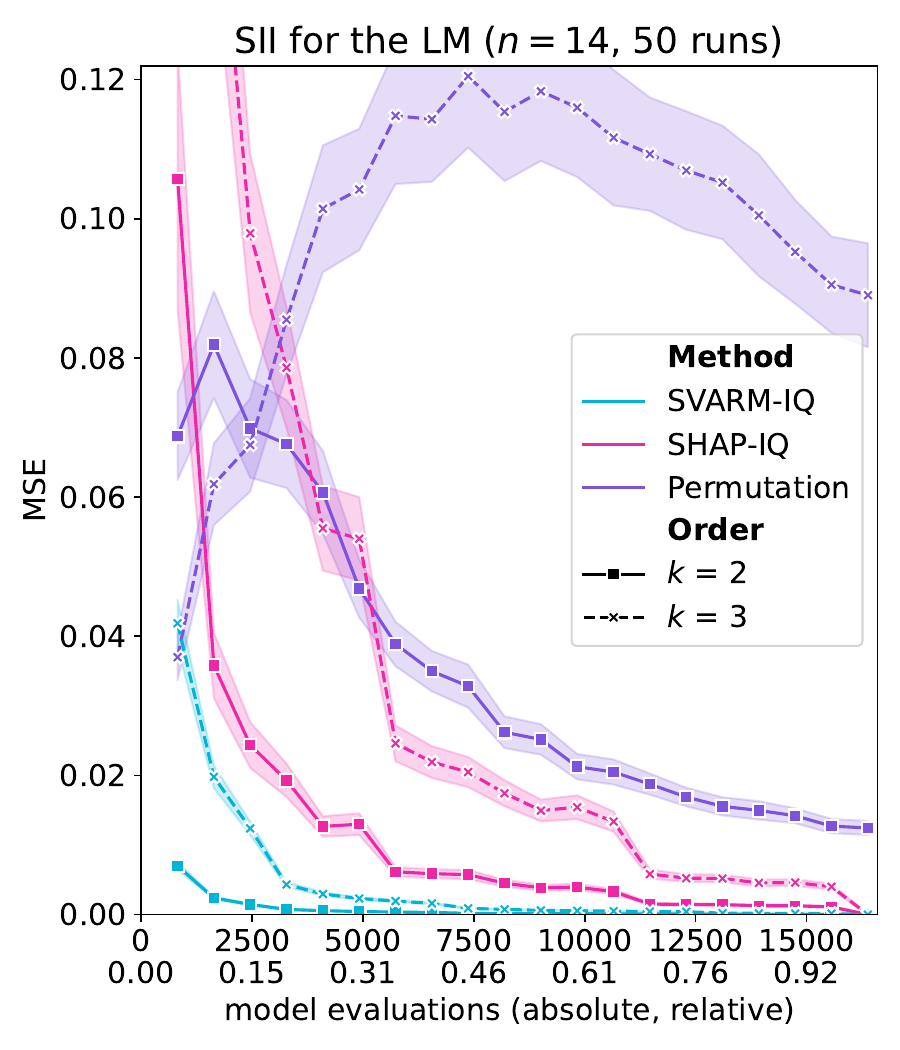}
    \includegraphics[width=0.49\textwidth]{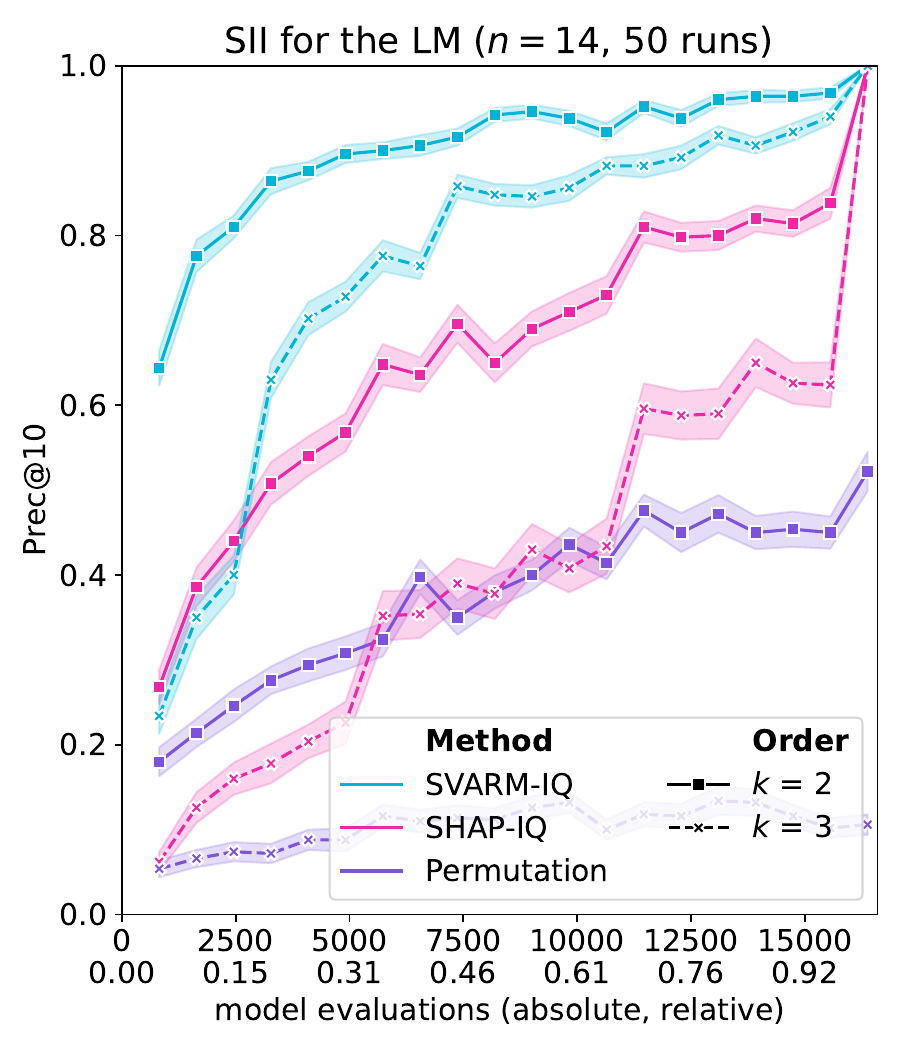}
    (a) MSE and Prec@10 for the LM
    \end{minipage}
    \hfill
    \begin{minipage}[c]{0.49\textwidth}
    \centering
    \includegraphics[width=0.49\textwidth]{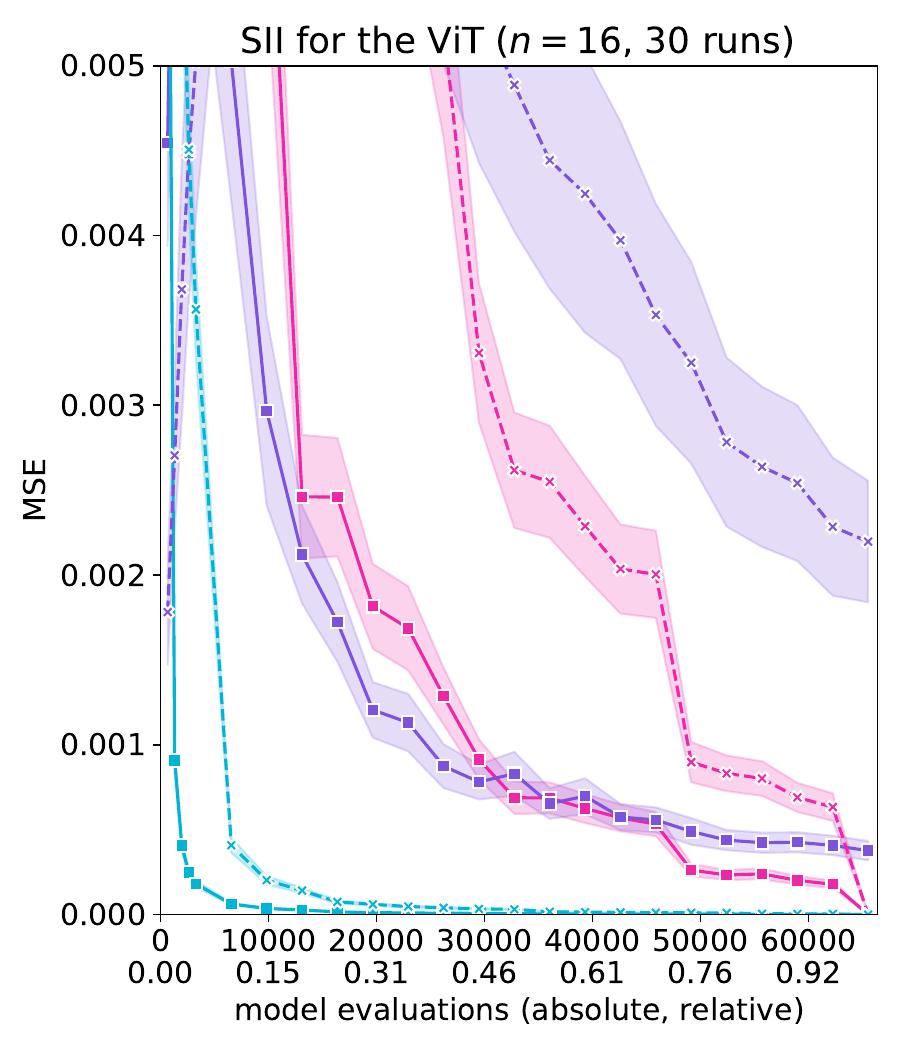}
    \includegraphics[width=0.49\textwidth]{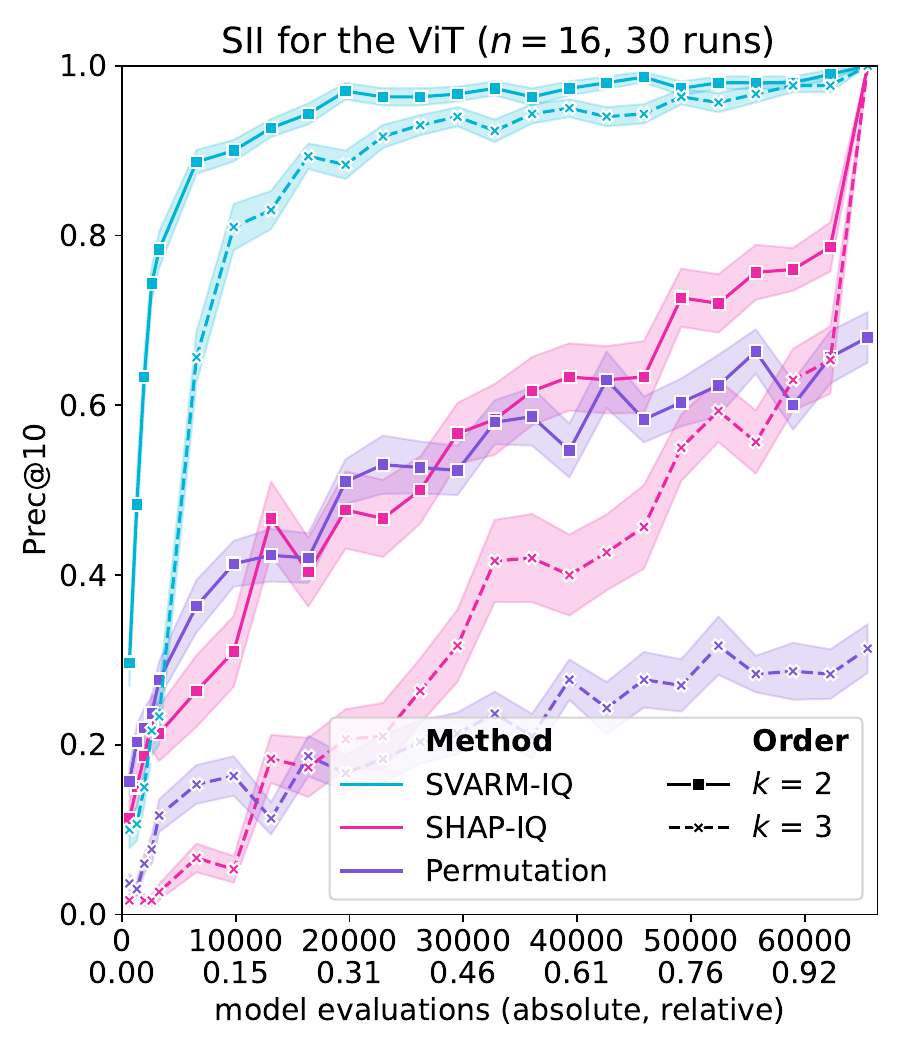}
    (b) MSE and Prec@10 for the ViT
     \end{minipage}
    \caption{Approximation quality of SVARM-IQ (\textcolor{svarmiq}{blue}) compared to SHAP-IQ (\textcolor{shapiq}{pink}) and permutation sampling (\textcolor{baseline}{purple}) baselines 
    for estimating order $k=2,3$ SII on the LM (a; $n=14$) and the ViT (b; $n=16$). Shaded bands represent the standard error over 50, respectively 30 runs.}
    \label{fig_exp_approx}
\end{figure*}

\begin{corollary} \label{cor:Chebyshev}
    For any $K \in \mathcal{N}_k$, the absolute error of $\hat{I}_K$ returned by SVARM-IQ exceeds some fixed $\varepsilon$ with probability of at most $\mathbb{P} (| \hat{I}_K - I_K | \geq \varepsilon) \leq$
    \begin{equation*}
        \frac{\gamma_k}{\varepsilon^2 \tilde{B}} \sum\limits_{W \subseteq K} \sum\limits_{\ell \in \mathcal{L}_k^{|W|}} \binom{n-k}{\ell}^2 \lambda_{k,\ell}^2 \sigma_{K,\ell,W}^2.
    \end{equation*}
\end{corollary}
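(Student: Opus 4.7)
The plan is to observe that this is essentially a one-line consequence of Chebyshev's inequality combined with the two results immediately preceding it, namely \cref{the:Unbiased} and \cref{the:Variance}. No new machinery is required; the work has already been done in proving the variance bound.

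First I would apply \cref{the:Unbiased}, which yields $\mathbb{E}[\hat{I}_K] = I_K$, so that the event $\{|\hat{I}_K - I_K| \geq \varepsilon\}$ coincides with the event $\{|\hat{I}_K - \mathbb{E}[\hat{I}_K]| \geq \varepsilon\}$. Next, Chebyshev's inequality gives
\begin{equation*}
    \mathbb{P}\bigl(|\hat{I}_K - \mathbb{E}[\hat{I}_K]| \geq \varepsilon\bigr) \leq \frac{\mathbb{V}[\hat{I}_K]}{\varepsilon^2}.
\end{equation*}
Substituting the variance bound from \cref{the:Variance} into the right-hand side produces exactly the claimed inequality, with the factor $1/\varepsilon^2$ multiplying the bound already derived there.

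The statement requires that $\hat{I}_K$ has finite variance, which is guaranteed by the assumption $\tilde{B} > 0$ together with the \textsc{\texttt{WarmUp}} procedure: every stratum estimate in $\mathcal{I}_{\text{imp}}$ is based on at least one observation, so $\hat{I}_K$ is a well-defined finite linear combination of sample means of bounded random variables and hence has finite variance to which \cref{the:Variance} applies. There is no real obstacle here; the step from \cref{the:Variance} to the corollary is entirely mechanical, which is consistent with its presentation as a corollary rather than a theorem. The only care needed is verifying that the hypotheses under which \cref{the:Variance} was stated (notably $n \geq 4$, the choice of $P_k$, and $\tilde{B} > 0$) are inherited so that the constant $\gamma_k$ and the index set $\mathcal{L}_k^{|W|}$ have the same meaning in both statements.
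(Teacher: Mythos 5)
Your proposal is correct and matches the paper's proof exactly: the paper likewise applies Chebyshev's inequality (using unbiasedness so that $I_K = \mathbb{E}[\hat{I}_K]$) and then substitutes the variance bound from \cref{the:Variance}. Your extra remarks on finite variance and inherited hypotheses are sound but not needed beyond what the paper states.
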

One can easily rearrange the terms to find the minimum budget required to obtain $\mathbb{P}(|\hat{I}_K - I_K| \leq \varepsilon) \geq 1 - \delta$ for a given $\delta > 0$.
Note that this bound still depends on the unknown strata variances.
Further, we provide another bound in Theorem 4.5 (see Appendix~\ref{app:epsilon_delta}), resulting from a slightly more laborious usage of Hoeffding's inequality which takes the strata ranges into account.
To the best of our knowledge there exists no theoretical analysis for permutation sampling of CIIs.
SHAP-IQ is like wise unbiased, but its theoretical analysis (Theorem 4.3) \cite{Fumagalli.2023} does not provide such detail for fixed $n$ and $k$.

\section{EXPERIMENTS} \label{sec_experiments}

We empirically evaluate SVARM-IQ's approximation quality in different XAI application scenarios and compare it with current state-of-the-art baselines.

\paragraph{Baselines.} In the case of estimating SII and STI scores, we compare SVARM-IQ to SHAP-IQ \citep{Fumagalli.2023} and permutation sampling \citep{Sundararajan_Dhamdhere_Agarwal_2020,Tsai_Yeh_Ravikumar_2022}.
For FSI, we compare against the kernel-based regression approach \citep{Tsai_Yeh_Ravikumar_2022} instead of permutation sampling.

\begin{table}[h]
    \caption{Overview of the XAI tasks and models used}\label{tab_setup}
    \vspace{0.5em}
    \adjustbox{max width=\columnwidth}{%
    \begin{tabular}{@{}ccccc@{}}
    \toprule
    \textbf{Task} & \textbf{Model ID} & \textbf{Removal Strategy} & $n$ & $\mathcal{Y}$ \\ \midrule
    LM & DistilBert & Token Removal & 14 & $[-1, 1]$  \\[0.8em]
    ViT & ViT-32-384 &Token Removal & 9,16 & $[0, 1]$\\[0.8em]
    CNN & ResNet18 & \begin{tabular}{c}Superpixel\\Marginalization\end{tabular} & 14 & $[0, 1]$ \\
    \bottomrule
    \end{tabular}}
\end{table}

\paragraph{Explanation Tasks.} Similar to \cite{Fumagalli.2023} and \cite{Tsai_Yeh_Ravikumar_2022}, we evaluate the approximation algorithms based on different real-world ML models and classical XAI scenarios (cf.\ \cref{tab_setup}).
First, we compute interaction scores to explain a sentiment analysis \emph{language model} (LM), which is a fine-tuned version of \texttt{DistilBert} \citep{Sanh.2019} on the IMDB \citep{Maas.2011} dataset.
Second, we investigate two types of image classification models, which were pre-trained on ImageNet \citep{ImageNet}.
We explain a \emph{vision transformer} (ViT), \citep{DBLP:conf/iclr/DosovitskiyB0WZ21}, and a \texttt{ResNet18} \emph{convolutional neural network} (CNN) \citep{DBLP:conf/cvpr/HeZRS16}.
The ViT operates on patches of 32 times 32 pixels and is abbreviated with \texttt{ViT-32-384}.
The \texttt{torch} versions of the LM, ViT, and the CNN are retrieved from \cite{Wolf_Transformers_State-of-the-Art_Natural_2020} and \cite{torch.2017}. 
For further descriptions on the models and feature removal strategies aligned with \cite{Covert_Lundberg_Lee_2021}, we refer to Appendix~\ref{app:models_datasets}.

\paragraph{Measuring Performance.}
To assess the performance of the different approximation algorithms, we measure the mean squared error averaged over all $K \in \mathcal{N}_k$ (MSE; lower is better) and the precision at ten (Prec@10; higher is better) of the estimated interaction scores compared to pre-computed ground-truth values (GTV).
Prec@10 measures the ratio of correctly identifying the ten highest (absolute) interaction values.
The GTV for each run are computed exhaustively with $2^n$ queries to the black box models.
All results are averaged over multiple independent runs.

\paragraph{Approximation Quality for SII.}
We compare SVARM-IQ against permutation sampling and SHAP-IQ at the LM and ViT explanation tasks for approximating all SII values of order $k=2$ and $k=3$ in \cref{fig_exp_approx}.
Across both considered measures, MSE and Prec@10, SVARM-IQ demonstrates superior approximation quality.
Noteworthy is SVARM-IQ's steep increase in approximation quality in the earlier budget range allowing applications with limited computational resources.
Based on our theoretical findings, we assume the stratification by size in combination with the splitting of discrete derivatives to be the cause for the observed behavior.
Most plausibly coalitions of the same size and sharing a predetermined set, as encompassed by each stratum $I_{K,\ell}^W$, vary less in their worth than the whole population of coalitions.
Consequently, the associated variance $\sigma_{K,\ell,W}^2$ is considerably lower, leading to faster convergence of the estimate $\hat{I}_{K,\ell}^W$.

\begin{figure}[t]
    \centering
    \begin{minipage}[c]{0.49\textwidth}
    \includegraphics[width=0.49\textwidth]{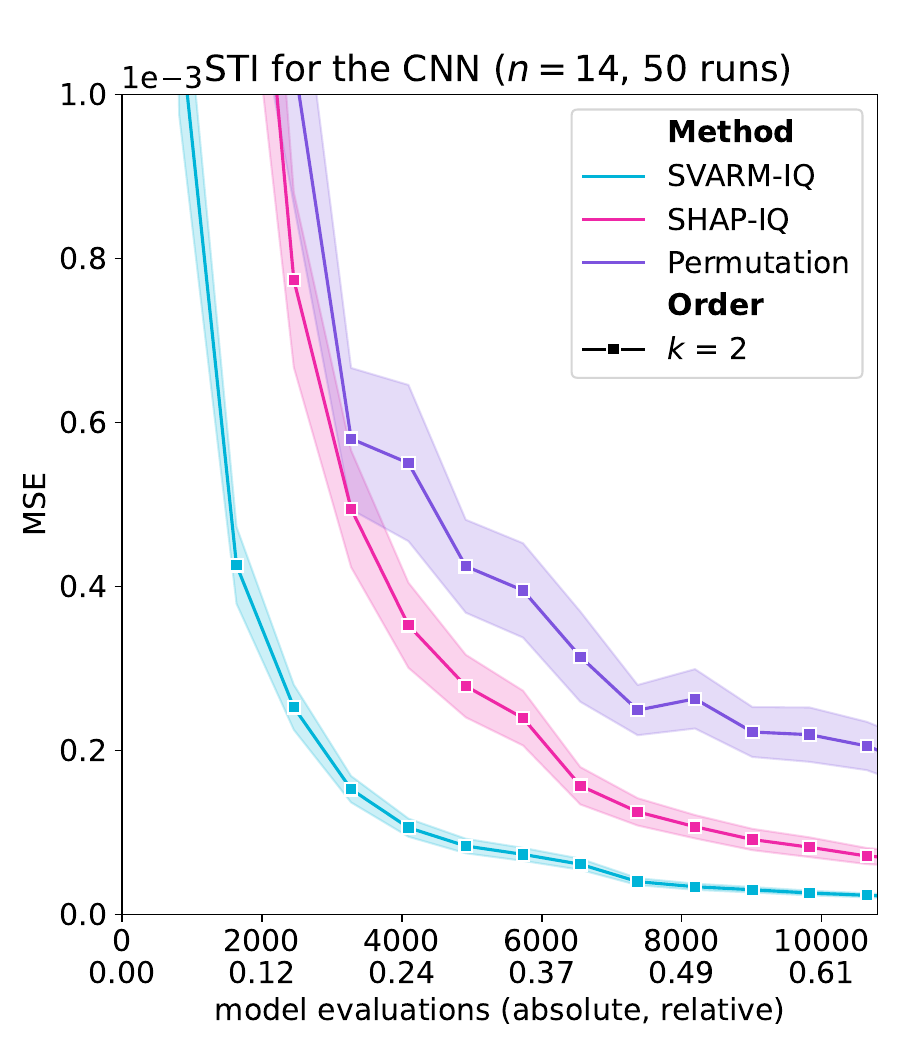}
    \includegraphics[width=0.49\textwidth]{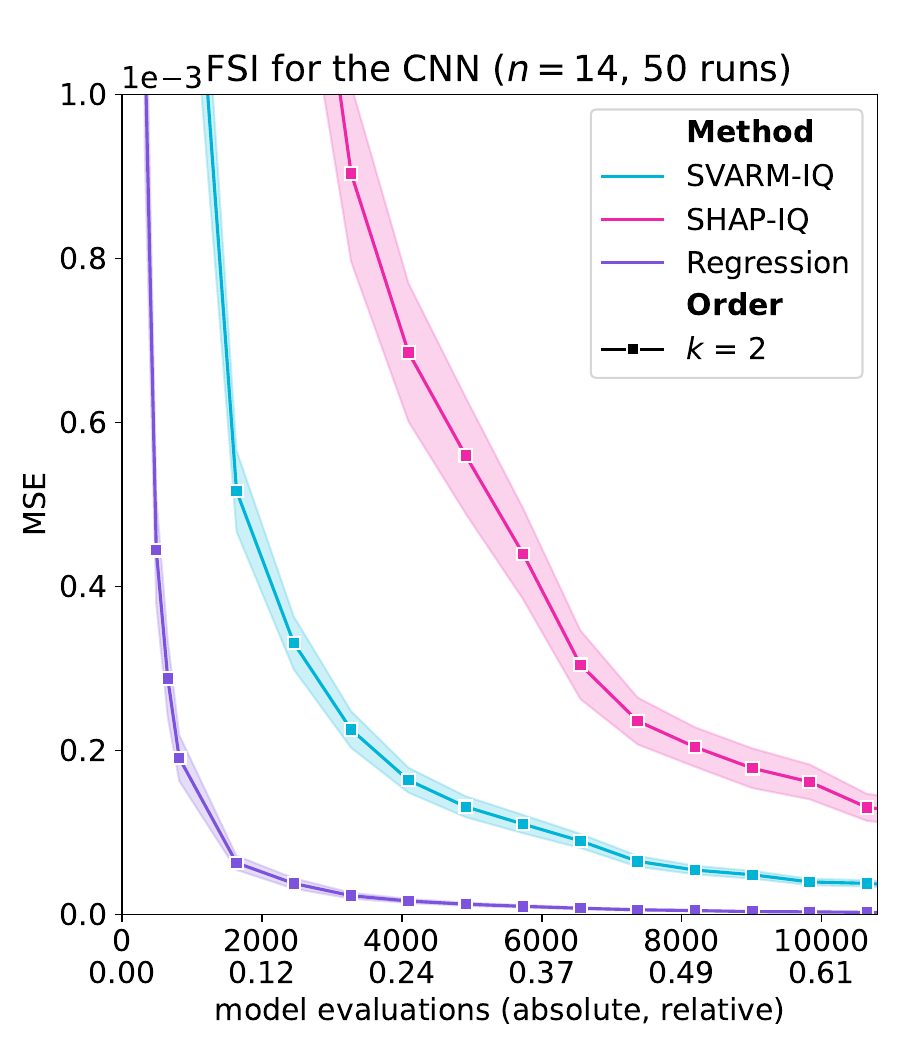}
    \end{minipage}
    \caption{
    Comparison of SVARM-IQ and 
    baselines for STI (left) and FSI (right) on the CNN. Shaded bands represent the standard error over 50 runs.}
    \label{fig_exp_approx_cii}
\end{figure}

\begin{figure*}[t]
    \includegraphics[width=0.29\textwidth]{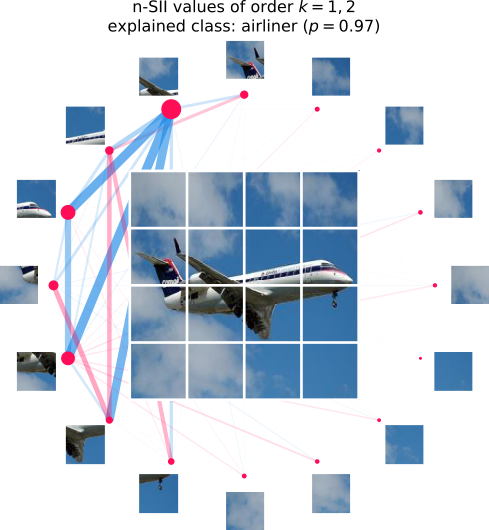}
    \hfill
     \includegraphics[width=0.29\textwidth]{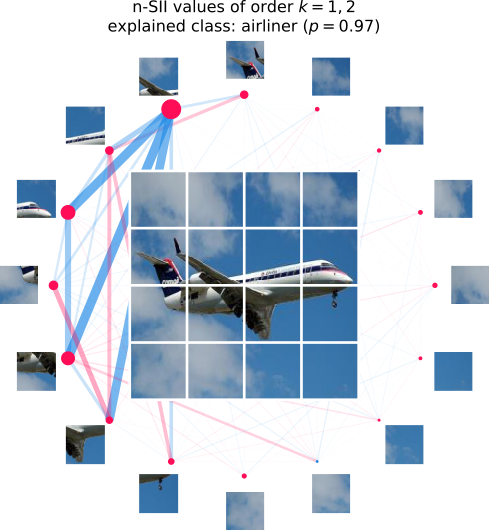}
     \hfill
     \includegraphics[width=0.32\textwidth]{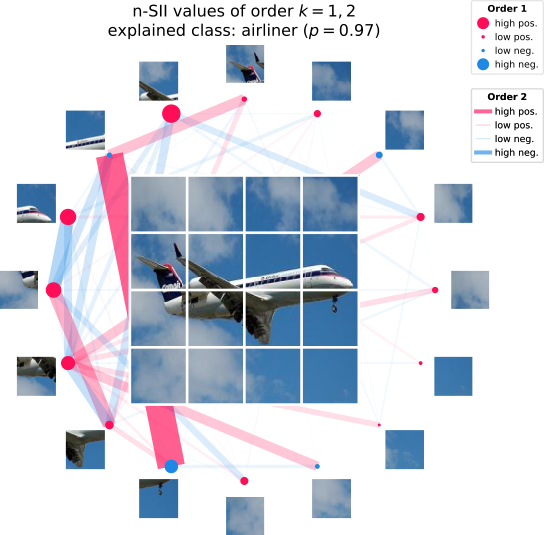}
    \caption{Comparison of ground-truth n-SII values of order $k=1$ and $k=2$ for the predicted class probability of a ViT for an ImageNet picture sliced into a grid of $n=16$ patches (left) against n-SII values estimated by \mbox{SVARM-IQ} (center) and permutation sampling (right). The exact computation requires 65,536 model~evaluations while the budget of both approximators is limited by 5000, making up only 7.6\% of the space to sample.}
    \label{fig_vit_comparison}
\end{figure*}

\paragraph{Example Use-Case of n-SII Values.}
Precise estimates allow to construct high-quality n-SII scores as proposed by \cite{pmlr-v206-bordt23a}.
\cref{fig_intro_example} illustrates how n-SII scores can be used to \mbox{explain} the ViT with 16 patches for an image of two correctly classified Labradors.
All individual patches receive positive attribution scores ($k=1$) of varying degree, leading practitioners to assume that patches with \mbox{similar} attribution are of equal importance.
However, enhancing the explanation with second order interactions ($k=2$), reveals how the interplay between patches containing complementing facial parts, like the eyes and the mouth, strongly influences the model's prediction towards the correct class label.
On the contrary, tiles depicting the same parts, e.g.\ those containing eyes, show negative interaction, allowing to conclude that the addition of one in the presence of the other is on average far less impactful than their individual contribution.
Solely observing the monotony of the~individual scores would have arguably led to overlook this insight.
We describe this further in Appendix~\ref{app:empirical_results:vit_examples}.

\paragraph{Estimating FSI and STI.}
Further, we compute different CIIs of a fixed order with SVARM-IQ and consistently achieve high approximation quality.
We summarize the results on the CNN in \cref{fig_exp_approx_cii}.
For STI, SVARM-IQ, again, outperforms both sampling-based baselines.
The kernel-based regression estimator, which is only applicable to the FSI index, yields lower approximation errors than SVARM-IQ.
Similar to SV estimation through KernelSHAP \citep{Lundberg.2017}, this highlights the expressive power of the least-squares representation available for FSI.

\paragraph{Instance-wise comparison.}
Lastly, we compare in \cref{fig_vit_comparison} SVARM-IQ's n-SII estimates of order $k=1$ and $k=2$ against those of permutation sampling and the ground truth for single a single instance.
The ground truth interaction is computed upfront for the predicted class probability of the ViT for a specified image sliced into a grid of 16 patches, and both approximation algorithms are executed for a single run with a budget of 5000 model evaluations, thus consuming only 7.6\% of the budget necessary to compute GTV exactly.
The estimates obtained by \mbox{SVARM-IQ} show barely any visible difference to the human eye.
In fact, SVARM-IQ's approximation replicates the ground truth with only a fraction of the number of model evaluations that are necessary for its exact computation.
Hence, it significantly lowers the computational burden for precise explanations.
On the contrary, permutation sampling yields estimated
importance and interaction scores which are
afflicted with evident imprecision.
This lack in~approximation quality has the potential to cause misguiding explanations.
More comparisons are shown in Appendix~\ref{app:empirical_results:vit_examples}.

\section{CONCLUSION}

We proposed SVARM-IQ, a new sampling-based approximation algorithm for interaction indices based on a stratified representation to maximize budget efficiency.
SVARM-IQ is capable of approximating all types and orders of cardinal interactions simultaneously, including the popular SII.
Consequently, as the special case of SVs is also entailed, this facilitates the approximation of feature importance and interaction simultaneously, thus offering an enriched explanation.
Besides proving theoretical results, we empirically demonstrated SVARM-IQ's advantage against current state-of-the-art baselines.
Its model-agnostic nature and domain-independence allow practitioners to obtain high-quality interaction scores for various entity types such as features or data points. 

\paragraph{Limitations and Future Work.}
Due to SVARM-IQ's stratification, the number of maintained strata estimates grows exponentially with the interaction order $k$.
This space complexity poses a challenge for large interaction orders.
As a pragmatic remedy, future work may consider the approximation of interaction scores for a smaller number of sets or a coarser stratification by size.
Lastly, it still remains unclear whether the performance of the kernel-based regression estimator available for FSI and the SV can be transferred to other types of CIIs like the SII or STI indices.

\subsection*{Acknowledgements}
This reserach was supported supported by the research training group Dataninja (Trustworthy AI for Seamless Problem Solving: Next Generation Intelligence Joins Robust Data Analysis) funded by the German federal state of North Rhine-Westphalia.
Maximilan Muschalik and Fabian Fumagalli gratefully acknowledge funding by the Deutsche Forschungsgemeinschaft (DFG, German Research Foundation): TRR 318/1 2021 – 438445824.

\bibliography{references}

\appendix
\onecolumn
\paragraph{Organization of the Appendix.}

Within the Appendix, we provide not only proofs for our theoretical analysis in \cref{app:Proofs} and further empirical results in \cref{app:empirical_results}, but also give a table of frequently used symbols throughout the paper in\cref{app:symbols}, provide Shapley-based interaction measures and other indices falling under the notion of cardinal interaction indices explicitly in \cref{app:CII}, provide further and more detailed pseudocode of our algorithmic approach SVARM-IQ in \cref{app:Pseudocode}, showcase our method at the popular special case of the Shapley Interaction index for pairs \cref{app:Pairs}, and describe the used models, datasets, and explanation tasks within our experimental setup in \cref{app:models_datasets}. \cref{app:hardware_details} contains the hardware details.

\startcontents[sections]
\printcontents[sections]{l}{1}{\setcounter{tocdepth}{2}}

\clearpage
\section{LIST OF SYMBOLS} \label{app:symbols}

\begin{table}[h]
\centering
	\begin{tabular}{l|l}	
		\hline
		\multicolumn{2}{c}{\textbf{Problem setting}} \\
		\hline
        $\mathcal{N}$ & Set of players \\
        $\mathcal{N}_k$ & Set of all subsets of the player set with cardinality $k$ \\
        $n$ & Number of players \\
        $\nu$ & Value function \\
        $B$ & Budget, number of allowed evaluations of $\nu$ \\
        $k$ & Considered interaction order \\
        $K$ & Interaction set \\
        $\Delta_K(S)$ & Discrete derivative of players $K$ at coalition $S$ \\
        $I_K$ & Cardinal interaction index of players $K$ \\
        $\hat{I}_K$ & Estimated cardinal interaction index of players $K$ \\
        $\mathcal{L}_k$ & Set of all coalition sizes at which interactions of order $k$ can occur \\
        $\lambda_{k,\ell}$ & Weight of each coalition of size $\ell$ for interaction order $k$ \\
        \hline
		\multicolumn{2}{c}{\textbf{SVARM-IQ}} \\
		\hline
        $I_{K,\ell}^W$ & Average worth of coalitions $S \cup W$ with $S \subseteq \mathcal{N} \setminus K$, $|S| = \ell$ and $W \subseteq K$ \\
        $\hat{I}_{K,\ell}^W$ & Estimate of $I_{K,\ell}^W$ \\
        $c_{K,\ell}^W$ & Number of samples observed for stratum $I_{K,\ell}^W$ \\
        $\mathcal{S}_{\text{exp}}$ & Sizes for which all coalitions are evaluated for explicit stratum computation \\
        $\mathcal{S}_{\text{imp}}$ & Sizes for which coalitions are sampled for implicit stratum estimation \\
        $\mathcal{I}_{\text{exp}}$ & Set of all explicitly computed strata \\
        $\mathcal{I}_{\text{imp}}$ & Set of all implicitly extimated strata \\
        $\mathcal{L}_k^w$ & Set of implicit coalitions sizes $\ell$ depending on $W$ of $I_{K,\ell}^W$ \\
        $P_k$ & Probability distribution over sizes $\{2,\ldots,n-2\}$ \\
        $\bar{P}_k$ & Altered probability distribution over sizes $\{s_{\text{exp}}+1,\ldots,n-s_{\text{exp}}-1\}$ \\
        $\bar{B}$ & Budget left for the sampling loop after \textsc{\texttt{ComputeBorders}} \\
        $\tilde{B}$ & Budget left for the sampling loop after \textsc{\texttt{ComputeBorders}} and \textsc{\texttt{WarmUp}} \\
        $\sigma_{K,W,\ell}^2$ & Variance of coalition worths in stratum $I_{K,\ell}^W$ \\
        $r_{K,\ell,W}$ & Range of coalition worths in stratum $I_{K,\ell}^W$ \\
        $\bar{m}_{K,\ell}^W$ & Number of samples with which $\hat{I}_{K,\ell}^W$ is updated after the warm-up \\
        $m_{K,\ell}^W$ & Total number of samples with which $\hat{I}_{K,\ell}^W$ is updated \\
        $A_{K,\ell,m}^W$ & $m$-th coalition used to update $\hat{I}_{K,\ell}^W$ \\
        \hline
	\end{tabular}
 \caption{List of symbols used frequently throughout the paper.} \label{tab:Symbols}
\end{table}

\clearpage
\section{CARDINAL INTERACTION INDICES AND THEIR WEIGHTS} \label{app:CII}

All Shapley-based interaction indices and a few other game-theoretic measures of interaction can be captured under the notion of cardinal interaction indices (CII).
We have stated this in Section~2 without presenting the aforementioned indices explicitly.
We catch up on this by providing the weights $(\lambda_{k,\ell})_{\ell \in \mathcal{L}_k}$ of each index that is contained within the CII
\begin{equation*}
    I_K = \sum\limits_{S \subseteq \mathcal{N} \setminus K} \lambda_{k,|S|} \cdot \Delta_K(S)
\end{equation*}

with discrete derivative
\begin{equation*}
    \Delta_K(S) = \sum\limits_{W \subseteq K} (-1)^{|K| - |W|} \cdot \nu(S \cup W).
\end{equation*}

\begin{itemize}
    \item Shapley Interaction index (SII) \citep{Grabisch_Roubens_1999}:
    \begin{equation*}
        \lambda_{k,\ell}^{\text{SII}} = \frac{1}{(n-k+1) \binom{n-k}{\ell}}
    \end{equation*}
    \item Shapley-Taylor Interaction index (STI) \citep{Sundararajan_Dhamdhere_Agarwal_2020}:
    \begin{equation*}
        \lambda_{k,\ell}^{\text{STI}} = \frac{k}{n \binom{n-1}{\ell}}
    \end{equation*}
    \item Faithful-Shapley Interaction index (FSI) \citep{Tsai_Yeh_Ravikumar_2022}:
    \begin{equation*}
        \lambda_{k,\ell}^{\text{FSI}} = \frac{(2k-1)!}{((k-1)!)^2} \cdot \frac{(n-\ell-1)!(\ell+k-1)!}{(n+k-1)!}
    \end{equation*}
    \item Banzhaf Interaction index (BII) \citep{Grabisch_Roubens_1999}:
    \begin{equation*}
        \lambda_{k,\ell}^{\text{BII}} = \frac{1}{2^{n-k}}
    \end{equation*}
\end{itemize}

For $k=1$, the SII, STI, and FSI are identical and equal to the Shapley value:

\begin{equation*}
    \phi_i = \sum\limits_{S \subseteq \mathcal{N} \setminus \{i\}} \frac{1}{n \binom{n-1}{\ell}} \cdot \left[ \nu(S \cup \{i\}) - \nu(S) \right],
\end{equation*}

which is why these are also called Shapley-based interactions.
For a comprehensive overview of the axiomatic background justifying these indices, we refer to \citep{Tsai_Yeh_Ravikumar_2022} and \citep{Fumagalli.2023}.

\paragraph{n-SII Values.}

The n-Shapley Values (n-SII) $\Phi^n$ were introduced by \citet{pmlr-v206-bordt23a} as an extension of the Shapley interactions \cite{Lundberg.2020} to higher orders.
The n-SII constructs an interaction index for interactions up to size $n$, which is efficient, i.e.\ the sum of all interactions equals the full model $\nu(\mathcal N)$.
The n-SII are based on SII, $I^{\text{SII}}$, and aggregate SII up to order n.
The highest interaction order of n-SII is always equal to SII.
For every lower order, the n-SII values are constructed recursively, as

\begin{align*}             
    \Phi^n_K := \begin{cases}
            I_K^{\text{SII}}(S) &\text{if } \vert K \vert = n
            \\
            \Phi^{n -1}_K + B_{n-\vert K \vert} \sum_{\substack{\tilde K \subseteq \mathcal N \setminus K \\ \vert K \vert + \vert \tilde K \vert = n}} I^{\text{SII}}_{K \cup \tilde K} &\text{if } \vert K \vert < n,
        \end{cases}
    \end{align*}
where the initial values are the SV $\Phi^1 \equiv \phi$ and $B_n$ are the Bernoulli numbers.
It was shown \cite{pmlr-v206-bordt23a} that n-SII yield an efficient index, i.e.
\begin{equation*}
    \sum_{\substack{K \subseteq \mathcal N \\ \vert K \vert \leq n}} \Phi^n_K = \nu(\mathcal N).
\end{equation*}

\clearpage
\section{ADDITIONAL PSEUDOCODE} \label{app:Pseudocode}

\subsection{Computing Border Sizes}
We have only sketched the \textsc{\texttt{ComputeBorders}} procedure and will provide it now in full detail (see \cref{alg:ComputeBorders}).
Its purpose is to determine the coalition sizes $\mathcal{S}_{\text{exp}}$ for which all coalitions are to be evaluated such that the corresponding strata are computed explicitly.
We construct this set symmetrically, in the sense that a size $s_{\text{exp}}$ is determined such that all $\mathcal{S}_{\text{exp}} = \{0,\ldots,s_{\text{exp}},n-s_{\text{exp}},\ldots,n\}$, in other words: the smallest and the largest $s_{\text{exp}}$ many set sizes are included.
Hence, we assume for simplicity that the initial probability distribution over sizes $P_k$ is symmetric, i.e., $P_k(s) = P_k(n-s)$, although it does not pose a challenge to extend this to any $P_k$ of arbitrary shape.

We start with $s_{\text{exp}} = 1$ and adjust the remaining budget $\bar{B}$ and the altered probability distribution over sizes $\bar{P}_k$.
For each size $s$ being included into $\mathcal{S}_{\text{exp}}$, we set its probability mass to zero and upscale the remaining entries, effectively transferring probability mass from the border sizes to the middle.
According to this procedure, \textsc{\texttt{ComputeBorders}} constructs $\bar{P}_k$ with
\begin{equation*}
    \bar{P}_k(s) = \frac{P_k(s)}{\sum\limits_{s' \in \mathcal{S}_{\text{imp}}} P_k(s')} \text{ for all } s \in \mathcal{S}_{\text{imp}} \hspace{0.3cm} \text{ and } \hspace{0.3cm} \bar{P}_k(s) = 0 \text{ for all } s \in \mathcal{S}_{\text{exp}}.
\end{equation*}

\begin{algorithm}[H]
\caption{\textsc{\texttt{ComputeBorders}}}
\label{alg:ComputeBorders}
\begin{algorithmic}[1]
    \STATE $s_{\text{exp}} \leftarrow 1$
    \STATE $\bar{B} \leftarrow B - 2n - 2$
    \STATE $\bar{P}_k(0),\bar{P}_k(1),\bar{P}_k(n-1),\bar{P}_k(n) \leftarrow 0$
    \STATE $\bar{P}_k(s) \leftarrow \frac{P_k(s)}{1 - P_k(0) - P_k(1) - P_k(n-1) - P_k(n)}$ for all $s \in \{2,\ldots,n-2\}$
    \WHILE{$s_{\text{exp}} + 1 \leq \frac{n}{2}$ \textbf{and} $\binom{n}{s_{\text{exp}}+1} \leq \bar{P}_k(s_{\text{exp}} + 1) \cdot \bar{B}$}
         \STATE $s_{\text{exp}} \leftarrow s_{\text{exp}} + 1$
        \IF{$s_{\text{exp}} = \frac{n}{2}$}
            \STATE $\bar{B} \leftarrow \bar{B} - \binom{n}{s_{\text{exp}}}$
            \STATE $\bar{P}_k \leftarrow \text{Unif}(0,n)$
        \ELSIF{$s_{\text{exp}} = \frac{n-1}{2}$}
            \STATE $\bar{B} \leftarrow \bar{B} - 2\binom{n}{s_{\text{exp}}}$
            \STATE $\bar{P}_k \leftarrow \text{Unif}(0,n)$
        \ELSE
            \STATE $\bar{B} \leftarrow \bar{B} - 2\binom{n}{s_{\text{exp}}}$
            \STATE $\bar{P}_k(s_{\text{exp}}) \leftarrow 0$
            \STATE $\bar{P}_k(n - s_{\text{exp}}) \leftarrow 0$
            \STATE $\bar{P}_k(s) \leftarrow \frac{\bar{P}_k(s)}{1 - 2\bar{P}(s_{\text{exp}})}$ for all $s \in \{s_{\exp}+1,\ldots,n-s_{\exp}-1\}$
        \ENDIF
    \ENDWHILE
    \STATE $\mathcal{S}_{\text{exp}} \leftarrow \{0,\ldots,s_{\text{exp}},n-s_{\text{exp}},\ldots,n\}$
    \STATE $\mathcal{S}_{\text{imp}} \leftarrow \{s_{\text{exp}}+1,\ldots,n-s_{\text{exp}}-1\}$
    \FOR{$s \in \mathcal{S}_{\text{exp}}$} 
        \FOR{$A \in \mathcal{N}_s$}       
            \STATE $v \leftarrow \nu(A)$
            \FOR{$K \in \mathcal{N}_k$}
                \STATE $W \leftarrow A \cap K$
                \STATE $\ell \leftarrow s - |W|$
                \STATE $\hat{I}_{K,\ell}^W \leftarrow \hat{I}_{K,\ell}^W + \frac{v}{\binom{n-k}{\ell}}$
            \ENDFOR
        \ENDFOR
    \ENDFOR
    \STATE {\bfseries Output:} $\mathcal{S}_{\text{exp}}, \mathcal{S}_{\text{imp}}$
\end{algorithmic}
\end{algorithm}

\textsc{\texttt{ComputeBorders}} iterates over sizes in increasing manner, checking whether the reminaing budget $\bar{B}$ is large enough such that the number of coalitions of the next size $s_{\text{exp}} + 1$ considered is covered by the expected number of drawn coalitions with that size.
As long as this holds true, $s_{\text{exp}}$ is incremented and $\bar{B}$ as well as $\bar{P}_k$ are adjusted.
Note that thus not only $s_{\text{exp}} + 1$ is added to $\mathcal{S}_{\text{exp}}$ but also $n-s_{\text{exp}}-1$.
We distinguish between different cases, depending on whether the incremented $s_{\text{exp}}$ has reached the middle of the range of coalition sizes.
In case of even $n$ this is $\frac{n}{2}$, otherwise $\frac{n-1}{2}$.
As soon as $s_{\text{exp}}$ reaches that number, $\bar{P}_k(s)$ becomes irrelevant because then all coalitions of all sizes are being evaluated, leaving no strata to be estimated.
In this case we simply set $\bar{P}_k$ to the uniform distribution such that it is well-defined.

After the computation of $\mathcal{S}_{\text{exp}}$, we evaluate all coalitions with cardinality $s \in \mathcal{S}_{\text{exp}}$.
For each such coalition $A$ we update the estimate $\hat{I}_{K,\ell}^W$ with $W = A \cap K$ and $\ell = s - |W|$ according to our update mechanism.
Since each stratum contains only coalitions of the same size, this leads to exactly computed strata representing the average of the contained coalitions' worths.

\subsection{Warm-up}
The \textsc{\texttt{WarmUp}} (see \cref{alg:WarmUp}) procedure guarantees that each stratum estimate $\hat{I}_{K,\ell}^W$ with $I_{K,\ell}^W \in \mathcal{I}_{\text{imp}}$ is initialized with the worth of one sampled coalition.
This is a natural requirement to facilitate our theoretical analysis in \cref{app:Proofs}.
We achieve this algorithmically by iterating over all combinations of $K \in \mathcal{N}_k$, $W \subseteq K$, and $\ell \in \mathcal{L}_k^{|W|}$.
Each such combination specifies a stratum that is implicitly to be estimated.
\textsc{\texttt{WarmUp}} draws for each stratum a coalition $A$ uniformly at random from the set of all coalitions of size $\ell$ and not containing any player of $K$.
The estimate $\hat{I}_{K,\ell}^W$ is then set to the evaluated worth $\nu(A \cup W)$ and the counter of observed samples is set to one.
The spent budget is:
\begin{align*}
    |\mathcal{I}_{\text{imp}}| = & \ \binom{n}{k} \cdot \sum\limits_{w=0}^k \binom{k}{w} \vert \mathcal{L}_k^{|w|} \vert \\
    = & \ \binom{n}{k} \cdot \sum\limits_{w=0}^k \binom{k}{w} \vert \{\max\{0,s_{\text{exp}}+1-w\},\ldots,\min\{n-k,n-s_{\text{exp}}-1-w\}\}\vert \\
    = & \ \binom{n}{k} \cdot \sum\limits_{w=0}^k \binom{k}{w} \left( n - \max\{k,s_{\text{exp}}+1+w\} - \max\{0,s_{\text{exp}}+1-w\} +1 \right).
\end{align*}

\begin{algorithm}[H]
\caption{\textsc{\texttt{WarmUp}}}
\label{alg:WarmUp}
\begin{algorithmic}[1]
    \FOR{$K \subseteq \mathcal{N}_k$}
        \FOR{$W \subseteq K$}
            \FOR{$\ell \in \mathcal{L}_k^{|W|}$}
                \STATE Draw $A$ from $\{S \subseteq \mathcal{N} \setminus K \mid |S| = \ell\}$ uniformly at random
                \STATE $\hat{I}_{K,\ell}^W \leftarrow \nu(A \cup W)$
                \STATE $c_{K,\ell}^W \leftarrow 1$
            \ENDFOR
        \ENDFOR
    \ENDFOR
\end{algorithmic}
\end{algorithm}

\subsection{Updating Strata Mean Estimates}

In order to update the mean estimates $\hat{I}_{K,\ell}^W$ of the estimated strata incrementally with a single pass, thus not requiring to iterate over all previous samples, we use \textsc{\texttt{UpdateMean}} (see \cref{alg:UpdateMean}).
Besides the old estimate and the newly observed coalition worth $v_b$, this requires the number of observations made so far given by $c_{K,\ell}^W$.

\begin{algorithm}[H]
\caption{\textsc{\texttt{UpdateMean}}}
\label{alg:UpdateMean}
\begin{algorithmic}[1]
    \STATE {\bfseries Input:} $\hat{I}_{K,\ell}^W$, $c_{K,\ell}^W$, $v_b$
    \STATE {\bfseries Output:} $\frac{\hat{I}_{K,\ell}^W \cdot c_{K,\ell}^W + v_b}{c_{K,\ell}^W + 1}$
\end{algorithmic}
\end{algorithm}

\clearpage
\section{THE SPECIAL CASE OF PAIRWISE SHAPLEY-INTERACTIONS} \label{app:Pairs}

We stated our approximation algorithm SVARM-IQ for all CII and any order $k$.
Since the Shapley Interaction index (SII) for pairs, i.e., $k=2$, is the most popular among them, we provide a description of SVARM-IQ and the pseudocode (see \cref{alg:SVARM-IQPairs}) for that specific case, leading to a simpler presentation of our approach.

The SII of a pair of players $\{i,j\} \in \mathcal{N}_2$ is given by
\begin{equation*}
    I_{i,j}^{\text{SII}} = \sum\limits_{S \subseteq \mathcal{N} \setminus \{i,j\}} \frac{1}{(n-1)\binom{n-2}{|S|}} \left[ \nu(S \cup \{i,j\}) - \nu(S \cup \{i\}) - \nu(S \cup \{j\}) + \nu(S) \right].
\end{equation*}

Now, our approach stratifies the discrete derivatives $\Delta_{i,j}(S)$ by size and splits them into multiple strata, yielding the following representation of the SII:
\begin{equation*}
    I_{i,j}^{\text{SII}} = \frac{1}{n-1} \sum\limits_{\ell=0}^{n-2} I_{i,j,\ell}^{\{i,j\}} - I_{i,j,\ell}^{\{i\}} - I_{i,j,\ell}^{\{i\}} + I_{i,j,\ell}^{\emptyset},
\end{equation*}

with strata terms for all $W \subseteq \{i,j\}$ and $\ell \in \mathcal{L}_2 := \{0,\ldots,n-2\}$:
\begin{equation*}
     I_{i,j,\ell}^W := \frac{1}{\binom{n-2}{\ell}} \sum\limits_{\substack{S \subseteq \mathcal{N} \setminus \{i,j\} \\ |S| = \ell}} \nu(S \cup W).
\end{equation*}

We keep a stratum estimate $\hat{I}_{i,j,\ell}^W$ for each pair $i$ and $j$, size $\ell \in \mathcal{L}_2$, and subset $W \subseteq \{i,j\}$.
Subsequently, the aggregation of the strata estimates, which we obtain during sampling, provides the desired SII estimate:
\begin{equation*}
    \hat{I}_{i,j}^{\text{SII}} := \frac{1}{n-1} \sum\limits_{\ell=0}^{n-2} \hat{I}_{i,j,\ell}^{\{i,j\}} - \hat{I}_{i,j,\ell}^{\{i\}} - \hat{I}_{i,j,\ell}^{\{i\}} + \hat{I}_{i,j,\ell}^{\emptyset}.
\end{equation*}

For each sampled coalition $A$ of size $|A|=a$, the update mechanism needs to distinguish between only 4 cases.
For each pair $i$ and $j$ it updates:
\begin{itemize}
    \item $\hat{I}_{i,j,a-2}^{\{i,j\}}$ if $i,j \in A$,
    \item $\hat{I}_{i,j,a-1}^{\{i\}}$ if $i \in A$ but $j \notin A$,
    \item $\hat{I}_{i,j,a-1}^{\{j\}}$ if $j \in A$ but $i \notin A$, or
    \item $\hat{I}_{i,j,a}^{\emptyset}$ if $i,j \notin A$.
\end{itemize}

This case distinction is still captured by computing $W = A \cap K$, $\ell = a - |W|$, and updating $\hat{I}_{i,j,\ell}^W$.

\begin{algorithm}[H]
\caption{SVARM-IQ (for the Shapley Interaction index of order $k=2$)}
\label{alg:SVARM-IQPairs}
\begin{algorithmic}[1]
    \STATE {\bfseries Input:} $(\mathcal{N}, \nu)$, $B \in \mathbb{N}$
    \STATE $\hat{I}_{i,j,\ell}^{\emptyset}, \hat{I}_{i,j,\ell}^{\{i\}}, \hat{I}_{i,j,\ell}^{\{j\}}, \hat{I}_{i,j,\ell}^{\{i,j\}} \leftarrow 0$ \hspace{0.5cm} for all $\{i,j\} \in \mathcal{N}_2, \ell \in \mathcal{L}_2$
    \STATE $c_{i,j,\ell}^{\emptyset}, c_{i,j,\ell}^{\{i\}}, c_{i,j,\ell}^{\{j\}}, c_{i,j,\ell}^{\{i,j\}} \leftarrow 0$ \hspace{0.5cm} for all $\{i,j\} \in \mathcal{N}_2, \ell \in \mathcal{L}_2$
    \STATE \textsc{\texttt{ComputeBorders}}
    \STATE $\bar{B} \leftarrow B - \sum\limits_{s \in \mathcal{S}_{\text{exp}}} \binom{n}{s}$
    \FOR{$b = 1, \ldots, \bar{B}$}
        \STATE Draw size $a_b \in \mathcal{S}_{\text{imp}} \sim \bar{P}_k$
        \STATE Draw $A_b$ from $\{S \subseteq \mathcal{N} \mid |S| = a_b\}$ uniformly at random
        \STATE $v_b \leftarrow \nu(A_b)$
        \FOR{$\{i,j\} \in \mathcal{N}_2$}
            \STATE $W \leftarrow A_b \cap \{i,j\}$
            \STATE $\ell \leftarrow a_b - |W|$
            \STATE $\hat{I}_{i,j,\ell}^W \leftarrow \frac{\hat{I}_{i,j,\ell}^W \cdot c_{i,j,\ell}^W + v_b }{c_{i,j,\ell}^W+1}$
            \STATE $c_{i,j,\ell}^W \leftarrow c_{i,j,\ell}^W + 1$
        \ENDFOR
    \ENDFOR
    \STATE $\hat{I}_{i,j} \leftarrow \frac{1}{n-1} \sum\limits_{\ell=0}^{n-2} \hat{I}_{i,j,\ell}^{\{i,j\}} - \hat{I}_{i,j,\ell}^{\{i\}} - \hat{I}_{i,j,\ell}^{\{j\}} + \hat{I}_{i,j,\ell}^{\emptyset}$  \hspace{0.5cm} for all $\{i,j\} \in \mathcal{N}_2$
    \STATE {\bfseries Output:} $\hat{I}_{i,j}$ for all $\{i,j\} \in \mathcal{N}_2$
\end{algorithmic}
\end{algorithm}

\clearpage
\section{PROOFS} \label{app:Proofs}

In the following, we give the proofs to our theoretical results in Section~4.
We start by defining and revisiting some helpful notation and stating our assumptions.

\textbf{Notation:}
\begin{itemize}
\item Let $\mathcal{L}_k := \{0,\ldots,n-k\}$.
\item Let $\mathcal{L}_k^{|W|} := \{\ell \in \mathcal{L}_k \mid \ell + |W| \in \mathcal{S}_{\text{imp}} \} = [\max\{0,s_{\text{exp}}+1-w\}, \min\{n-k,n-s_{\text{exp}}-1\}]$ for any $W \subseteq K \in \mathcal{N}_k$.
\item Let $\tilde{B} = B - \sum\nolimits_{s \in \mathcal{S}_{\text{exp}}} \binom{n}{s} - \vert \mathcal{I}_{\text{imp}} \vert$ be the available budget left for the sampling loop after the completion of \textsc{\texttt{ComputeBorders}} and \textsc{\texttt{WarmUp}}.
\item For all $K \in \mathcal{N}_k$ with $\ell \in \mathcal{L}_k$, let $A_{K,\ell}$ be a random set with $\mathbb{P}(A_{K,\ell} = S) = \frac{1}{\binom{n-k}{\ell}}$ for all $S \subseteq \mathcal{N} \setminus K$ with $|S| = \ell$.
\item For all $K \in \mathcal{N}_k$ with $W \subseteq K$ and $\ell \in \mathcal{L}_k^w$:
\begin{itemize}
    \item Let $\sigma_{K,\ell,W}^2 := \mathbb{V}[\nu(A_{K,\ell} \cup W)]$ be the strata variance.
    \item Let $r_{K,\ell,W} := \max\limits_{\substack{S \subseteq \mathcal{N} \setminus K \\ |S| = \ell}} \nu(S \cup W) - \min\limits_{\substack{S \subseteq \mathcal{N} \setminus K \\ |S| = \ell}} \nu(S \cup W)$ be the strata range.
    \item Let $\bar{m}_{K,\ell}^W :=  \# \{b \mid |A_b| = \ell + |W|, A_b \cap K = W \}$ be the number of samples with which $\hat{I}_{K,\ell}^W$ is updated during the sampling loop.
    \item Let $m_{K,\ell}^W := \bar{m}_{K,\ell}^W + 1$ be the total number of samples with which $\hat{I}_{K,\ell}^W$ is updated.
    \item Let $A_{K,\ell,m}^W$ be the $m$-th coalition used to update $I_{K,\ell}^W$.
\end{itemize}
\item For all $K \in \mathcal{N}_k$ let $R_K := \sum\limits_{W \subseteq K} \sum\limits_{\ell \in \mathcal{L}_k^{|W|}} r_{K,\ell,W}$.
\item Let $\gamma_k$ be $\gamma_2 := 2(n-1)^2$ for $k=2$ and $\gamma_k := n^{k-1} (n-k+1)^2$ for $k \geq 3$.
\end{itemize}

\textbf{Assumptions:}
\begin{itemize}
    \item $\tilde{B} > 0$
    \item $n \geq 4$
    \item $B < 2^n$
\end{itemize}
The lower bound on the leftover budget $\tilde{B}$ is necessary to ensure the completion of \textsc{\texttt{ComputeBorders}} and \textsc{\texttt{WarmUp}}, and that at least one coalition is sampled during the sampling loop.
The assumption on $n$ arises from the fact that \textsc{\texttt{ComputeBorders}} automatically evaluates the worth of all coalitions having size $0,1,n-1$ or $n$.
Hence, all CII values are computed exactly for $n=3$.
Our considered problem statement becomes trivial for $n \leq 2$.
Likewise, in order to avoid triviality, we demand the budget to be lower than the total number of coalitions $2^n$.
Otherwise, all CII values will be computed exactly by \textsc{\texttt{ComputeBorders}} and the approximation problem vanishes.
This allows us to state $\mathcal{S}_{\text{imp}} \neq \emptyset$ and $\mathcal{I}_{\text{imp}} \neq \emptyset$.

\subsection{Unbiasedness}

\begin{lemma} \label{lem:UnbiasedStrata}
    All strata estimates $\hat{I}_{K,\ell}^W$ are unbiased, i.e., for all $K \in \mathcal{N}_k$, $W \subseteq K$, $\ell \in \mathcal{L}_k$:
    \begin{equation*}
        \mathbb{E} \left[ \hat{I}_{K,\ell}^W \right] = I_{K,\ell}^W .
    \end{equation*}
\end{lemma}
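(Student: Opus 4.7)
The plan is to split the argument according to whether the stratum is computed explicitly or sampled, and then handle the sampled case by conditioning on the random number of observed samples.

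\emph{Explicit strata.} If $I_{K,\ell}^W \in \mathcal{I}_{\text{exp}}$, then $\ell + |W| \in \mathcal{S}_{\text{exp}}$, so \textsc{\texttt{ComputeBorders}} enumerates every coalition of size $\ell + |W|$ and adds $\nu(A)/\binom{n-k}{\ell}$ to $\hat{I}_{K,\ell}^W$ exactly once for each $A \subseteq \mathcal N$ with $A \cap K = W$ and $|A \setminus K| = \ell$. These are precisely the sets $S \cup W$ for $S \subseteq \mathcal N \setminus K$ with $|S|=\ell$, hence $\hat{I}_{K,\ell}^W = I_{K,\ell}^W$ deterministically by Eq.~(\ref{eq_def_stratum}), which is trivially unbiased.

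\emph{Implicit strata.} Suppose $I_{K,\ell}^W \in \mathcal{I}_{\text{imp}}$. Then $\hat{I}_{K,\ell}^W$ is built from $m_{K,\ell}^W = 1 + \bar m_{K,\ell}^W$ samples: one from \textsc{\texttt{WarmUp}}, and $\bar m_{K,\ell}^W$ from the sampling loop, namely those $b$ with $|A_b|=\ell+|W|$ and $A_b \cap K = W$. By the incremental averaging done in \textsc{\texttt{UpdateMean}}, we have
\begin{equation*}
    \hat{I}_{K,\ell}^W = \frac{1}{m_{K,\ell}^W}\sum_{m=1}^{m_{K,\ell}^W} \nu\bigl(A_{K,\ell,m}^W \cup W\bigr).
\end{equation*}
I will show that, conditionally on $m_{K,\ell}^W = m$, each $A_{K,\ell,j}^W$ has the uniform distribution on $\{S \subseteq \mathcal N \setminus K : |S|=\ell\}$. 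For $j=1$ this is immediate from \textsc{\texttt{WarmUp}}. For $j \geq 2$, recall that each $A_b$ in the main loop is drawn by first sampling $a_b \sim \bar P_k$ and then $A_b$ uniformly among coalitions of size $a_b$, independently across $b$. Conditional on the contribution event $E_b := \{|A_b|=\ell+|W|,\ A_b \cap K = W\}$, Bayes' rule together with the uniformity of $A_b$ given $a_b = \ell+|W|$ implies that $A_b \setminus K$ is uniform on $\{S \subseteq \mathcal N \setminus K : |S|=\ell\}$. Because the $A_b$'s are independent and the events $E_b$ depend only on each individual $A_b$, the selection of which indices contribute is independent of the conditional distribution of the contributing $A_b$'s; hence conditional on the random count $m_{K,\ell}^W$, each contributing coalition is still marginally uniform on the required set.

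Consequently $\mathbb{E}\bigl[\nu(A_{K,\ell,j}^W \cup W) \,\big|\, m_{K,\ell}^W = m\bigr] = I_{K,\ell}^W$ for every $j \in \{1,\dots,m\}$, so
\begin{equation*}
    \mathbb{E}\bigl[\hat{I}_{K,\ell}^W \,\big|\, m_{K,\ell}^W = m\bigr] = \frac{1}{m}\sum_{j=1}^m I_{K,\ell}^W = I_{K,\ell}^W,
\end{equation*}
and the tower rule finishes the proof. The main obstacle is precisely this last step: one must be careful that the random normalizer $m_{K,\ell}^W$ does not bias the sample mean, which is why I condition on it and invoke the independence between the contribution indicator $\mathbb{1}_{E_b}$ and the conditional law of $A_b$ given $E_b$.
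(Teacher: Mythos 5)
Your proposal is correct and follows essentially the same route as the paper's proof: handle the explicitly computed strata as a trivial deterministic case, then for sampled strata condition on the sample count $m_{K,\ell}^W$, use the conditional uniformity of each contributing coalition on $\{S \subseteq \mathcal N \setminus K : |S| = \ell\}$ to get conditional unbiasedness, and finish with the tower rule (the warm-up sample guaranteeing $m_{K,\ell}^W \geq 1$ so the average is well defined). Your extra care about why conditioning on the random normalizer does not bias the contributing samples is a slightly more explicit version of the paper's direct computation of $\mathbb{P}(A_{K,\ell,m}^W = S \cup W \mid \cdot) = \binom{n-k}{\ell}^{-1}$, not a different argument.
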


\begin{proof}
The statement trivially holds for all strata explicitly computed by \textsc{\texttt{ComputeBorders}}.
Thus, we consider the remaining strata which are estimated via sampling.
Fix any $K \in\mathcal{N}_k$, $W \subseteq K$, and $\ell \in \mathcal{L}_k^{|W|}$. 
Due to the uniform sampling of eligible coalitions once the size is fixed, we have:

\begin{align*}
    & \ \mathbb{E} \left[ \hat{I}_{K,\ell}^W \mid m_{K,\ell}^W \right] \\
    = & \ \frac{1}{m_{K,\ell}^W} \sum\limits_{m=1}^{m_{K,\ell}^W} \mathbb{E} \left[ \nu(A_{K,\ell,m}^W) \mid m_{K,\ell}^W \right] \\
    = & \ \frac{1}{m_{K,\ell}^W} \sum\limits_{m=1}^{m_{K,\ell}^W} \sum\limits_{\substack{S \subseteq \mathcal{N} \setminus K \\ |S| = \ell}} \mathbb{P}(A_{K,\ell,m}^W = S \cup W \mid |A_{K,\ell,m}^W| = \ell + |W|, A_{K,\ell,m}^W \cap K = W) \cdot \nu(S \cup W) \\
    = & \ \frac{1}{m_{K,\ell}^W} \sum\limits_{m=1}^{m_{K,\ell}^W} \sum\limits_{\substack{S \subseteq \mathcal{N} \setminus K \\ |S| = \ell}} \frac{1}{\binom{n-k}{\ell}} \cdot \nu(S \cup W) \\
    = & \ \frac{1}{m_{K,\ell}^W} \sum\limits_{m=1}^{m_{K,\ell}^W} I_{K,\ell}^W \\
    = & \ I_{K,\ell}^W .
\end{align*}

Note that the set $A_{K,\ell,m}^W$ has cardinality $\ell+|W|$ and fulfills $A_{K,\ell,m}^W \cap K = W$ by definition.
Otherwise, it would not be used to update $\hat{I}_{K,\ell}^W$.
Since \textsc{\texttt{WarmUp}} gathers one sample for each stratum estimate, it guarantees $m_{K,\ell}^W \geq 1$.
Thus the above terms are well defined.
Finally, we obtain:
\begin{align*}
    \mathbb{E} \left[ \hat{I}_{K,\ell}^W \right] = & \ \sum\limits_{m=1}^{\bar{B}+1} \mathbb{E} \left[ \hat{I}_{K,\ell}^W \mid m_{K,\ell}^W = m \right] \cdot \mathbb{P}(m_{K,\ell}^W = m) \\
    = & \ \sum\limits_{m=1}^{\bar{B}+1} I_{K,\ell}^W \cdot \mathbb{P}(m_{K,\ell}^W = m) \\
    = & I_{K,\ell}^W .
\end{align*}
\end{proof}

\textbf{Theorem 4.1.}
\textit{
    The CII estimates returned by SVARM-IQ are unbiased for all $K \in \mathcal{N}_k$, i.e.,
    \begin{equation*}
        \mathbb{E} \left[ \hat{I}_K \right] = I_K.
    \end{equation*}
}

\begin{proof}
We have already proven the unbiasedness of all strata estimates with \cref{lem:UnbiasedStrata}.
Thus, we obtain for all $K \in \mathcal{N}_k$:
\begin{align*}
     \mathbb{E} \left[ \hat{I}_K \right] = & \mathbb{E} \left[ \sum\limits_{\ell=0}^{n-k} \binom{n-k}{\ell} \lambda_{k,\ell} \sum\limits_{W \subseteq K} (-1)^{k-|W|} \cdot \hat{I}_{K,\ell}^W \right] \\
     = & \ \sum\limits_{\ell=0}^{n-k} \binom{n-k}{\ell} \lambda_{k,\ell} \sum\limits_{W \subseteq K} (-1)^{k-|W|} \cdot \mathbb{E} \left[ \hat{I}_{K,\ell}^W \right] \\
    = & \ \sum\limits_{\ell=0}^{n-k} \binom{n-k}{\ell} \lambda_{k,\ell} \sum\limits_{W \subseteq K} (-1)^{k-|W|} \cdot I_{K,\ell}^W \\
    = & \ I_K.
\end{align*}
\end{proof}

\subsection{Sample Numbers}
Form now on, we distinguish between the special case of order $k = 2$ and all others $k \geq 3$, allowing us to give tighter bounds for the former.
Hence, we introduce $\gamma_k$ for all $k \geq 2$ with
\begin{equation*}
    \gamma_2 = 2(n-1)^2 \text{ and } \gamma_k = n^{k-1} (n-k+1)^2 \text{ for all } k \geq 3.
\end{equation*}

\begin{lemma} \label{lem:ExpectedSamples}
    The number of samples $\bar{m}_{K,\ell}^W$ collected for the strata estimate $\hat{I}_{K,\ell}^W$ of any fixed player set $K \in \mathcal{N}_k$, $W \subseteq K$, and $\ell \in \mathcal{L}_k^{|W|}$ collected during the sampling loop is binomially distributed with an expected value of at least
    \begin{equation*}
        \mathbb{E} \left[ \bar{m}_{K,\ell}^W \right] \geq \frac{\tilde{B}}{\gamma_k}.
    \end{equation*}
\end{lemma}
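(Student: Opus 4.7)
The plan is to observe that, in SVARM-IQ's sampling loop, $\bar m_{K,\ell}^W$ is naturally a sum of iid Bernoulli indicators over the $\tilde B$ iterations. Each iteration $b$ independently draws a size $a_b \sim \bar P_k$ and then a coalition $A_b$ uniformly among size-$a_b$ subsets of $\mathcal N$, and iteration $b$ updates $\hat I_{K,\ell}^W$ iff both $a_b = s$ and $A_b \cap K = W$, where $s := \ell + |W|$. Conditioned on $a_b = s$, the fraction of size-$s$ coalitions with $A_b \cap K = W$ equals $\binom{n-k}{\ell}/\binom{n}{s}$, since fixing $W \subseteq K$ leaves exactly $\binom{n-k}{\ell}$ choices for the remaining $\ell$ elements drawn from $\mathcal N \setminus K$. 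Consequently these indicators are iid Bernoulli with parameter $p := \bar P_k(s)\cdot\binom{n-k}{\ell}/\binom{n}{s}$, so $\bar m_{K,\ell}^W \sim \mathrm{Bin}(\tilde B, p)$ and $\mathbb E[\bar m_{K,\ell}^W] = \tilde B\, p$.

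The remaining task is to establish $p \geq 1/\gamma_k$. For $k \geq 3$, the initial $P_k$ is uniform on $\{2, \ldots, n-2\}$; renormalizing over the nonempty $\mathcal S_{\text{imp}} \subseteq \{2,\ldots,n-2\}$ only increases masses, so $\bar P_k(s) \geq 1/(n-3)$. I would then expand the binomial ratio as
\[
\binom{n-k}{\ell}\big/\binom{n}{s} = \frac{s(s-1)\cdots(s-|W|+1)\,\cdot\,(n-s)(n-s-1)\cdots(n-k-\ell+1)}{n(n-1)\cdots(n-k+1)},
\]
upper bound the denominator by $n^k$, and reduce $p \geq 1/\gamma_k = 1/[n^{k-1}(n-k+1)^2]$ to the uniform combinatorial inequality that the numerator above, multiplied by $(n-k+1)^2$, is at least $n(n-3)$ for every valid $(s, |W|, \ell)$. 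For $k = 2$, the tailored $P_2$ is chosen so that $P_2(s)\cdot s(s-1) = \beta_n$ for $s \leq (n-1)/2$ (symmetrically on the upper half). The three cases $|W| \in \{0,1,2\}$ yield the closed-form ratios $\{(n-s)(n-s-1),\, s(n-s),\, s(s-1)\}/[n(n-1)]$. The $s$-dependent factors either cancel against $P_2$ (most cleanly for $|W|=2$, giving $p = \beta_n/[n(n-1)]$) or are bounded away by using $n-s \geq s-1$ on the lower half; in every case $p \geq 1/[2(n-1)^2]$ reduces to the scalar inequality $\beta_n \geq n/[2(n-1)]$, which is immediate from the explicit formulas for $\beta_n$ given for both parities of $n$.

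The main obstacle is the falling-factorial inequality in the $k \geq 3$ case. One might hope that the worst case is attained at the symmetric extremes $s \in \{k, n-k\}$ with $|W| \in \{0, k\}$, both of which give the numerator value $k!$, reducing the claim to $k!(n-k+1)^2 \geq n(n-3)$. Small examples (e.g.\ $n = 6$, $k = 4$) show however that the true minimum can lie in the interior of $\mathcal S_{\text{imp}}$ where $\binom{n-k}{\ell} = 1$ at non-boundary $s$. Fortunately, the quadratic slack $(n-k+1)^2$ in $\gamma_k$ absorbs these interior cases: a short case analysis on $(s, |W|)$ combined with monotonicity of the falling factorial in its arguments suffices to close the bound.
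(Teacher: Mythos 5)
Your setup matches the paper's: the indicator/binomial argument, the success probability $\bar P_k(s)\cdot\binom{n-k}{\ell}/\binom{n}{s}$ with $s=\ell+|W|$, and the observation $\bar P_k(s)\ge P_k(s)$ are exactly the paper's first steps, and your $k=2$ analysis is correct — it is essentially the paper's twelve-case computation, but organized more cleanly by cancelling $P_2$ against the falling factors and using $n-s\ge s-1$ (lower half) resp.\ $s\ge n-s-1$ (upper half) to reduce everything to $\beta_n\ge \tfrac{n}{2(n-1)}$, which indeed holds for both parities.

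The gap is in the $k\ge 3$ case, and it sits exactly where you say "the main obstacle" is. After bounding $[n]_k=n(n-1)\cdots(n-k+1)\le n^k$, your route requires the uniform inequality $[s]_{|W|}\,[n-s]_{k-|W|}\,(n-k+1)^2\ge n(n-3)$ for \emph{every} admissible triple $(s,|W|,\ell)$, i.e.\ a minimization of a two-parameter family of falling-factorial products. You correctly observe that the minimum is not attained at the boundary configurations (so the hoped-for reduction to $k!\,(n-k+1)^2\ge n(n-3)$ fails), but then you only assert that "a short case analysis combined with monotonicity" closes it. That assertion is not a proof, and the needed analysis is not short: e.g.\ the regime $\min(|W|,k-|W|)=1$ with $k$ close to $n$ is only saved because a factor $(k-1)!$ survives, while the regime $|W|\in\{0,k\}$ needs a separate monotonicity argument in $k$, and the interior regime needs the two largest factors $\ge \tfrac n2(\tfrac n2-1)$ — these are genuinely different cases and none follows from plain monotonicity of the falling factorial. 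The paper avoids this multi-parameter minimization altogether: it bounds the falling-factorial ratio $\tfrac{(\ell+|W|)!}{\ell!}\cdot\tfrac{(n-\ell-|W|)!}{(n-\ell-k)!}$ below by $1$, keeps $[n]_k$ exact, and is then left with the single inequality $n^{k-1}(n-k+1)\ge (n-3)\prod_{i=n-k+2}^{n} i$, verified directly for $k\le 4$ and by induction on $k$ for $k\ge 5$. Your overall plan is salvageable (I could not find a counterexample to your inequality), but as written the central $k\ge3$ step is missing; either carry out the full case analysis you sketch, or switch to the paper's reduction, which trades the $(s,|W|)$-dependence for one clean induction in $k$.
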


\begin{proof}
The number of collected samples during the sample loop, i.e. $\bar{m}_{K,\ell}^W$, is binomially distributed because in each iteration the stratum $\hat{I}_{K,\ell}^W$ has the same probability to be updated and the sampled coalitions are independent of each other across the iterations.
The number of iterations is $\tilde{B}$ and the condition for an update of $\hat{I}_{K,\ell}^W$ is that the sampled set $A_b$ fulfills $|A_b| = a_b = \ell+|W|$ and $A_b \cap K = W$.
This happens with a probability of:
\begin{align*}
    & \ \mathbb{P}(a_b = \ell + |W|, A_b \cap K = W) \\
    = & \ \mathbb{P}(A_b \cap K = W \mid a_b = \ell + |W|) \cdot \mathbb{P}(a_b = \ell + |W|) \\
    = & \ \frac{\binom{n-k}{\ell}}{\binom{n}{\ell+|W|}} \cdot \bar{P}_k(\ell+|W|).
\end{align*}

Hence, we obtain $\bar{m}_{K,\ell}^W \sim Bin \left(\tilde{B}, \frac{\binom{n-k}{\ell}}{\binom{n}{\ell+|W|}} \cdot \bar{P}_k(\ell+|W|) \right)$.
This yields
\begin{align*}
    \mathbb{E} \left[ \bar{m}_{K,\ell}^W \right] = & \ \tilde{B} \cdot \frac{\binom{n-k}{\ell}}{\binom{n}{\ell+|W|}} \cdot \bar{P}_k(\ell+|W|) \\
    \geq & \ \tilde{B} \cdot \frac{\binom{n-k}{\ell}}{\binom{n}{\ell+|W|}} \cdot P_k(\ell+|W|) .
\end{align*}

Note that $\bar{P}_k(\ell + |W|) \geq P_k(\ell + |W|)$ holds true for all $\ell$ and $W \subseteq K$ with $\ell + |W| \in \mathcal{S}_{\text{imp}}$ because for these sizes, from which coalitions are sampled, $\bar{P}_k$ can only gain probability mass in comparison to $P_k$.
More precisely, for all $s \in \mathcal{S}_{\text{imp}}$ we have
\begin{equation*}
    \bar{P}_k(s) = \frac{P_k(s)}{\sum\limits_{s' \in \mathcal{S}_{\text{imp}}} P_k(s')} \geq \frac{P_k(s)}{\sum\limits_{s' \in \mathcal{S}_{\text{exp}}} P_k(s') + \sum\limits_{s' \in \mathcal{S}_{\text{imp}}} P_k(s')} = P_k(s).
\end{equation*}

We continue to prove our statement for the case of $k=2$ and any fixed $K = \{i,j\}$ by giving a lower bound for the expected value of $\bar{m}_{i,j,\ell}^W$.
Inserting $k=2$, we can further write
\begin{align*}
    \mathbb{E} \left[ \bar{m}_{i,j,\ell}^W \right] \geq & \ \tilde{B} \cdot \frac{\binom{n-2}{\ell}}{\binom{n}{\ell+|W|}} \cdot P_2(\ell+|W|) \\
    = & \ \frac{\tilde{B}}{n (n-1)} \cdot \frac{(\ell+|W|)!}{\ell!} \cdot \frac{(n-\ell-|W|)!}{(n-\ell-2)!} \cdot P_2(\ell+|W|).
\end{align*}

Let
\begin{equation*}
    f(\ell,w) := \frac{(\ell+w)!}{\ell!} \cdot \frac{(n-\ell-w)!}{(n-\ell-2)!} =
    \begin{cases}
        (n-\ell) (n-\ell-1) & \text{if } w=0 \\
        (\ell+1) (n-\ell-1) & \text{if } w=1 \\
        (\ell+1) (\ell+2) & \text{if } w=2
    \end{cases}.
\end{equation*}

In the following, we derive the lower bound $f(\ell,|W|) \cdot P_2(\ell+|W|) \geq \frac{n}{2(n-1)}$ for all $|W| \in \{0,1,2\}$ and $\ell \in \mathcal{L}_2^{|W|}$ by distinguishing over different cases of $n$, $\ell$, and $|W|$ and exploiting our tailored distribution $P_2$.

For odd $n$, $\ell+|W| \leq \frac{n-1}{2}$, and $|W| = 0$:
\begin{equation*}
    f(\ell, |W|) \cdot P_2(\ell+|W|)
    = \frac{(n-\ell) (n-\ell-1)}{\ell (\ell-1)} \cdot \frac{n-1}{2(n-3)}
    \geq \frac{(n-1)^2}{2(n-3)^2}
    \geq \frac{n}{2(n-1)}
\end{equation*}

For odd $n$, $\ell+|W| \leq \frac{n-1}{2}$, and $|W| = 1$:
\begin{equation*}
    f(\ell, |W|) \cdot P_2(\ell+|W|)
    = \frac{(\ell+1) (n-\ell-1)}{(\ell+1) \ell} \cdot \frac{n-1}{2(n-3)}
    \geq \frac{(n-1)(n+1)}{2(n-3)^2}
    \geq \frac{n}{2(n-1)}
\end{equation*}

For odd $n$, $\ell+|W| \leq \frac{n-1}{2}$, and $|W| = 2$:
\begin{align*}
    f(\ell, |W|) \cdot P_2(\ell+|W|)
    = \frac{(\ell+1) (\ell+2)}{(\ell+2) (\ell+1)} \cdot \frac{n-1}{2(n-3)}
    = \frac{n-1}{2(n-3)}
    \geq \frac{n}{2(n-1)}
\end{align*}

For odd $n$, $\ell+|W| \geq \frac{n+1}{2}$, and $|W| = 0$:
\begin{equation*}
    f(\ell, |W|) \cdot P_2(\ell+|W|)
    = \frac{(n-\ell) (n-\ell-1)}{(n-\ell) (n-\ell-1)} \cdot \frac{n-1}{2(n-3)}
    = \frac{n-1}{2(n-3)}
    \geq \frac{n}{2(n-1)}
\end{equation*}

For odd $n$, $\ell+|W| \geq \frac{n+1}{2}$, and $|W| = 1$:
\begin{equation*}
    f(\ell, |W|) \cdot P_2(\ell+|W|)
    = \frac{(\ell+1) (n-\ell-1)}{(n-\ell-1) (n-\ell-2)} \cdot \frac{n-1}{2(n-3)}
    \geq \frac{(n-1)(n+1)}{2(n-3)^2}
    \geq \frac{n}{2(n-1)}
\end{equation*}

For odd $n$, $\ell+|W| \geq \frac{n+1}{2}$, and $|W| = 2$:
\begin{equation*}
    f(\ell, |W|) \cdot P_2(\ell+|W|)
    = \frac{(\ell+1) (\ell+2)}{(n-\ell-2) (n-\ell-3)} \cdot \frac{n-1}{2(n-3)}
    \geq \frac{(n-1)(n+1)}{2(n-3)^2}
    \geq \frac{n}{2(n-1)}
\end{equation*}

For even $n$, $\ell+|W| \leq \frac{n-2}{2}$, and $|W| = 0$:
\begin{equation*}
    f(\ell, |W|) \cdot P_2(\ell+|W|)
    = \frac{(n-\ell) (n-\ell-1)}{\ell (\ell-1)} \cdot \frac{n^2-2n}{2(n^2-4n+2)}
    \geq \frac{n^2(n+2)}{2(n-4)(n^2-4n+2)}
    \geq \frac{n}{2(n-1)}
\end{equation*}

For even $n$, $\ell+|W| \leq \frac{n-2}{2}$, and $|W| = 1$:
\begin{equation*}
    f(\ell, |W|) \cdot P_2(\ell+|W|)
    = \frac{(\ell+1) (n-\ell-1)}{(\ell+1) \ell} \cdot \frac{n^2-2n}{2(n^2-4n+2)}
    \geq \frac{n(n-2)(n+2)}{2(n-4)(n^2-4n+2)}
    \geq \frac{n}{2(n-1)}
\end{equation*}

For even $n$, $\ell+|W| \leq \frac{n-2}{2}$, and $|W| = 2$:
\begin{equation*}
    f(\ell, |W|) \cdot P_2(\ell+|W|)
    = \frac{(\ell+1) (\ell+2)}{(\ell+2) (\ell+1)} \cdot \frac{n^2-2n}{2(n^2-4n+2)}
    = \frac{n^2-2n}{2(n^2-4n+2)}
    \geq \frac{n}{2(n-1)}
\end{equation*}

For even $n$, $\ell+|W| \geq \frac{n}{2}$, and $|W| = 0$:
\begin{equation*}
    f(\ell, |W|) \cdot P_2(\ell+|W|)
    = \frac{(n-\ell) (n-\ell-1)}{(n-\ell) (n-\ell-1)} \cdot \frac{n^2-2n}{2(n^2-4n+2)}
    = \frac{n^2-2n}{2(n^2-4n+2)}
    \geq \frac{n}{2(n-1)}
\end{equation*}

For even $n$, $\ell+|W| \geq \frac{n}{2}$, and $|W| = 1$:
\begin{equation*}
    f(\ell, |W|) \cdot P_2(\ell+|W|)
     = \frac{(\ell+1) (n-\ell-1)}{(n-\ell-1) (n-\ell-2)} \cdot \frac{n^2-2n}{2(n^2-4n+2)}
     \geq \frac{n^2}{2(n^2-4n+2)}
     \geq \frac{n}{2(n-1)}
\end{equation*}

For even $n$, $\ell+|W| \geq \frac{n}{2}$, and $|W| = 2$:
\begin{equation*}
    f(\ell, |W|) \cdot P_2(\ell+|W|)
    = \frac{(\ell+1) (\ell+2)}{(n-\ell-2) (n-\ell-3)} \cdot \frac{n^2-2n}{2(n^2-4n+2)}
    \geq \frac{n^2-2n}{2(n^2-4n+2)}
    \geq \frac{n}{2(n-1)}
\end{equation*}

This allows us to conclude:
\begin{align*}
    \mathbb{E} \left[ \bar{m}_{i,j,\ell}^W \right] \geq & \ \tilde{B} \cdot \frac{\binom{n-2}{\ell}}{\binom{n}{\ell+|W|}} \cdot P_2(\ell+|W|) \\
    = & \ \frac{\tilde{B}}{n(n-1)} \cdot \frac{(\ell+|W|)!}{\ell!} \cdot \frac{(n-\ell-|W|)!}{(n-\ell-2)!} \cdot P_2(\ell+|W|) \\
    \geq & \ \frac{\tilde{B}}{n(n-1)} \cdot \frac{n}{2(n-1)} \\
    = & \ \frac{\tilde{B}}{\gamma_2}.
\end{align*}

Next, we turn our attention to the case of $k \geq 3$.
Inserting the uniform distribution for $P_k$, we can write for the expected number of samples:
\begin{align*}
    \mathbb{E} \left[ \bar{m}_{K,\ell}^W \right] \geq & \ \tilde{B} \cdot \frac{\binom{n-k}{\ell}}{\binom{n}{\ell+|W|}} \cdot P_k(\ell+|W|) \\
    = & \ \tilde{B} \cdot \frac{(n-k)!}{n!} \cdot \frac{(\ell+|W|)!}{\ell!} \cdot \frac{(n-\ell-|W|)!}{(n-\ell-k)!} \cdot \frac{1}{n-3} \\
    \geq & \ \tilde{B} \cdot \frac{(n-k)!}{n!} \cdot \frac{1}{n-3}.
\end{align*}

In the following we prove that $\frac{(n-k)!}{n!} \cdot \frac{1}{n-3} \geq \frac{1}{n^{k-1} (n-k+1)^2}$.
First, we obtain the equivalent inequality
\begin{equation*}
    n^{k-1} (n-k+1) \geq (n-3) \prod\limits_{i=n-k+2}^n i .
\end{equation*}

Note that we have $n \geq k$ at all times.
The inequality obviously holds true for all $k \leq 4$.
We prove its correctness for $k \geq 5$ by induction over $k$.
We start with the induction base at $k = 5$:
\begin{align*}
    & \ n^{k-1} (n-k+1) \geq (n-3) \prod\limits_{i=n-k+2}^n i \\
    \Leftrightarrow & \ n^3 (n-4) \geq (n-1)(n-2)(n-3)^2 \\
    \Leftrightarrow & \ 5n^3 + 39n \geq 29n^2 + 18 .
\end{align*}

The resulting equality is obviously fulfilled by all $n \geq 5$.
Next, we conduct the induction step by considering the inequality for $k + 1$ with $k \geq 5$:
\begin{align*}
    & \ n^k (n-k) \\
    = & \ \frac{n(n-k)}{n-k+1} \cdot n^{k-1} (n-k+1) \\
    \geq & \ \frac{n(n-k)}{n-k+1} \cdot (n-3) \prod\limits_{i=n-k+2}^n i \\
    \geq & \ (n-k+1) \cdot (n-3) \prod\limits_{i=n-k+2}^n i \\
    = & \ (n-3) \prod\limits_{i=n-k+1}^n i .
\end{align*}

With the inequality proven, we finally obtain the desired lower bound for the expectation of $\bar{m}_{K,\ell}^W$:
\begin{align*}
    \mathbb{E} \left[ \bar{m}_{K,\ell}^W \right] \geq & \ \tilde{B} \cdot \frac{(n-k)!}{n!} \cdot \frac{1}{n-3} \\
    \geq & \ \frac{\tilde{B}}{n^{k-1} (n-k+1)^2} \\
    = & \ \frac{\tilde{B}}{\gamma_k}.
\end{align*}
\end{proof}

\begin{lemma} \label{lem:InvertedSamples}
    The expected inverted total sample number of the strata estimate $\hat{I}_{K,\ell}^W$ for any fixed $K \in \mathcal{N}_k$, $W \subseteq K$, and $\ell \in \mathcal{L}_k^{|W|}$ is bounded by
    \begin{equation*}
        \mathbb{E} \left[ \frac{1}{m_{K,\ell}^W} \right] \leq \frac{\gamma_k}{\tilde{B}} .
    \end{equation*}
\end{lemma}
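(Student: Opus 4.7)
The quantity $m_{K,\ell}^W = \bar m_{K,\ell}^W + 1$ where, by Lemma~A.2, $\bar m_{K,\ell}^W \sim \mathrm{Bin}(\tilde B, p)$ for some success probability $p$ satisfying $\tilde B \, p = \mathbb{E}[\bar m_{K,\ell}^W] \geq \tilde B / \gamma_k$, i.e.\ $p \geq 1/\gamma_k$. So the goal reduces to bounding the expectation of the reciprocal of a shifted binomial.

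The plan is to invoke the classical identity for the inverse moment of a shifted binomial: if $X \sim \mathrm{Bin}(N,p)$, then
\begin{equation*}
\mathbb{E}\!\left[\frac{1}{X+1}\right] \;=\; \sum_{j=0}^{N} \frac{1}{j+1}\binom{N}{j} p^{j}(1-p)^{N-j} \;=\; \frac{1-(1-p)^{N+1}}{(N+1)\,p},
\end{equation*}
which is obtained by rewriting $\frac{1}{j+1}\binom{N}{j} = \frac{1}{N+1}\binom{N+1}{j+1}$, reindexing the sum, and recognizing a binomial expansion. I would derive this identity in one display, then apply it with $N = \tilde B$ and $p$ as above.

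From the identity, $\mathbb{E}[1/(X+1)] \leq \frac{1}{(N+1)p} \leq \frac{1}{N p}$, and substituting $N = \tilde B$ together with the lower bound $p \geq 1/\gamma_k$ from Lemma~A.2 yields
\begin{equation*}
\mathbb{E}\!\left[\frac{1}{m_{K,\ell}^W}\right] \;\leq\; \frac{1}{\tilde B \, p} \;\leq\; \frac{\gamma_k}{\tilde B},
\end{equation*}
as required. Note that the warm-up step guarantees $m_{K,\ell}^W \geq 1$ almost surely, so the reciprocal is always well defined; the assumption $\tilde B > 0$ ensures the binomial distribution is non-degenerate.

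The only subtle step is the inverse-moment identity, but this is standard; everything else is algebraic manipulation and direct substitution from Lemma~A.2. No asymptotic or concentration argument is needed, since we only seek an upper bound on the expectation and Jensen's inequality would point in the wrong direction (the map $x\mapsto 1/x$ is convex).
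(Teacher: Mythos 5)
Your proposal is correct and follows essentially the same route as the paper: write $m_{K,\ell}^W = \bar m_{K,\ell}^W + 1$, use the closed form $\mathbb{E}[1/(X+1)] = \frac{1-(1-p)^{N+1}}{(N+1)p} \leq \frac{1}{Np} = \frac{1}{\mathbb{E}[X]}$ for a binomial $X$, and then plug in the lower bound on $\mathbb{E}[\bar m_{K,\ell}^W]$ from the preceding lemma. The only cosmetic difference is that you derive the inverse-moment identity directly, whereas the paper cites it from Chao and Strawderman (1972, Eq.~(3.4)).
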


\begin{proof}
In the following, we apply equation (3.4) in \citep{Chao.1972}, stating 
\begin{equation*}
    \mathbb{E} \left[ \frac{1}{X + 1} \right] = \frac{1-(1-p)^{m+1}}{(m+1)p} \leq \frac{1}{mp} = \frac{1}{\mathbb{E}[X]} ,
\end{equation*}

for any binomially distributed random variable $X \sim Bin(m,p)$.
Due to \textsc{\texttt{WarmUp}} we have $m_{K,\ell}^W = \bar{m}_{K,\ell}^W + 1$, since it guarantees exactly one sample for each stratum.
Next, \cref{lem:ExpectedSamples} allows us to substitute $X$ with $\bar{m}_{K,\ell}^W$ and we obtain:

\begin{equation*}
    \mathbb{E} \left[ \frac{1}{m_{K,\ell}^W} \right] =  \left[ \frac{1}{\bar{m}_{K,\ell}^W + 1}\right] \leq \frac{1}{\mathbb{E}\left[ \bar{m}_{K,\ell}^W \right]} \leq \frac{\gamma_k}{\tilde{B}}.
\end{equation*}
\end{proof}

\subsection{Variance and Mean Squared Error}

\begin{lemma} \label{lem:VarianceGivenSampleNumbers}
    For any $K \in \mathcal{N}_k$, given the sample numbers $m_{K,\ell}^W$ for all $W \subseteq K$ and $\ell \in \mathcal{L}_k^{|W|}$, the variance of the estimate $\hat{I}_K$ is given by
    \begin{equation*}
        \mathbb{V} \left[ \hat{I}_K \mid \left( m_{K,\ell}^W \right)_{\ell \in \mathcal{L}, W \subseteq K} \right] =  \sum\limits_{W \subseteq K} \sum\limits_{\ell \in \mathcal{L}_k^{|W|}} \binom{n-k}{\ell}^2 \lambda_{k,\ell}^2 \cdot \frac{\sigma_{K,\ell,W}^2}{m_{K,\ell}^W} .
    \end{equation*}
\end{lemma}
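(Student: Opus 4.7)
The plan is to decompose $\hat{I}_K$ as a linear combination of the stratum estimates $\hat{I}_{K,\ell}^W$, identify which of these contribute randomness, and show that the contributing estimates are conditionally independent given the sample counts. Then the variance of $\hat{I}_K$ collapses to a weighted sum of the individual strata variances, each of which is the variance of an i.i.d. sample mean.

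First, I would split the definition
\begin{equation*}
\hat{I}_K = \sum_{\ell=0}^{n-k}\binom{n-k}{\ell}\lambda_{k,\ell}\sum_{W\subseteq K}(-1)^{k-|W|}\hat{I}_{K,\ell}^W
\end{equation*}
along the partition $\mathcal{I}_{\text{exp}} \cup \mathcal{I}_{\text{imp}}$. For $I_{K,\ell}^W \in \mathcal{I}_{\text{exp}}$ the estimate $\hat{I}_{K,\ell}^W$ equals $I_{K,\ell}^W$ deterministically (cf.\ \textsc{\texttt{ComputeBorders}}), hence contributes zero variance and drops out. The remaining contributions range over $W\subseteq K$ and $\ell\in\mathcal{L}_k^{|W|}$, matching exactly the index set on the right-hand side of the lemma.

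Next, I would argue conditional independence of the surviving strata estimates. Each coalition $A_b$ sampled in the loop, together with each explicit evaluation from \textsc{\texttt{ComputeBorders}} and \textsc{\texttt{WarmUp}}, updates exactly one stratum per $K$, namely the one indexed by $W = A_b \cap K$ and $\ell = |A_b|-|W|$. Hence, for a fixed $K$, the collections of coalitions feeding distinct pairs $(\ell, W)$ are disjoint. Conditioning on the tuple $(m_{K,\ell}^W)_{\ell,W}$ means conditioning on how many of the draws fell into each stratum; given these counts, the assignments of particular coalitions to the respective strata become independent across strata, and within each stratum the $m_{K,\ell}^W$ coalitions $A_{K,\ell,1}^W,\ldots,A_{K,\ell,m_{K,\ell}^W}^W$ are i.i.d.\ uniform on $\{S\subseteq\mathcal{N}\setminus K:|S|=\ell\}$ (the latter is the same exchangeability observation used in the proof of \cref{lem:UnbiasedStrata}).

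Given these two ingredients, the finish is a direct computation. Conditional independence of the $\hat{I}_{K,\ell}^W$ lets me pull variance through the outer sum:
\begin{equation*}
\mathbb{V}\bigl[\hat{I}_K \,\big|\, (m_{K,\ell}^W)\bigr] = \sum_{W\subseteq K}\sum_{\ell\in\mathcal{L}_k^{|W|}}\binom{n-k}{\ell}^2\lambda_{k,\ell}^2 \,\mathbb{V}\bigl[\hat{I}_{K,\ell}^W \,\big|\, m_{K,\ell}^W\bigr],
\end{equation*}
where the sign factors $(-1)^{k-|W|}$ square away. Each inner variance is that of a sample mean of $m_{K,\ell}^W$ i.i.d.\ draws from the distribution of $\nu(A_{K,\ell}\cup W)$, i.e.\ $\sigma_{K,\ell,W}^2/m_{K,\ell}^W$. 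Substituting yields the claim. The main obstacle, and the one place where care is needed, is justifying the conditional independence across strata: the counts $m_{K,\ell}^W$ are themselves correlated via the multinomial-like draw of sizes and coalitions, so one must verify that conditioning on the full count tuple removes all such dependence, leaving only the independent within-stratum fluctuations.
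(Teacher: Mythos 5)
Your proposal is correct and follows essentially the same route as the paper's proof: drop the explicitly computed strata (zero variance), observe that each sampled coalition updates exactly one stratum per $K$ so all cross-covariances vanish given the count tuple, and reduce each surviving term to the variance $\sigma_{K,\ell,W}^2/m_{K,\ell}^W$ of a sample mean of i.i.d.\ uniform draws. If anything, you are more explicit than the paper (which invokes Bienaym\'e's identity and simply asserts independence of distinct strata estimates) in justifying why conditioning on the full tuple $(m_{K,\ell}^W)$ leaves independent, uniformly distributed within-stratum samples.
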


\begin{proof}
First, we split the variance of $\hat{I}_K$ with the help of Bienaym\'e's identity into the variances of the strata estimates and their covariances.
Then we make use of the fact that each sample to update a stratum is effectively drawn uniformly:

\begin{align*}
    & \ \mathbb{V} \left[ \hat{I}_K \mid \left( m_{K,\ell}^W \right)_{\ell \in \mathcal{L}, W \subseteq K} \right] \\
    = & \ \mathbb{V} \left[ \sum\limits_{\ell=0}^{n-k} \binom{n-k}{\ell} \lambda_{k,\ell} \sum\limits_{W \subseteq K} (-1)^{k-|W|} \cdot \hat{I}_{K,\ell}^W \mid \left( m_{K,\ell}^W \right)_{\ell \in \mathcal{L}, W \subseteq K} \right] \\
     = & \ \mathbb{V} \left[ \sum\limits_{\ell=0}^{n-k} \sum\limits_{W \subseteq K} \binom{n-k}{\ell} \lambda_{k,\ell} \cdot (-1)^{k-|W|} \cdot \hat{I}_{K,\ell}^W \mid \left( m_{K,\ell}^W \right)_{\ell \in \mathcal{L}, W \subseteq K} \right] \\
    = & \ \sum\limits_{\ell=0}^{n-k} \sum\limits_{W \subseteq K} \binom{n-k}{\ell}^2 \lambda_{k,\ell}^2 \mathbb{V} \left[ \hat{I}_{K,\ell}^W \mid m_{K,\ell}^W \right] \\
    & \ + \sum\limits_{\substack{\ell \in \mathcal{L}_k \\ W \subseteq K }} \sum\limits_{\substack{\ell' \in \mathcal{L}_k \\ W' \subseteq K \\ \ell \neq \ell' \lor W \neq W'}} \binom{n-k}{\ell} \binom{n-k}{\ell'} \lambda_{k,\ell} \lambda_{k,\ell'} \cdot (-1)^{2k - |W| - |W'|} \cdot \text{Cov} \left( \hat{I}_{K,\ell}^W, \hat{I}_{K,\ell'}^{W'} \mid m_{K,\ell}^W, m_{K,\ell'}^{W'} \right) \\
    = & \ \sum\limits_{\ell=0}^{n-k} \binom{n-k}{\ell}^2 \lambda_{k,\ell}^2 \sum\limits_{W \subseteq K} \mathbb{V} \left[ \hat{I}_{K,\ell}^W \mid m_{K,\ell}^W \right] \\
    = & \ \sum\limits_{W \subseteq K} \sum\limits_{\ell \in \mathcal{L}_k^{|W|}} \binom{n-k}{\ell}^2 \lambda_{k,\ell}^2 \mathbb{V} \left[ \hat{I}_{K,\ell}^W \mid m_{K,\ell}^W \right] \\
    = & \ \sum\limits_{W \subseteq K} \sum\limits_{\ell \in \mathcal{L}_k^{|W|}} \binom{n-k}{\ell}^2 \lambda_{k,\ell}^2 \mathbb{V} \left[ \frac{1}{m_{K,\ell}^W}\sum\limits_{m=1}^{m_{K,\ell}^W} \nu(A_{K,\ell,m}^W) \mid m_{K,\ell}^W \right] \\
    = & \ \sum\limits_{W \subseteq K} \sum\limits_{\ell \in \mathcal{L}_k^{|W|}} \binom{n-k}{\ell}^2 \lambda_{k,\ell}^2  \cdot \frac{\sigma_{K,\ell,W}^2}{m_{K,\ell}^W}.
\end{align*}

The strata estimates $\hat{I}_{K,\ell}^W$ and $\hat{I}_{K,\ell'}^{W'}$ are independent for $W \neq W'$ or $\ell = \ell'$ because each sampled coalition $A_b$ can only be used to update one estimate.
Consequently, their covariance is zero.
Finally, the variances of the estimates for the explicitly calculated strata are zero and thus eliminated.
\end{proof}

\textbf{Theorem 4.2.}
\textit{
    For any $K \in \mathcal{N}_k$ the variance of the estimate $\hat{I}_K$ returned by SVARM-IQ is bounded by
    \begin{equation*}
        \mathbb{V} \left[ \hat{I}_K \right] \leq \frac{\gamma_k}{\tilde{B}} \sum\limits_{W \subseteq K} \sum\limits_{\ell \in \mathcal{L}_k^{|W|}} \binom{n-k}{\ell}^2 \lambda_{k,\ell}^2 \sigma_{K,\ell,W}^2.
    \end{equation*}
}

\begin{proof}
We combine the variance of each estimate variance conditioned on the sample numbers given by \cref{lem:VarianceGivenSampleNumbers} with the bound on the expected inverted total sample numbers given by \cref{lem:InvertedSamples}:
\begin{align*}
    \mathbb{V} \left[ \hat{I}_K \right] = & \ \mathbb{E}_{\left( m_{K,\ell}^W \right)_{\ell \in \mathcal{L}, W \subseteq K}} \left[ \mathbb{V} \left[ \hat{I}_K \mid \left( m_{K,\ell}^W \right)_{\ell \in \mathcal{L}, W \subseteq K} \right] \right] \\
    = & \ \mathbb{E}_{\left( m_{K,\ell}^W \right)_{\ell \in \mathcal{L}, W \subseteq K}} \left[ \sum\limits_{W \subseteq K} \sum\limits_{\ell \in \mathcal{L}_k^{|W|}} \binom{n-k}{\ell}^2 \lambda_{k,\ell}^2 \cdot \frac{\sigma_{K,\ell,W}^2}{m_{K,\ell}^W} \right] \\
    = & \ \sum\limits_{W \subseteq K} \sum\limits_{\ell \in \mathcal{L}_k^{|W|}} \binom{n-k}{\ell}^2 \lambda_{k,\ell}^2 \sigma_{K,\ell,W}^2 \cdot \mathbb{E} \left[ \frac{1}{m_{K,\ell}^W} \right] \\
    \leq & \ \frac{\gamma_k}{\tilde{B}} \sum\limits_{W \subseteq K} \sum\limits_{\ell \in \mathcal{L}_k^{|W|}} \binom{n-k}{\ell}^2 \lambda_{k,\ell}^2 \sigma_{K,\ell,W}^2.
\end{align*}
\end{proof}

\textbf{Corollary 4.3.}
\textit{
    For any $K \in \mathcal{N}_k$ the mean squared error of the estimate $\hat{I}_K$ returned by SVARM-IQ is bounded by
    \begin{equation*}
        \mathbb{E} \left[ \left( \hat{I}_K - I_K \right)^2 \right] \leq \frac{\gamma_k}{\tilde{B}} \sum\limits_{W \subseteq K} \sum\limits_{\ell \in \mathcal{L}_k^{|W|}} \binom{n-k}{\ell}^2 \lambda_{k,\ell}^2 \sigma_{K,\ell,W}^2.
    \end{equation*}
}

\begin{proof}
The bias-variance decomposition allows us to decompose the mean squared error into the bias of $\hat{I}_K$ and its variance.
Since we have shown the estimate's unbiasedness in Theorem 4.1
, we can reduce it to its variance bounded in Theorem 4.2
:
\begin{align*}
    \mathbb{E} \left[ \left( \hat{I}_K - I_K \right)^2 \right] = & \ \left( \mathbb{E} \left[ \hat{I}_K - I_K \right] \right)^2 + \mathbb{V} \left[ \hat{I}_K \right] \\
    = & \ \mathbb{V} \left[ \hat{I}_K \right] \\
    \leq & \ \frac{\gamma_k}{\tilde{B}} \sum\limits_{W \subseteq K} \sum\limits_{\ell \in \mathcal{L}_k^{|W|}} \binom{n-k}{\ell}^2 \lambda_{k,\ell}^2 \sigma_{K,\ell,W}^2.
\end{align*}
\end{proof}

\subsection{Threshold Exceedence Probability} \label{app:epsilon_delta}
\textbf{Corollary 4.4.}
\textit{
    For any $K \in \mathcal{N}_k$ and fixed $\varepsilon > 0$ the absolute error of the estimate $\hat{I}_K$ returned by SVARM-IQ exceeds $\varepsilon$ with a probability of at most
    \begin{equation*}
        \mathbb{P} \left( | \hat{I}_K - I_K | \geq \varepsilon \right) \leq \frac{\gamma_k}{\varepsilon^2 \tilde{B}} \sum\limits_{W \subseteq K} \sum\limits_{\ell \in \mathcal{L}_k^{|W|}} \binom{n-k}{\ell}^2 \lambda_{k,\ell}^2 \sigma_{K,\ell,W}^2. 
    \end{equation*}
}

\begin{proof}
We apply Chebychev's inequality and make use of the variance bound in Theorem 4.2: 
\begin{equation*}
    \mathbb{P} \left( | \hat{I}_K - I_K | \geq \varepsilon \right) \leq \frac{\mathbb{V} \left[\hat{I}_k\right]}{\varepsilon^2} \leq \frac{\gamma_k}{\varepsilon^2 \bar{B}_k} \sum\limits_{W \subseteq K} \sum\limits_{\ell \in \mathcal{L}_k^{|W|}} \binom{n-k}{\ell}^2 \lambda_{k,\ell}^2 \sigma_{K,\ell,W}^2.
\end{equation*}
\end{proof}

\begin{lemma} \label{lem:HoeffdingStrataGivenSamples}
    For the stratum estimate $\hat{I}_{K,\ell}^W$ of any $K \in \mathcal{N}_k$ with $W \subseteq K$, $\ell \in \mathcal{L}_k^{|W|}$, and some fixed $\varepsilon > 0$ holds
    \begin{equation*}
        \mathbb{P} \left( | \hat{I}_{K,\ell}^W - I_{K,\ell}^W | \geq \varepsilon \mid m_{K,\ell}^W \right) \leq 2 \exp \left( - \frac{2 m_{K,\ell}^W \varepsilon^2}{r_{K,\ell,W}^2} \right).
    \end{equation*}
\end{lemma}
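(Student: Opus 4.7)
The plan is to apply Hoeffding's inequality directly, once we establish that, conditional on $m_{K,\ell}^W$, the stratum estimate $\hat{I}_{K,\ell}^W$ is the mean of $m_{K,\ell}^W$ i.i.d.\ bounded random variables with expectation $I_{K,\ell}^W$.

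First I would argue that the coalitions $A_{K,\ell,1}^W, \ldots, A_{K,\ell, m_{K,\ell}^W}^W$ contributing to $\hat{I}_{K,\ell}^W$ are, conditionally on the sample count $m_{K,\ell}^W$, i.i.d.\ uniform on $\{S \cup W : S \subseteq \mathcal{N} \setminus K,\ |S|=\ell\}$. In each sampling iteration $b$, coalitions are drawn independently across iterations, and conditional on the event $\{a_b = \ell + |W|,\ A_b \cap K = W\}$ the sampled set $A_b$ is uniform on the eligible coalitions, as already exploited in the proof of Lemma A.1. The initial warm-up sample is drawn from the same uniform distribution, independently of the sampling loop. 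Hence, conditional on $m_{K,\ell}^W = m$, the $m$ contributing samples are i.i.d.\ with the uniform distribution described above, so each $\nu(A_{K,\ell,j}^W \cup W)$ has expectation $I_{K,\ell}^W$ (by Lemma A.1) and takes values in an interval whose length is at most $r_{K,\ell,W}$ by definition of the stratum range.

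Then I would invoke Hoeffding's inequality: for i.i.d.\ bounded random variables $X_1,\ldots,X_m$ with values in an interval of length $r$, the sample mean $\bar{X}_m$ satisfies
\begin{equation*}
    \mathbb{P}\bigl(|\bar{X}_m - \mathbb{E}[\bar{X}_m]| \geq \varepsilon\bigr) \leq 2 \exp\!\left(-\frac{2 m \varepsilon^2}{r^2}\right).
\end{equation*}
Applying this with $X_j = \nu(A_{K,\ell,j}^W \cup W)$, $m = m_{K,\ell}^W$, and $r = r_{K,\ell,W}$ yields exactly the claimed bound.

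I do not expect any real obstacle here: the only nontrivial point is the conditional i.i.d.\ structure, which is already implicit in the arguments used for Lemma A.1 and Lemma A.2. The rest is a direct application of a standard concentration inequality to bounded i.i.d.\ sums.
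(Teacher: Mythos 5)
Your proposal is correct and follows essentially the same route as the paper: both reduce the claim to Hoeffding's inequality applied to the $m_{K,\ell}^W$ conditionally i.i.d.\ bounded samples of the stratum, using the unbiasedness from Lemma~\ref{lem:UnbiasedStrata} to identify $I_{K,\ell}^W$ with the conditional mean and $r_{K,\ell,W}$ as the range. The only difference is that you spell out the conditional i.i.d.\ uniform structure of the contributing samples explicitly, which the paper leaves implicit.
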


\begin{proof}
We combine Hoeffding's inequality with the unbiasedness of the strata estimates shown in \cref{lem:UnbiasedStrata} and obtain:
    \begin{align*}
        & \ \mathbb{P} \left( | \hat{I}_{K,\ell}^W - I_{K,\ell}^W | \geq \varepsilon \mid m_{K,\ell}^W \right) \\
        = & \ \mathbb{P} \left( | \hat{I}_{K,\ell}^W - \mathbb{E}[\hat{I}_{K,\ell}^W] | \geq \varepsilon \mid m_{K,\ell}^W \right) \\
        = & \ \mathbb{P} \left( \Bigg| \sum\limits_{m=1}^{m_{K,\ell}^W} \nu(A_{S,\ell,m}^W) - \mathbb{E} \left[ \sum\limits_{m=1}^{m_{K,\ell}^W} \nu(A_{K,\ell,m}^W) \right] \Bigg| \geq m_{K,\ell}^W \varepsilon \mid m_{K,\ell}^W  \right) \\
        \leq & \ 2 \exp \left( - \frac{2 m_{K,\ell}^W \varepsilon^2}{r_{K,\ell,W}^2} \right) .
    \end{align*}
\end{proof}

\begin{lemma} \label{lem:HoeffdingStrata}
    For any $K \in \mathcal{N}_k$ with $W \subseteq K$, $\ell \in \mathcal{L}_k^{|W|}$ and some fixed $\varepsilon > 0$ holds
    \begin{equation*}
        \mathbb{P} \left( | \hat{I}_{K,\ell}^W - I_{K,\ell}^W | \geq \varepsilon \right) \leq \exp \left( - \frac{\tilde{B}}{2\gamma_k^2} \right) + 2 \frac{\exp \left( - \frac{2 \varepsilon^2}{r_{K,\ell,W}^2} \right)^{\left\lfloor \frac{\tilde{B}}{2\gamma_k} \right\rfloor}}{\exp \left( \frac{2 \varepsilon^2}{r_{K,\ell,W}^2} \right) - 1}.
    \end{equation*}
\end{lemma}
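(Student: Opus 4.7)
The plan is to remove the conditioning on $m_{K,\ell}^W$ in the preceding lemma by splitting over a suitably chosen threshold on the effective sample size. Concretely, I would set $M^\star := \lfloor \tilde{B}/(2\gamma_k) \rfloor$ and decompose
\begin{equation*}
    \mathbb{P}\bigl(|\hat{I}_{K,\ell}^W - I_{K,\ell}^W| \geq \varepsilon\bigr) = \sum_{m=1}^{\tilde{B}+1} \mathbb{P}(m_{K,\ell}^W = m)\,\mathbb{P}\bigl(|\hat{I}_{K,\ell}^W - I_{K,\ell}^W| \geq \varepsilon \mid m_{K,\ell}^W = m\bigr),
\end{equation*}
and split the sum at $M^\star$. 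For $m \leq M^\star$ I would simply upper-bound the conditional deviation probability by $1$ and control $\mathbb{P}(m_{K,\ell}^W \leq M^\star)$; for $m \geq M^\star+1$ I would invoke the conditional Hoeffding bound from the preceding lemma.

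\textbf{Handling the tail of the sample-count.} Recall from the earlier lemma that $\bar{m}_{K,\ell}^W = m_{K,\ell}^W - 1$ is binomially distributed with $\mathbb{E}[\bar{m}_{K,\ell}^W] \geq \tilde{B}/\gamma_k$. Hoeffding's inequality for a $\mathrm{Bin}(\tilde{B}, p)$ variable yields $\mathbb{P}(\bar{m}_{K,\ell}^W \leq \mathbb{E}[\bar{m}_{K,\ell}^W] - t) \leq \exp(-2t^2/\tilde{B})$. Choosing $t = \tilde{B}/(2\gamma_k)$ (which is legitimate because $M^\star - 1 \leq \tilde{B}/(2\gamma_k) - 1 \leq \mathbb{E}[\bar{m}_{K,\ell}^W] - \tilde{B}/(2\gamma_k)$) gives
\begin{equation*}
    \mathbb{P}(m_{K,\ell}^W \leq M^\star) \leq \exp\!\left(-\frac{2(\tilde{B}/(2\gamma_k))^2}{\tilde{B}}\right) = \exp\!\left(-\frac{\tilde{B}}{2\gamma_k^2}\right),
\end{equation*}
which produces the first term of the bound.

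\textbf{Handling the large-sample regime.} For $m \geq M^\star + 1$, substituting the conditional bound of the preceding lemma and upper-bounding $\mathbb{P}(m_{K,\ell}^W = m)$ by $1$ inside the sum gives
\begin{equation*}
    \sum_{m=M^\star+1}^{\tilde{B}+1} \mathbb{P}(m_{K,\ell}^W = m) \cdot 2\exp\!\left(-\tfrac{2m\varepsilon^2}{r_{K,\ell,W}^2}\right) \leq 2\sum_{m=M^\star+1}^{\infty} q^m,
\end{equation*}
with $q := \exp(-2\varepsilon^2/r_{K,\ell,W}^2) \in (0,1)$. The geometric series evaluates to $q^{M^\star+1}/(1-q)$, and multiplying numerator and denominator by $q^{-1} = \exp(2\varepsilon^2/r_{K,\ell,W}^2)$ converts this into $q^{M^\star}/(\exp(2\varepsilon^2/r_{K,\ell,W}^2)-1)$, which matches the second term in the stated bound. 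Adding the two regimes concludes the proof.

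\textbf{Main obstacle.} The only subtle point is the calibration of the threshold $M^\star$: it must be small enough that the Hoeffding-on-Binomial bound yields exactly $\exp(-\tilde{B}/(2\gamma_k^2))$, yet large enough that the geometric tail collapses to $q^{M^\star}$ rather than $q^{M^\star+1}$ after the $q^{-1}$ factor is absorbed. Keeping track of the off-by-one arising from the warm-up shift $m_{K,\ell}^W = \bar{m}_{K,\ell}^W + 1$ and of the floor in $M^\star$ is the main bookkeeping challenge; everything else is a routine combination of Hoeffding's inequality with the earlier conditional lemma.
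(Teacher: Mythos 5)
Your proposal is correct and follows essentially the same route as the paper's proof: split at the threshold $\lfloor \tilde{B}/(2\gamma_k) \rfloor$, control $\mathbb{P}(m_{K,\ell}^W \leq \lfloor \tilde{B}/(2\gamma_k) \rfloor)$ via Hoeffding applied to the binomial count $\bar{m}_{K,\ell}^W$ using $\mathbb{E}[\bar{m}_{K,\ell}^W] \geq \tilde{B}/\gamma_k$, and sum the conditional Hoeffding bound as a geometric series over the large-sample regime. Your bookkeeping of the warm-up shift and the floor is handled the same way as in the paper, so there is nothing to correct.
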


\begin{proof}
We start by deriving with Hoeffding's inequality and \cref{lem:ExpectedSamples} a bound on the probability that $\bar{m}_{K,\ell}^W$ falls below $\frac{\tilde{B}}{2\gamma_k}$:
\begin{align*}
    & \ \mathbb{P} \left( \bar{m}_{K,\ell}^W \leq \frac{\tilde{B}}{2\gamma_k} \right) \\
    = & \ \mathbb{P} \left( \mathbb{E} \left[ \bar{m}_{K,\ell}^W \right] - \bar{m}_{K,\ell}^W \geq \mathbb{E} \left[ \bar{m}_{K,\ell}^W \right] - \frac{\tilde{B}}{2\gamma_k} \right) \\
    \leq & \ \exp \left( - \frac{2 \left( \mathbb{E} \left[ \bar{m}_{K,\ell}^W \right] - \frac{\tilde{B}}{2\gamma_k} \right)^2}{\tilde{B}} \right) \\
    \leq & \ \exp \left( - \frac{\tilde{B}}{2\gamma_k^2} \right) .
\end{align*}

Further, we show with \cref{lem:HoeffdingStrataGivenSamples} another statement:
\begin{align*}
    & \ \sum\limits_{m = \left\lfloor \frac{\tilde{B}}{2\gamma_k} \right\rfloor +1}^{\tilde{B}+1} \mathbb{P} \left( | \hat{I}_{K,\ell}^W - I_{K,\ell}^W | \geq \varepsilon \mid m_{K,\ell}^W = m \right) \\
    \leq & \ 2 \sum\limits_{m = \left\lfloor \frac{\tilde{B}}{2\gamma_k} \right\rfloor + 1}^{\tilde{B}+1} \exp \left( - \frac{2 m \varepsilon^2}{r_{K,\ell,W}^2} \right) \\
    = & \ 2 \sum\limits_{m=0}^{\tilde{B}+1} \exp \left( - \frac{2 \varepsilon^2}{r_{K,\ell,W}^2} \right)^m - 2 \sum\limits_{m=0}^{\left\lfloor \frac{\tilde{B}}{2\gamma_k}\right\rfloor} \exp \left( - \frac{2 \varepsilon^2}{r_{K,\ell,W}^2} \right)^m \\
    = & \ 2 \frac{\exp \left( - \frac{2 \varepsilon^2}{r_{K,\ell,W}^2} \right)^{\left\lfloor \frac{\tilde{B}}{2\gamma_k} \right\rfloor} - \exp \left( - \frac{2 \varepsilon^2}{r_{K,\ell,W}^2} \right)^{\tilde{B}+1}}{\exp \left( \frac{2 \varepsilon^2}{r_{K,\ell,W}^2} \right) - 1} \\
    \leq & \ 2 \frac{\exp \left( - \frac{2 \varepsilon^2}{r_{K,\ell,W}^2} \right)^{\left\lfloor \frac{\tilde{B}}{2\gamma_k} \right\rfloor}}{\exp \left( \frac{2 \varepsilon^2}{r_{K,\ell,W}^2} \right) - 1} .
\end{align*}

\newpage

Finally, we combine both intermediate results and obtain:

\begin{align*}
    & \ \mathbb{P} \left( | \hat{I}_{K,\ell}^W - I_{K,\ell}^W | \geq \varepsilon \right) \\
    \leq & \ \sum\limits_{m=1}^{\tilde{B}+1} \mathbb{P} \left( | \hat{I}_{K,\ell}^W - I_{K,\ell}^W | \geq \varepsilon \mid m_{K,\ell}^W = m \right) \cdot \mathbb{P} \left( m_{K,\ell}^W = m \right) \\
    = & \ \sum\limits_{m=1}^{\left\lfloor \frac{\tilde{B}}{2\gamma_k} \right\rfloor} \mathbb{P} \left( | \hat{I}_{K,\ell}^W - I_{K,\ell}^W | \geq \varepsilon \mid m_{K,\ell}^W = m \right) \cdot \mathbb{P} \left( m_{K,\ell}^W = m \right) \\
    & \ + \sum\limits_{m=\left\lfloor \frac{\tilde{B}}{2\gamma_k} \right\rfloor+1}^{\bar{B}_k+1} \mathbb{P} \left( | \hat{I}_{K,\ell}^W - I_{K,\ell}^W | \geq \varepsilon \mid m_{K,\ell}^W = m \right) \cdot \mathbb{P} \left( m_{K,\ell}^W = m \right) \\
    \leq & \ \mathbb{P} \left( m_{K,\ell}^W \leq \left\lfloor \frac{\tilde{B}}{2\gamma_k} \right\rfloor \right) + \sum\limits_{m=\left\lfloor \frac{\tilde{B}}{2\gamma_k} \right\rfloor+1}^{\tilde{B}+1} \mathbb{P} \left( | \hat{I}_{K,\ell}^W - I_{S,\ell}^W | \geq \varepsilon \mid m_{K,\ell}^W = m \right)  \\
    \leq & \ \exp \left( - \frac{\tilde{B}}{2\gamma_k^2} \right) + 2 \frac{\exp \left( - \frac{2 \varepsilon^2}{r_{K,\ell,W}^2} \right)^{\left\lfloor \frac{\tilde{B}}{2\gamma_k} \right\rfloor}}{\exp \left( \frac{2 \varepsilon^2}{r_{K,\ell,W}^2} \right) - 1} .
\end{align*}
\end{proof}

\textbf{Theorem 4.5.}
\textit{
    For any $K \in \mathcal{N}_k$ and fixed $\varepsilon > 0$ the absolute error of the estimate $\hat{I}_K$ exceeds $\varepsilon$ with probability of at most
    \begin{equation*}
        \mathbb{P} \left( | \hat{I}_K - I_K | \geq \varepsilon \right) \leq \sum\limits_{W \subseteq K} \sum\limits_{\ell \in \mathcal{L}_k^{|W|}} \exp \left( - \frac{\tilde{B}}{2\gamma_k^2} \right) + 2 \frac{\exp \left( - \frac{2 \varepsilon^2}{\binom{n-k}{\ell}^2 \lambda_{k,\ell}^2 R_K^2} \right)^{\left\lfloor \frac{\tilde{B}}{2\gamma_k} \right\rfloor}}{\exp \left( \frac{2 \varepsilon^2}{\binom{n-k}{\ell}^2 \lambda_{k,\ell}^2 R_K^2} \right) - 1}.
    \end{equation*}
}

\newpage

\begin{proof}
We derive the result by applying \cref{lem:HoeffdingStrata} and utilizing the fact that for all explicitly computed strata $I_{K,\ell}^W \in \mathcal{I}_{\text{exp}}$ holds $\hat{I}_{K,\ell}^W = I_{K,\ell}^W$:
\begin{align*}
    & \ \mathbb{P} \left( | \hat{I}_K - I_K | \geq \varepsilon \right) \\
    = & \ \mathbb{P} \left( \Bigg| \sum\limits_{\ell=0}^{n-k} \sum\limits_{W \subseteq K} \binom{n-k}{\ell} \lambda_{k,\ell} (-1)^{k-|W|} \left( \hat{I}_{K,\ell}^W - I_{K,\ell}^W \right) \Bigg| \geq \varepsilon \right) \\
    \leq & \ \mathbb{P} \left( \sum\limits_{\ell=0}^{n-k} \sum\limits_{W \subseteq K} \binom{n-k}{\ell} \lambda_{k,\ell} \Bigg| \hat{I}_{K,\ell}^W - I_{K,\ell}^W \Bigg| \geq \varepsilon \right) \\
    = & \ \mathbb{P} \left( \sum\limits_{W \subseteq K} \sum\limits_{\ell \in \mathcal{L}_k^{|W|}} \binom{n-k}{\ell} \lambda_{k,\ell} \Bigg| \hat{I}_{K,\ell}^W - I_{K,\ell}^W \Bigg| \geq \varepsilon\right) \\
    \leq & \ \sum\limits_{W \subseteq K} \sum\limits_{\ell \in \mathcal{L}_k^{|W|}} \mathbb{P} \left( \binom{n-k}{\ell} \lambda_{k,\ell} \Bigg| \hat{I}_{K,\ell}^W - I_{K,\ell}^W\Bigg| \geq \frac{\varepsilon r_{K,\ell,W}}{R_K} \right) \\
    = & \ \sum\limits_{W \subseteq K} \sum\limits_{\ell \in \mathcal{L}_k^{|W|}} \mathbb{P} \left( \Bigg| \hat{I}_{K,\ell}^W - I_{K,\ell}^W\Bigg| \geq \frac{\varepsilon r_{K,\ell,W}}{\binom{n-k}{\ell} \lambda_{k,\ell} R_K} \right) \\
    \leq & \ \sum\limits_{W \subseteq K} \sum\limits_{\ell \in \mathcal{L}_k^{|W|}} \exp \left( - \frac{\tilde{B}}{2\gamma_k^2} \right) + 2 \frac{\exp \left( - \frac{2 \varepsilon^2}{\binom{n-k}{\ell}^2 \lambda_{k,\ell}^2 R_K^2} \right)^{\left\lfloor \frac{\tilde{B}}{2\gamma_k} \right\rfloor}}{\exp \left( \frac{2 \varepsilon^2}{\binom{n-k}{\ell}^2 \lambda_{k,\ell}^2 R_K^2} \right) - 1} .
\end{align*}
\end{proof}

\clearpage
\section{DESCRIPTION OF MODELS, DATASETS AND EXPLANATION TASKS}
\label{app:models_datasets}

We briefly sketched the datasets and models on which our cooperative games, used for the experiments, are built.
Hence, we provide further details and sources to allow for reproducibility.
Note that the LM, CNN, and SOUM are akin to \citep{Fumagalli.2023}.

\subsection{Language Model (LM)}
\label{app:models_LM}

We used a pretrained sentiment analysis model for movie reviews.
To be more specific, it s a variant of \texttt{DistilBert}, fine-tuned on the IMDB dataset, and its python version can be found in the \emph{transformers} API \citep{Wolf_Transformers_State-of-the-Art_Natural_2020} at \url{https://huggingface.co/lvwerra/distilbert-imdb}.
The explanation task is to explain the model's sentiment rating between $-1$ and $1$ for randomly selected instances, where positive model outputs indicate positive sentiment.
The features, which are words in this case, are removed on the token level, meaning that tokens of missing values are removed from the input sequence of words, shortening the sentence.
Thus, a coalition within a given sentence is given by the sequence containing only the words associated with each each player of that coalition.
The value function is given by the model's sentiment rating.


\subsection{Vision Transformer (ViT)}
\label{app:models_ViT}

The ViT is, similar to the LM, a transformer model. Unlike the LM, the ViT operates on image patches instead of words.
The python version of the underlying ViT model can be found in the \emph{transformers} API at \url{https://huggingface.co/google/vit-base-patch32-384}.
It originally consists of 144 32x32 pixel image patches, 12 patches for each column and row.
In order to calculate the ground truth values exhaustively via brute force, we cluster smaller input patches together into 3x3 images containing 9 patches in total or into 4x4 images containing 16 patches in total.
Patches of a cluster are jointly turned on and off depending on whether the cluster is part of the coalition or not.
Players, represented by image patches, that are not present in a coalition are removed on the token level and their token is set to the empty token.
The worth of a coalition is the model's predicted class probability for the class which has the highest probability for the grand coalition (the original image with no patches removed) and is therefore within $[0, 1]$.


\subsection{Convolutional Neural Network (CNN)}
\label{app:models_CNN}

The next local explanation scenario is based on a ResNet18\footnote{https://pytorch.org/vision/main/models/generated/torchvision.models.resnet18.html} model \citep{resnet18} trained on ImageNet \citep{ImageNet}.
The task is to explain the predicted class probability for randomly selected images from ImageNet \citep{ImageNet}.
In order to obtain a player set, we use SLIC \citep{achanta2012slic} to merge single pixels to 14 super-pixels.
Each super-pixel corresponds to a player in the resulting cooperative game, and a coalition of players entails the associated super-pixels.
Absent super-pixel players are removed by setting the contained pixels to grey (mean-imputation).
The worth of a coalition is given by the model's predicted class probability, using only the present super-pixels, for the predicted class of the full image with all super-pixels at hand.


\subsection{Sum Of Unanimity Models (SOUM)}
\label{app:models_SOUM}

We further consider synthetic cooperative games, for which the computation of the ground truth values is feasible within polynomial time.
For a given player set $\mathcal{N}$ with $n$ many players, we draw $D=50$ interaction subsets $S_1,\ldots,S_D \subseteq \mathcal{N}$ uniformly at random from the power set of $\mathcal{N}$.
Next, we draw for each interaction subset $S_d$ a coefficient $c_d \in [0,1]$ uniformly at random.
The value function is simply constructed by defining
\begin{equation*}
    \nu(S) = \sum\limits_{d=1}^D c_d \cdot  \llbracket S_d \subseteq S \rrbracket
\end{equation*}
for all coalitions $S \subseteq \mathcal{N}$.
We generate 50 instances of such synthetic games and average the approximation results.
To our advantage, this construction yields a polynomial closed-form solution of the underlying CII values \citep{Fumagalli.2023}, which allows us to use higher player numbers than in real-world explanation scenarios.
For details of the CII computation we refer the interested reader to \citep{Fumagalli.2023}.

\clearpage
\section{FURTHER EMPIRICAL RESULTS}
\label{app:empirical_results}

We conducted more experiments than shown in the main part but had to omit them due to space constraints.
Besides the approximation curves, comparing SVARM-IQ's approximation quality for the SII, STI, and FSI against current baselines measured by the MSE and Prec@10, we present another type of visualization to demonstrate how SVARM-IQ's performance advantage aids in enriching explanations by including interaction effects.

\subsection{Further Results on the Approximation Quality}
\label{app:empirical_results:approx_quality}

This section contains more detailed versions of the figures depicted in the main section.
We compare the approximation quality of SVARM-IQ against baselines for the SII on the LM and ViT in \cref{fig_SII}, for SII, STI, and FSI for CNN in \cref{fig_CNN}, and for SOUM in \cref{fig_SOUM}.

\begin{figure*}[h]
    \centering
    \includegraphics[width=0.32\textwidth]{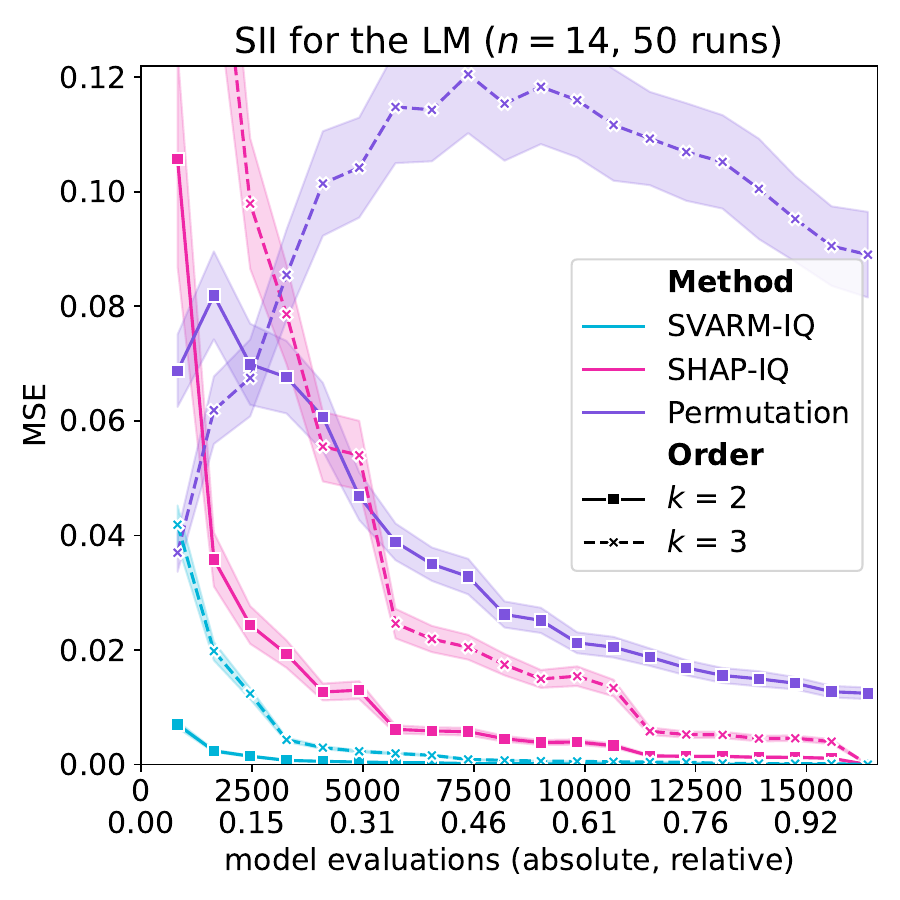}
    \includegraphics[width=0.32\textwidth]{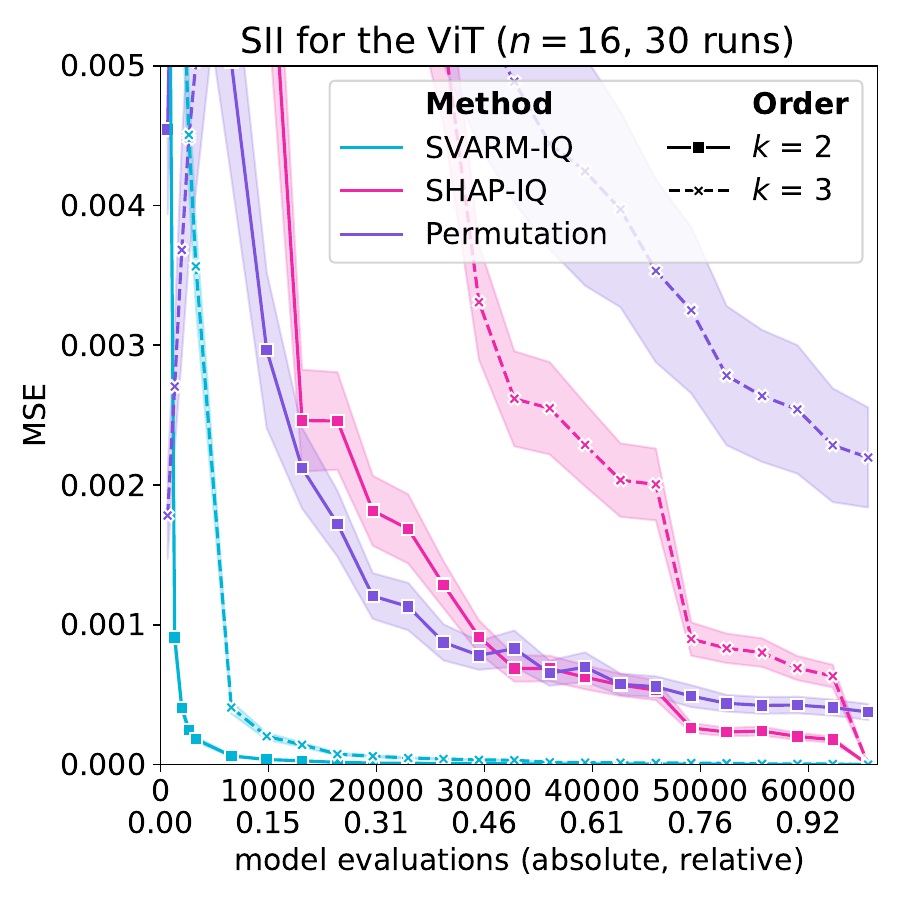}
    \includegraphics[width=0.32\textwidth]{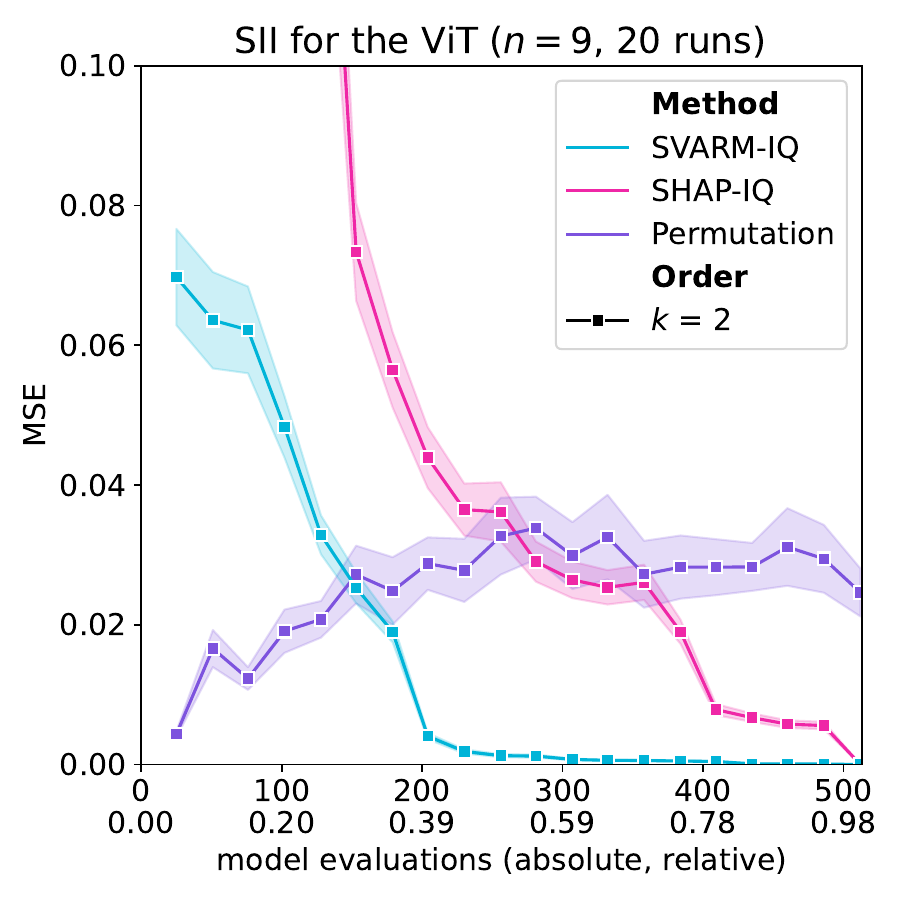} \\
    \includegraphics[width=0.32\textwidth]{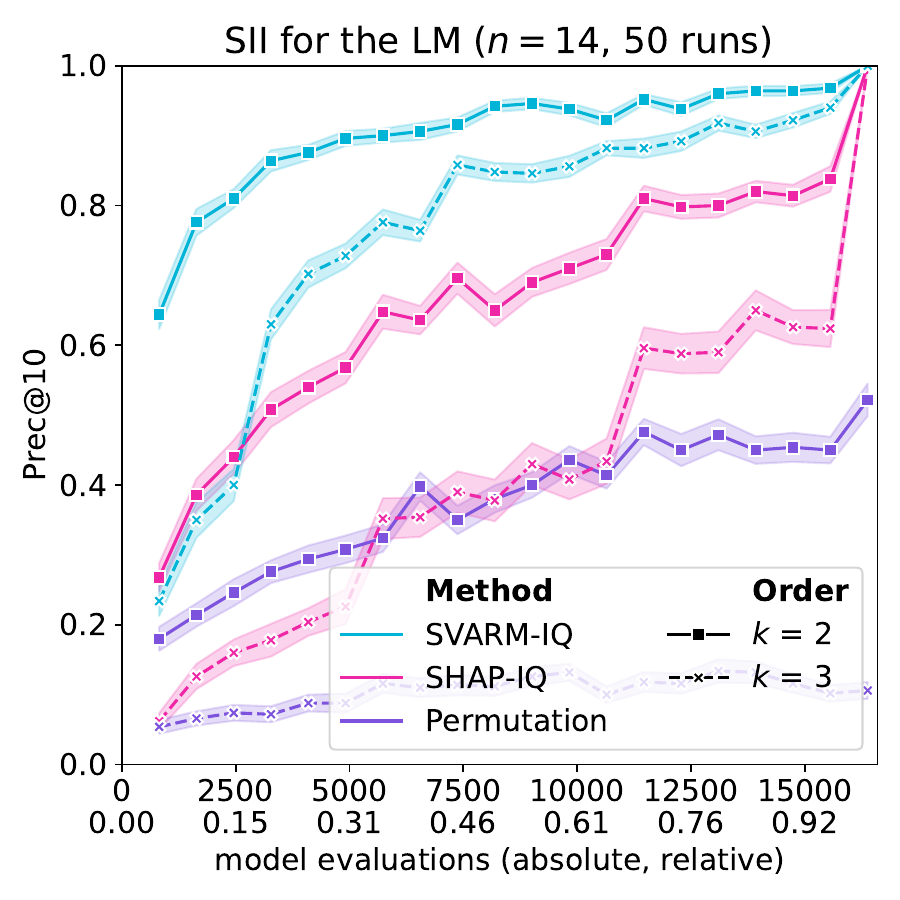}
    \includegraphics[width=0.32\textwidth]{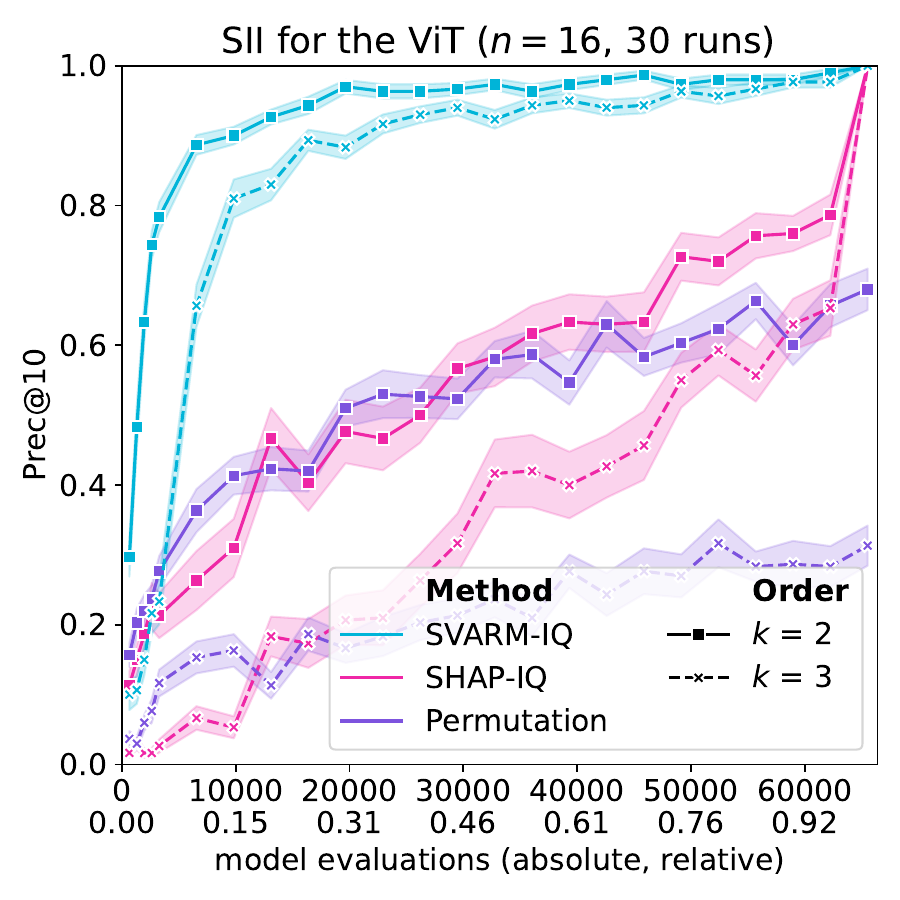}
    \includegraphics[width=0.32\textwidth]{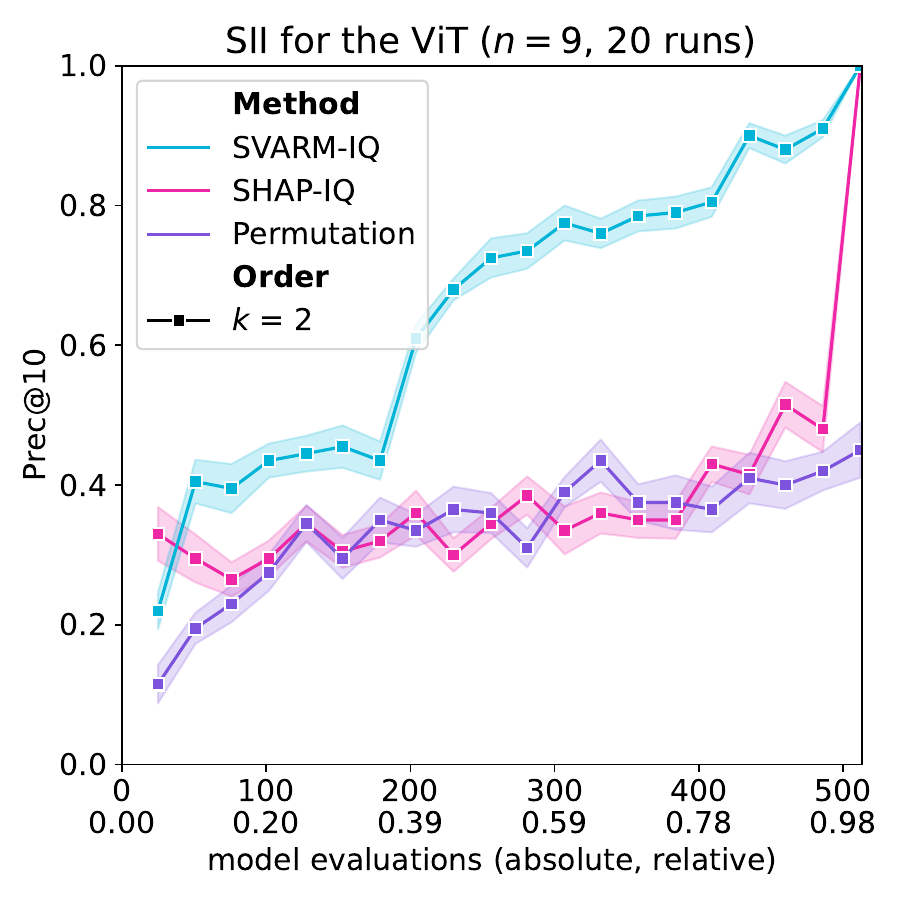}
    \caption{Approximation quality of SVARM-IQ (\textcolor{svarmiq}{blue}) compared to SHAP-IQ (\textcolor{shapiq}{pink}) and permutation sampling (\textcolor{baseline}{purple}) baselines averaged over multiple runs
    for estimating the SII of order $k=2,3$ on the LM (first column, $n=14$, 50 runs) and the ViT (second column, $n=16$, 30 runs; second column, $n=9$, 20 runs). The performance is measured by the MSE (first row) and Prec@10 (second row). The shaded bands represent the standard error over the number of performed runs.}
    \label{fig_SII}
\end{figure*}

\begin{figure*}[h]
    \centering
    \includegraphics[width=0.32\textwidth]{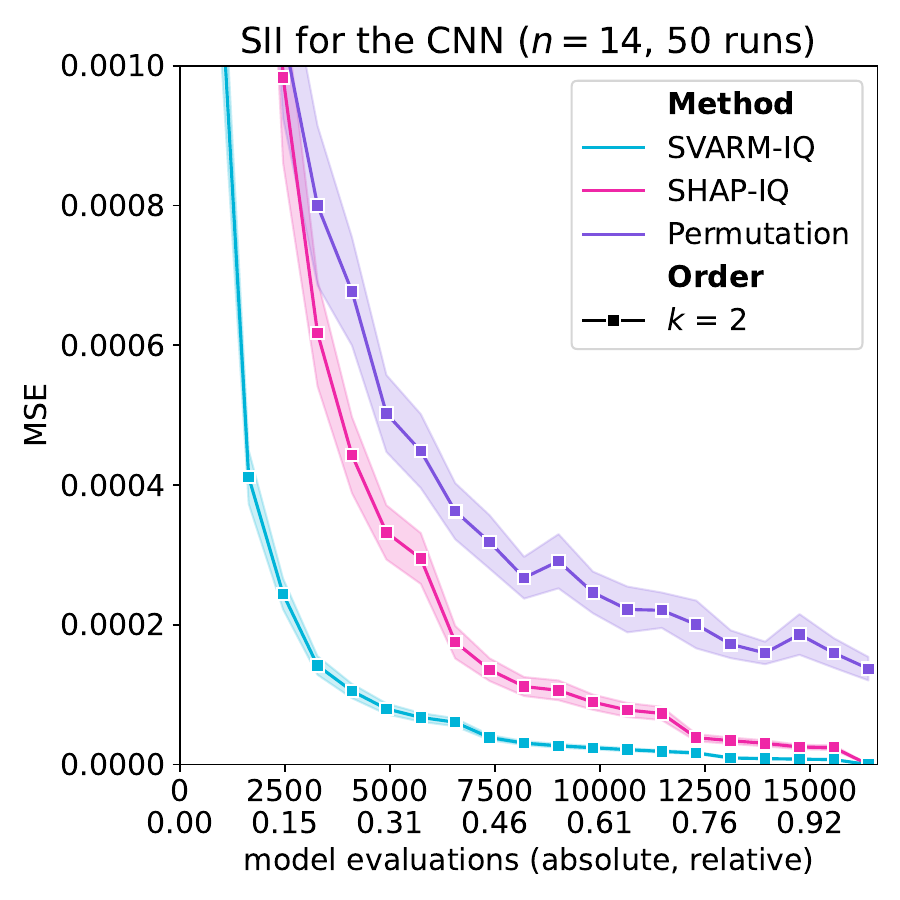}
    \includegraphics[width=0.32\textwidth]{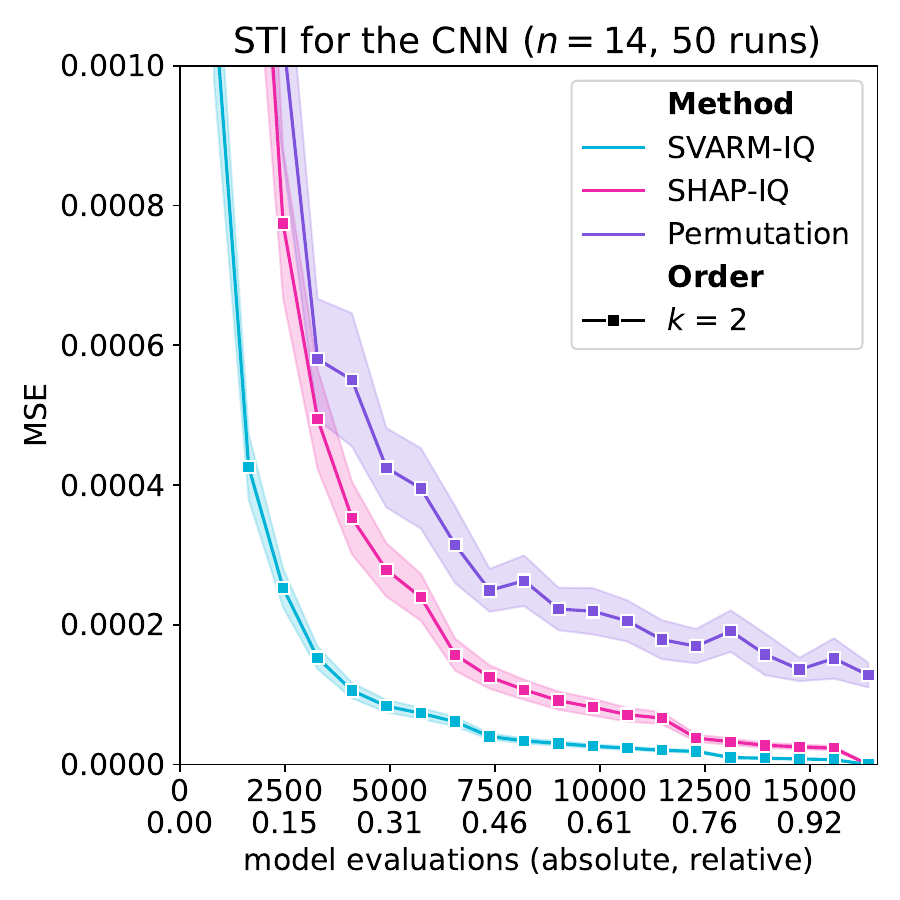}
    \includegraphics[width=0.32\textwidth]{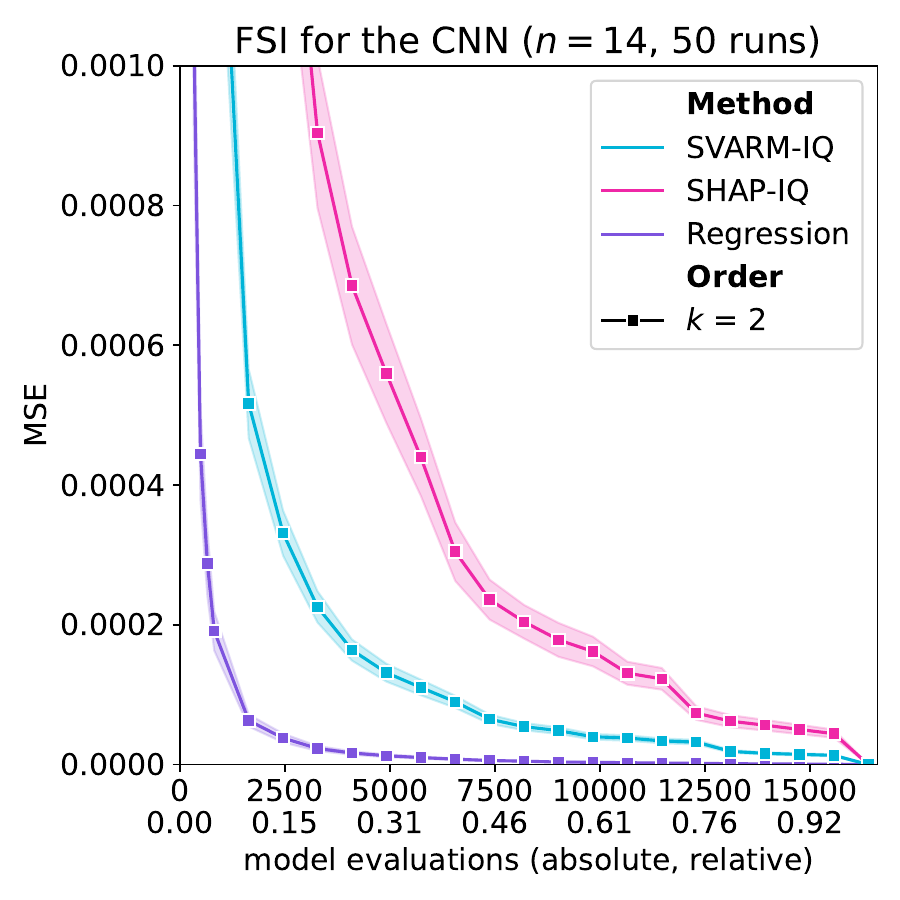} \\
    \includegraphics[width=0.32\textwidth]{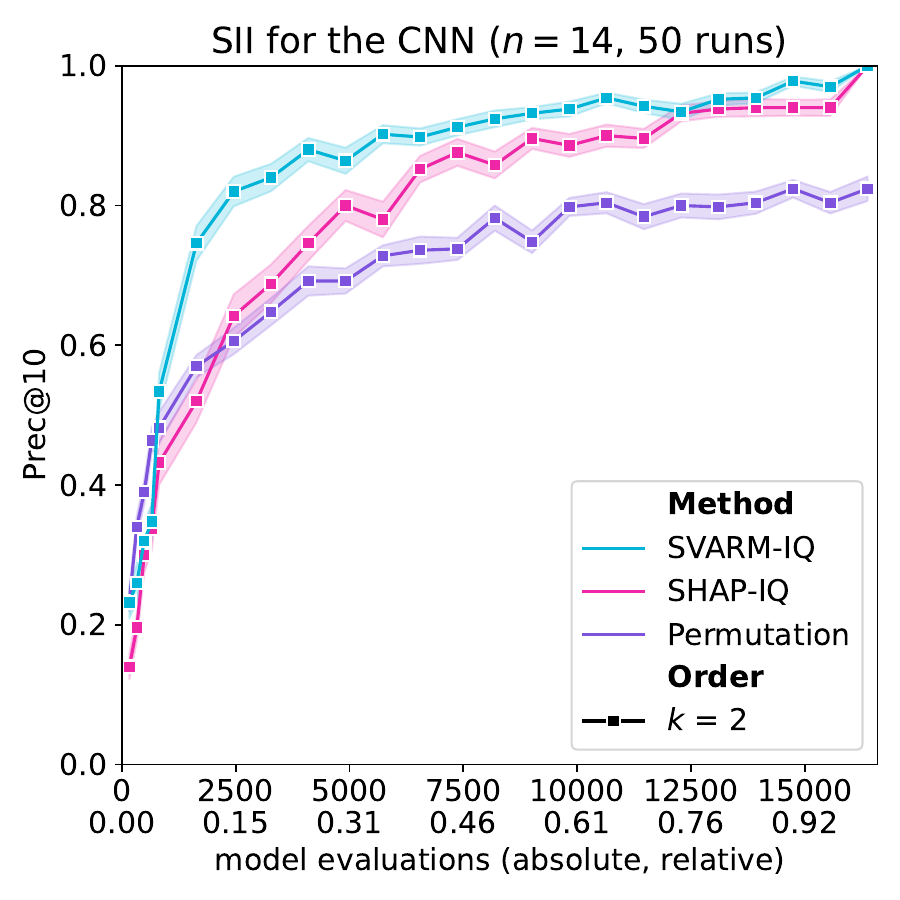}
    \includegraphics[width=0.32\textwidth]{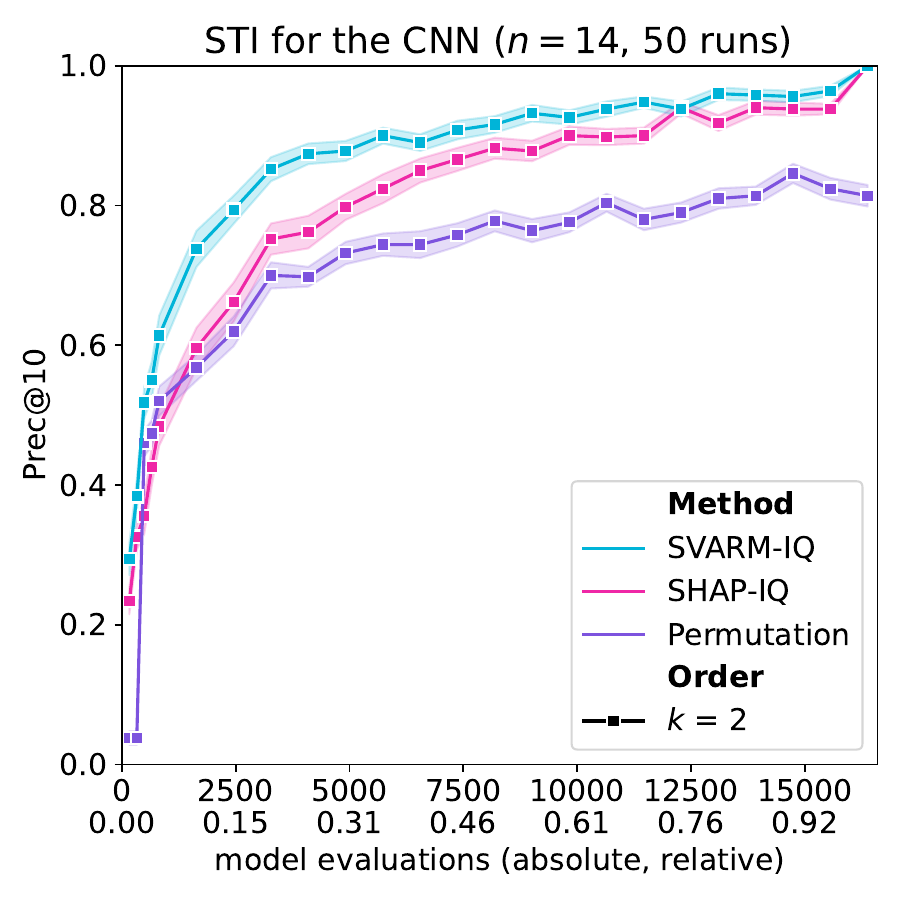}
    \includegraphics[width=0.32\textwidth]{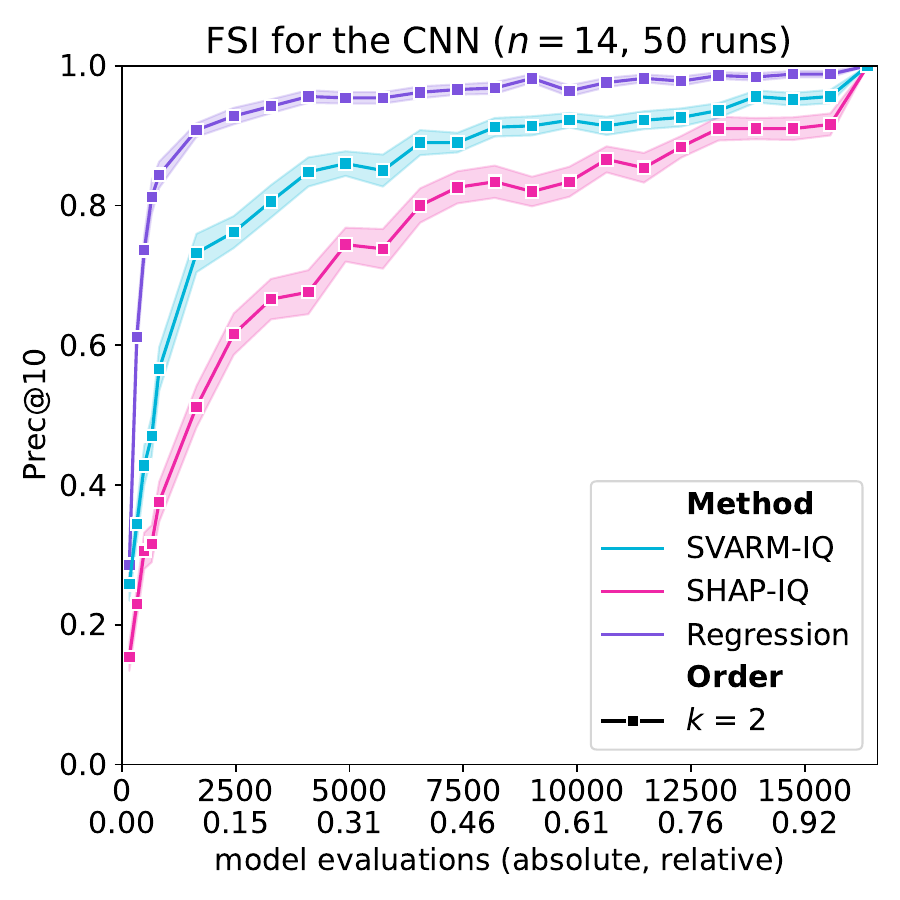}
    \caption{Approximation quality of SVARM-IQ (\textcolor{svarmiq}{blue}) compared to SHAP-IQ (\textcolor{shapiq}{pink}) and permutation sampling (\textcolor{baseline}{purple}) baselines averaged over 50 runs on the CNN for estimating the SII (first column), STI (second column), and FSI (third column) of order $k=2$ for $n=14$. The performance is measured by the MSE (first row) and Prec@10 (second row). The shaded bands represent the standard error over the number of performed runs.}
    \label{fig_CNN}
\end{figure*}

\begin{figure*}[h]
    \centering
    \includegraphics[width=0.32\textwidth]{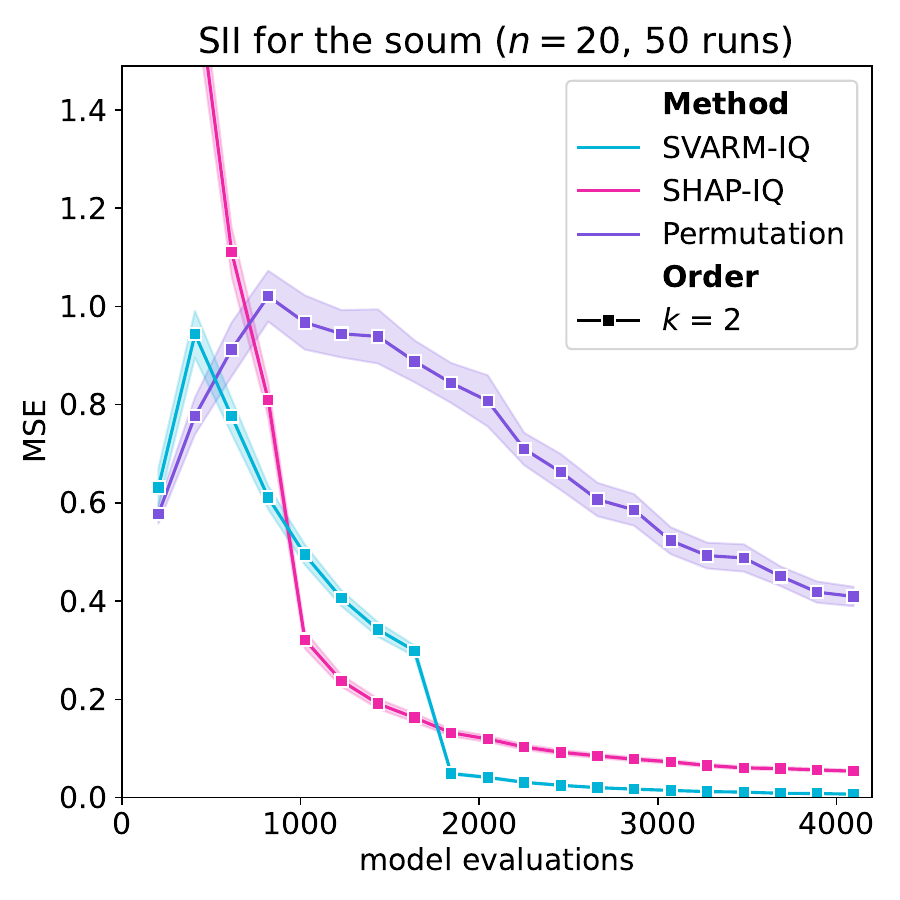}
    \includegraphics[width=0.32\textwidth]{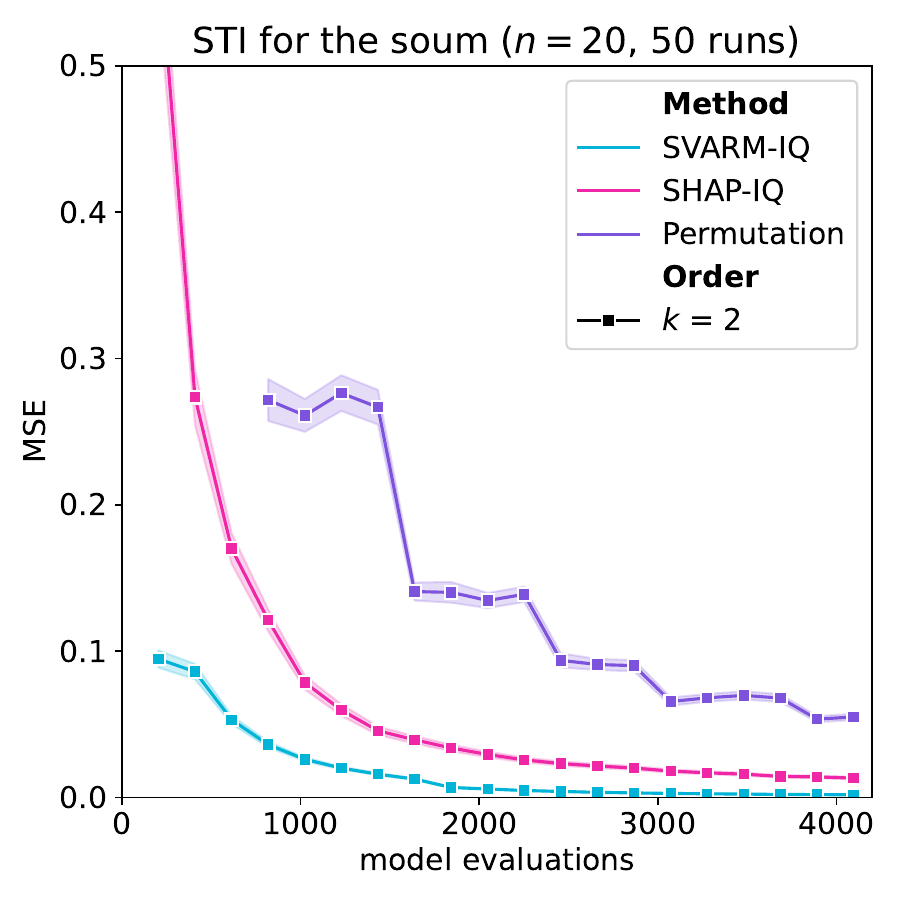}
    \includegraphics[width=0.32\textwidth]{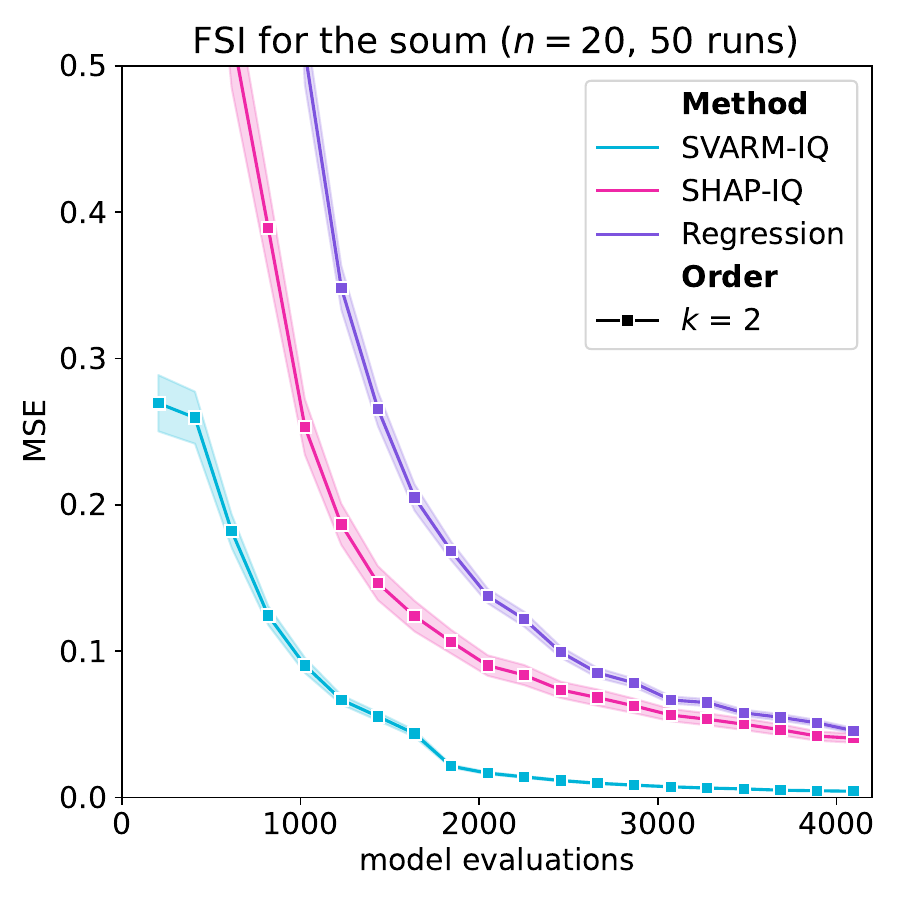} \\
    \includegraphics[width=0.32\textwidth]{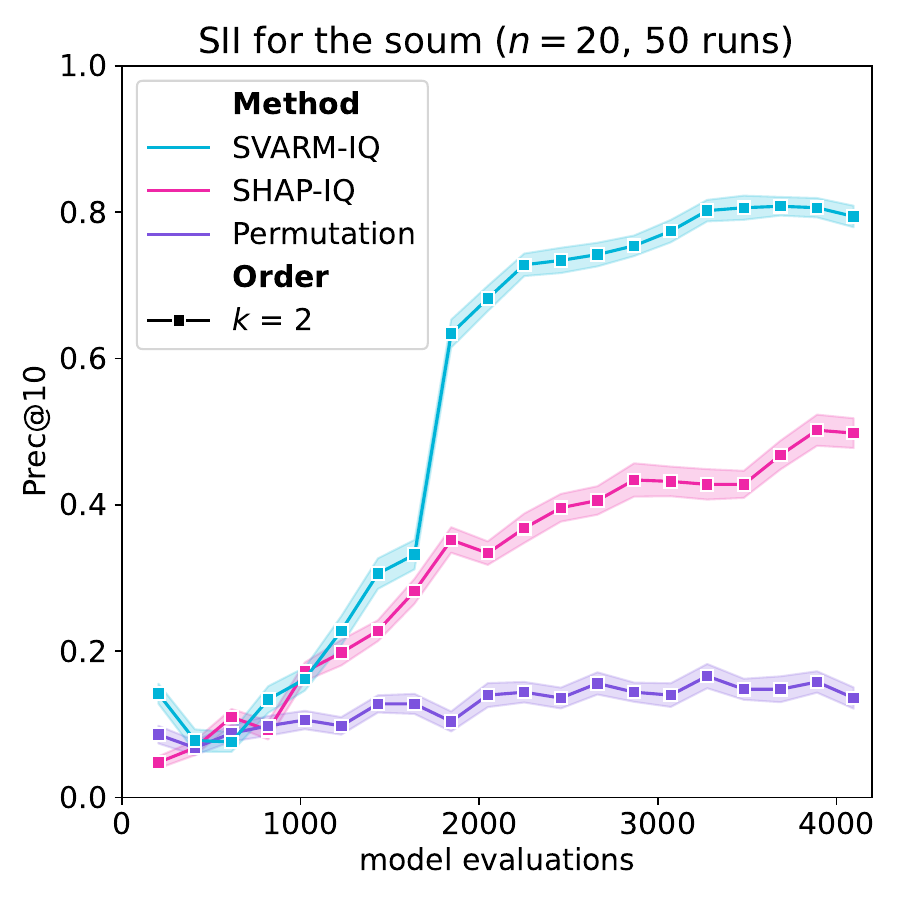}
    \includegraphics[width=0.32\textwidth]{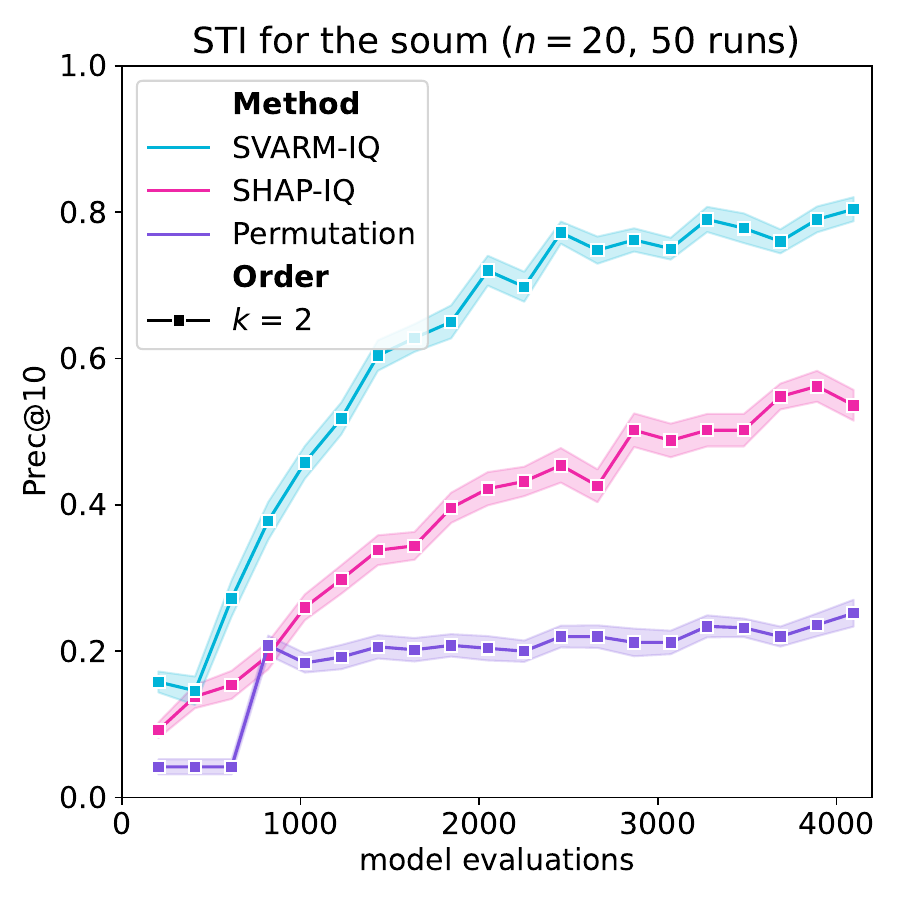}
    \includegraphics[width=0.32\textwidth]{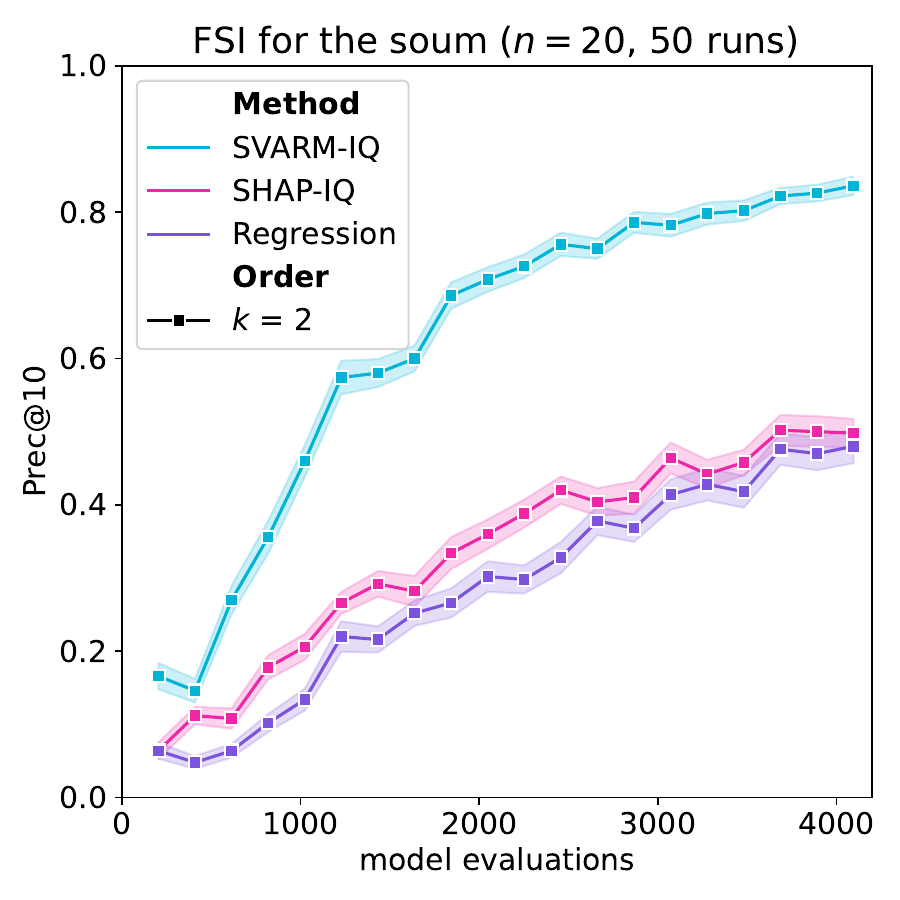}
    \caption{Approximation quality of SVARM-IQ (\textcolor{svarmiq}{blue}) compared to SHAP-IQ (\textcolor{shapiq}{pink}) and permutation sampling (\textcolor{baseline}{purple}) baselines averaged over 50 runs on the SOUM for estimating the SII (first column), STI (second column), and FSI (third column) of order $k=2$ for $n=20$. The performance is measured by the MSE (first row) and Prec@10 (second row). The shaded bands represent the standard error over the number of performed runs.}
    \label{fig_SOUM}
\end{figure*}

\clearpage
\subsection{Further Examples of the Vision Transformer Case Study}
\label{app:empirical_results:vit_examples}

In the following, we demonstrate how the inclusion of interaction besides attributions scores may enrich interpretability and how significantly SVARM-IQ contributes to more reliable explanations due to faster converging interaction estimates.
First, we present in \cref{fig:network_examples} SVARM-IQ's estimates for our ViT scenario, which quantify the importance and interaction of image patches, revealing the insufficiency of sole importance scores and emphasizing the contribution of interaction scores for explaining class predictions for images.
Second, we compare in \cref{fig_network_comp} attribution scores and interaction values estimated by SVARM-IQ and permutation sampling with the ground truth.
Our results showcase that even with a relatively low number of model evaluations SVARM-IQ mirrors the ground truth almost perfectly, while the inaccurate estimates of its competitor pose the visible risk of misleading explanations, thus harming interpretability.

The obtained estimates for the labrador picture in \cref{fig:network_examples} (upper left) allow for a plausible explanation of the model's reasoning.
The most important image patches, those which capture parts of the dogs' heads, share some interesting interaction.
The three patches which contain at least one full eye, might be of high importance, but also exhibit strongly negative pairwise interaction.
This gives us the insight that the addition of such a patch to an existing one contributes on average little to the predicted class probability in comparison to the increase that such a patch causes on its own, plausibly due to redundant information.
In other words, it suffices for the vision transformer to see one patch containing eyes and further patches do not make it much more certain about its predicted class.
On the other side, some patches containing different facial parts show highly positive interaction. For example, the teeth and the pair of eyes complement each other since each of them contains valuable information that is missing in the other patch.
Considering only the importance scores and their ranking would have not led to this interpretation.
Quite the opposite, practitioners would assume most patches to be of equal importance and overlook their insightful interplay.

The comparison of estimates with the ground truth in \cref{fig_network_comp} allows for a twofold conclusion.
The estimates obtained by SVARM-IQ show barely any visible difference to the human eye.
In fact, SVARM-IQ's approximation replicates the ground truth with only a fraction of the number of model evaluations that are necessary for its exact computation.
Hence, it significantly lowers the computational burden for precise explanations.
On the contrary, permutation sampling yields estimated importance and interaction scores which are afflicted with evident imprecision.
Both, the strength and sign of interaction values are estimated with quite severe deviation for the two considered orders.
Hence, the attempt to order the true interactions' strengths or identifying the most influential pairs becomes futile.
This lack in approximation quality has the potential to misguide those who seek for explanations on why the model has predicted a certain class.

\begin{figure*}[h]
    \centering
    \includegraphics[width=0.43\textwidth]{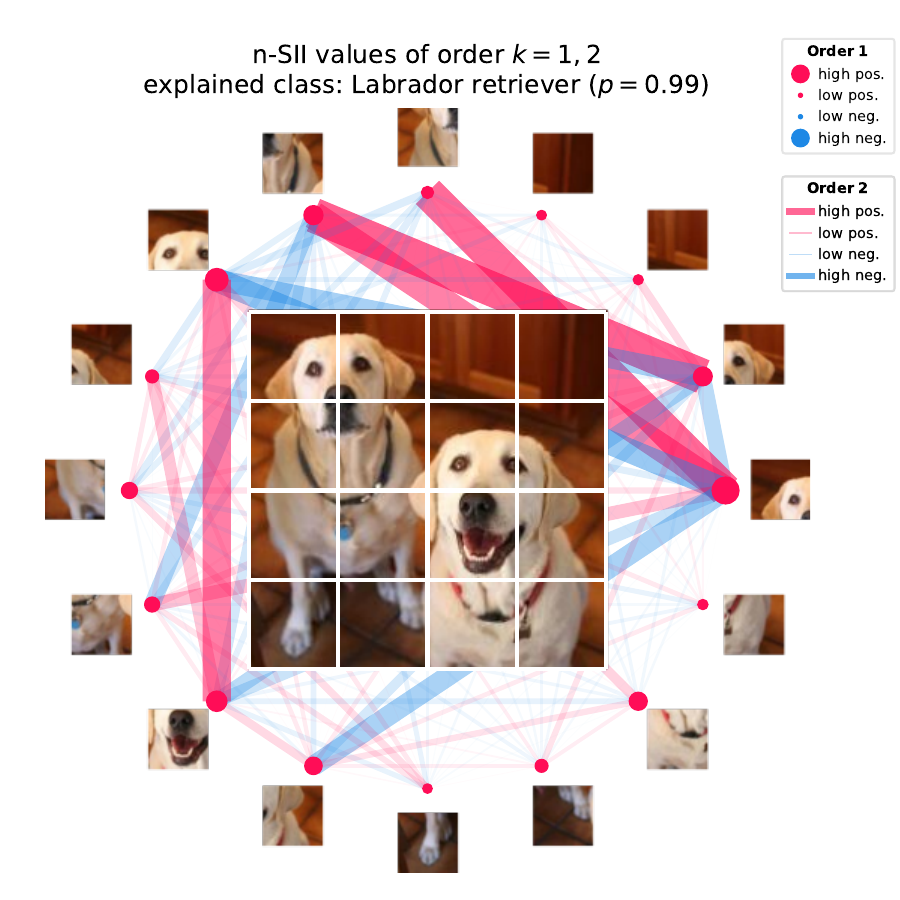}
    \includegraphics[width=0.43\textwidth]{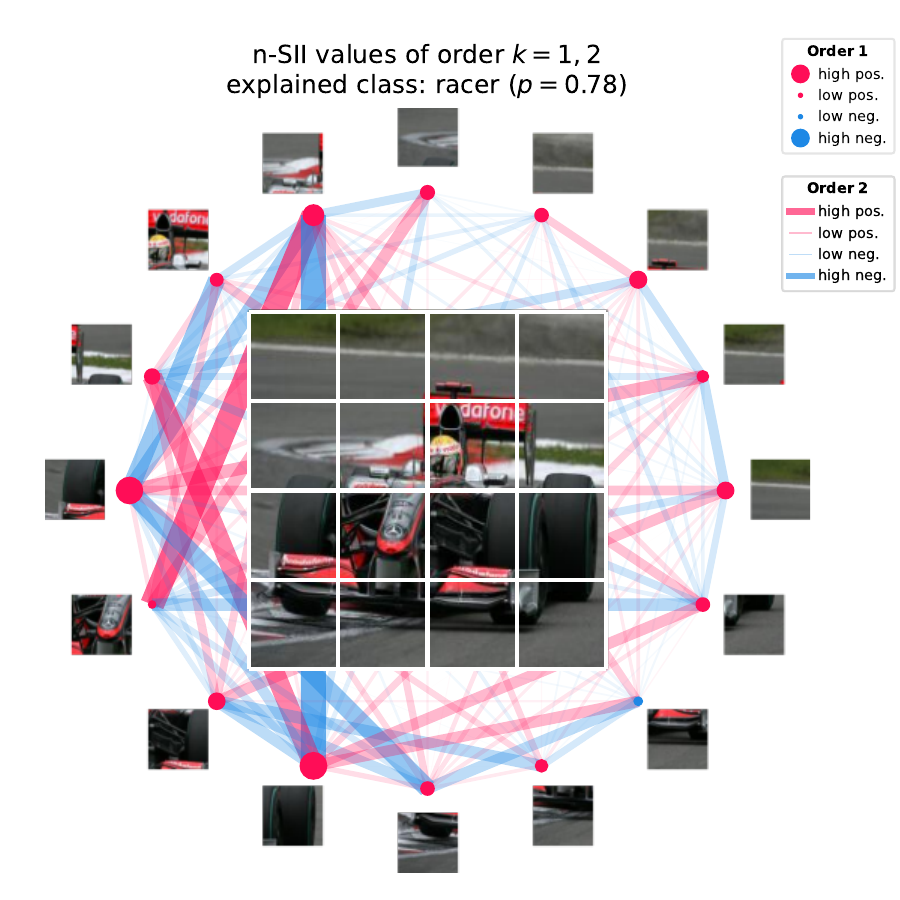}
    \\
    \includegraphics[width=0.43\textwidth]{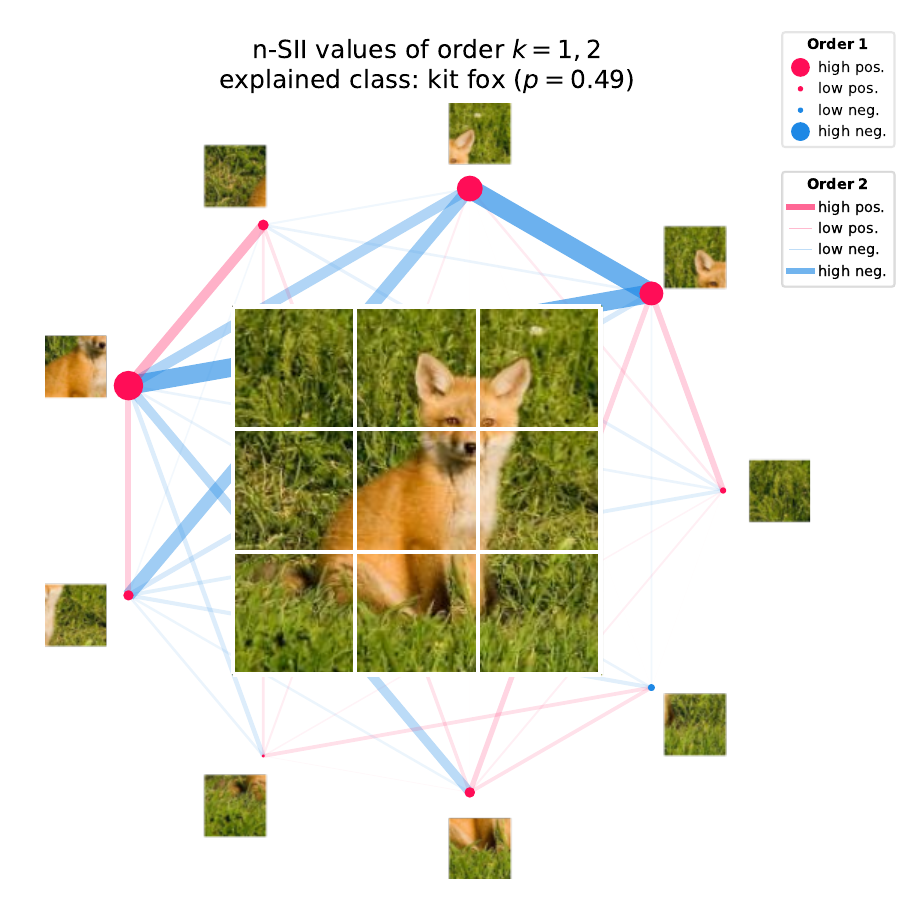}
    \includegraphics[width=0.43\textwidth]{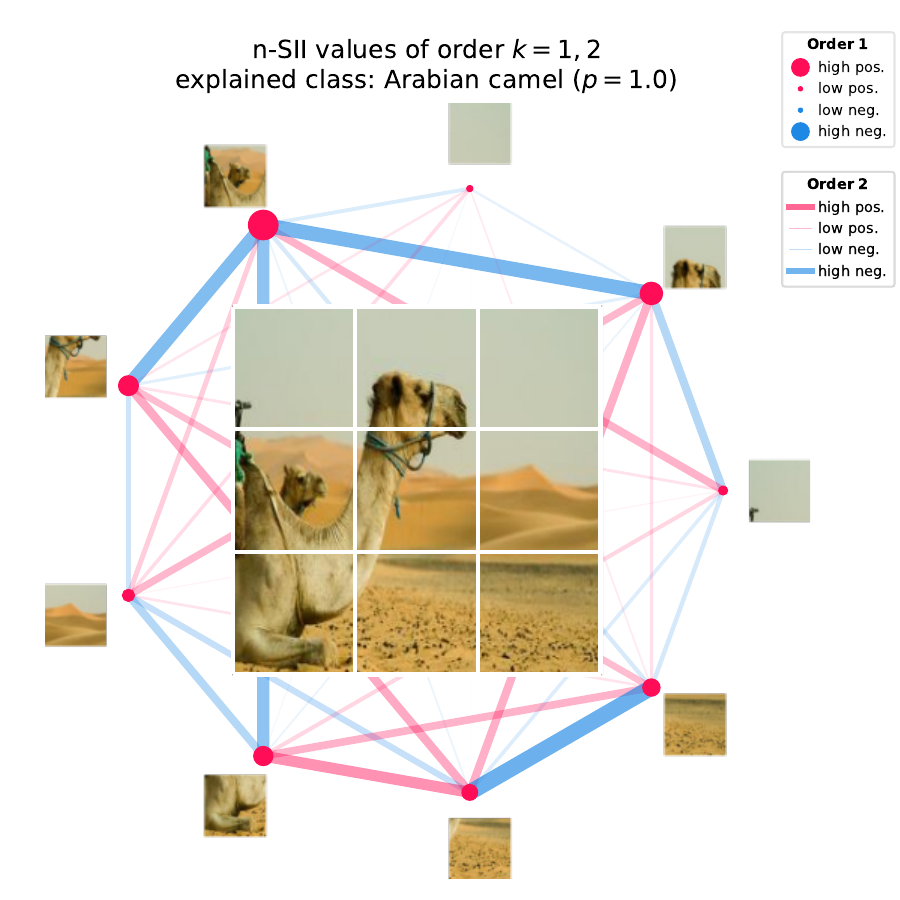}
    \caption{Computed n-SII values of order $k=1,2$ by SVARM-IQ for the predicted class probability of a ViT for selected images taken from ImageNet \citep{ImageNet}. The images are sliced into grids of multiple patches, $n=16$ in the first row and $n=9$ in the second row. The estimates are obtained after single computation runs given a budget of 10000 evaluations for $n=16$ patches and 512 (GTV) for $n=9$ patches.}
    \label{fig:network_examples}
\end{figure*}

\begin{figure*}[h]
    \centering
    \begin{minipage}[c]{0.32\textwidth}
    \centering
    \textbf{Ground Truth}\\model evaluations: $65\,536$
    \end{minipage}
    \begin{minipage}[c]{0.32\textwidth}
    \centering
    \textbf{SVARM-IQ}\\model evaluations: $5\,000$ (7.6\%)
    \end{minipage}
    \begin{minipage}[c]{0.32\textwidth}
    \centering
    \textbf{Permutation}\\model evaluations: $5\,000$ (7.6\%)
    \end{minipage}
    \\[1em]
    \includegraphics[width=0.32\textwidth]{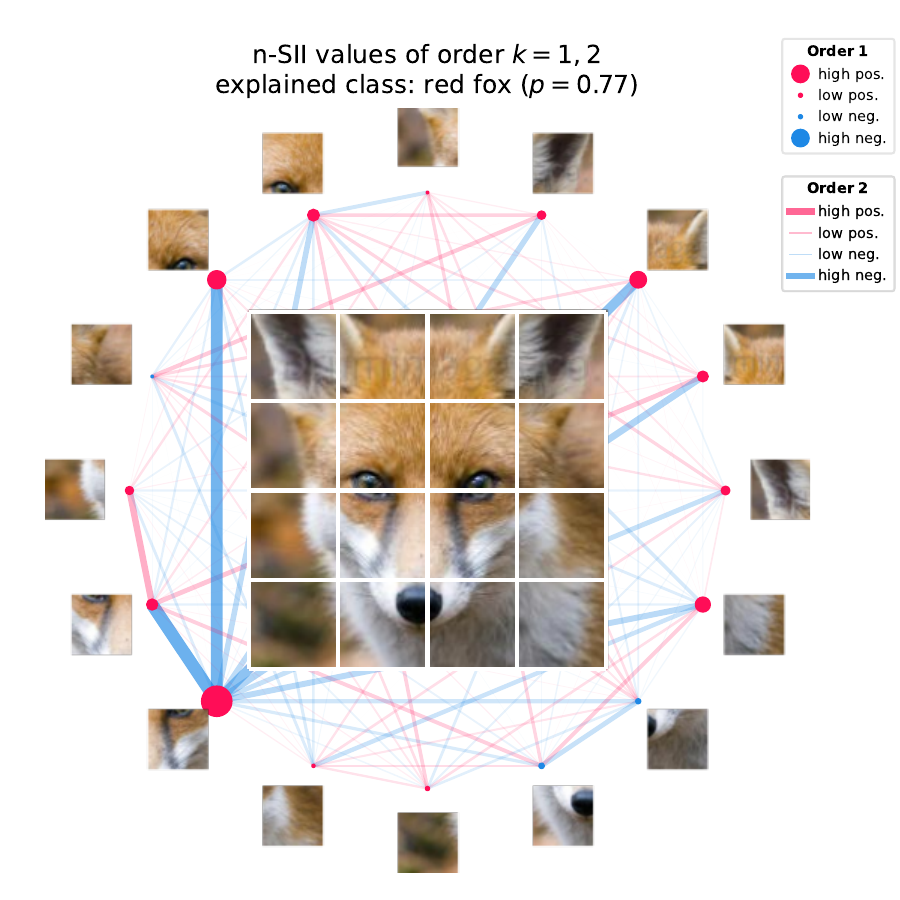} 
    \includegraphics[width=0.32\textwidth]{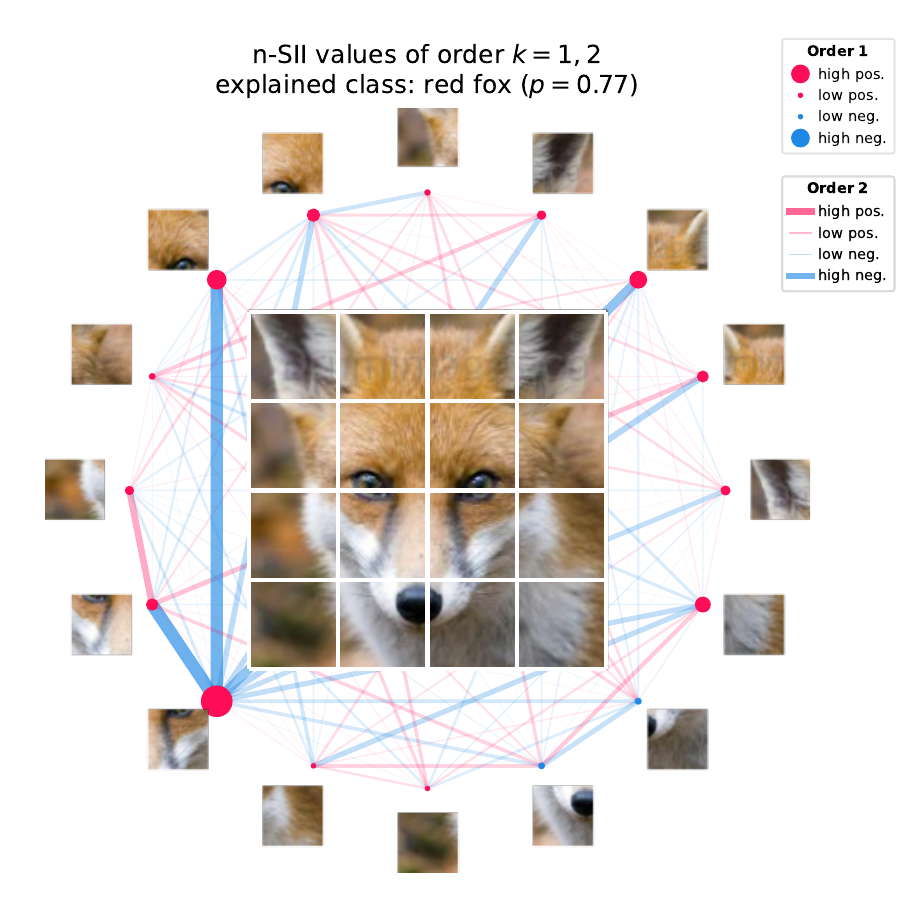}
    \includegraphics[width=0.32\textwidth]{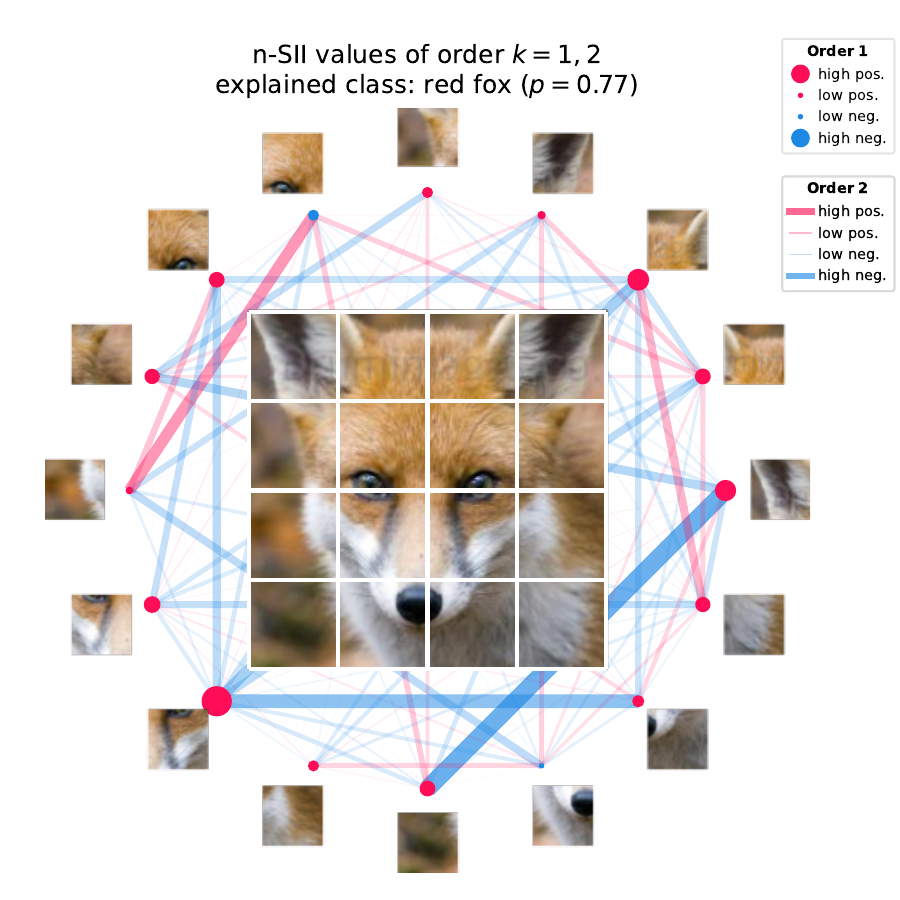} \\
     \includegraphics[width=0.32\textwidth]{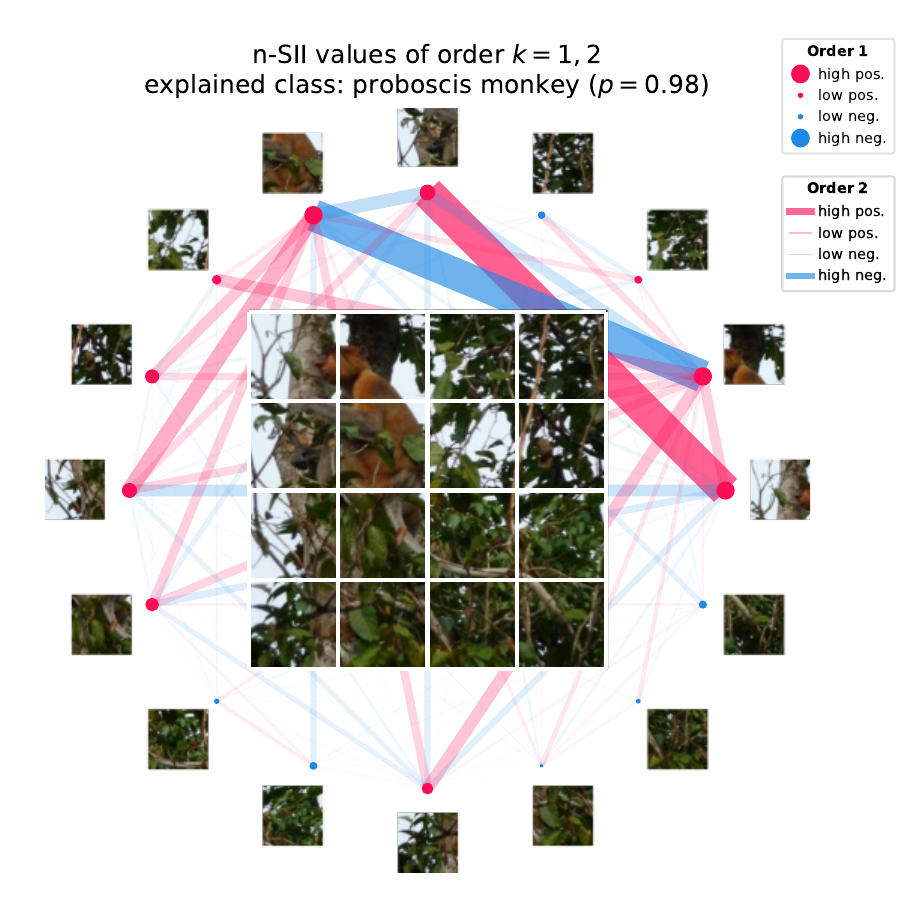} 
     \includegraphics[width=0.32\textwidth]{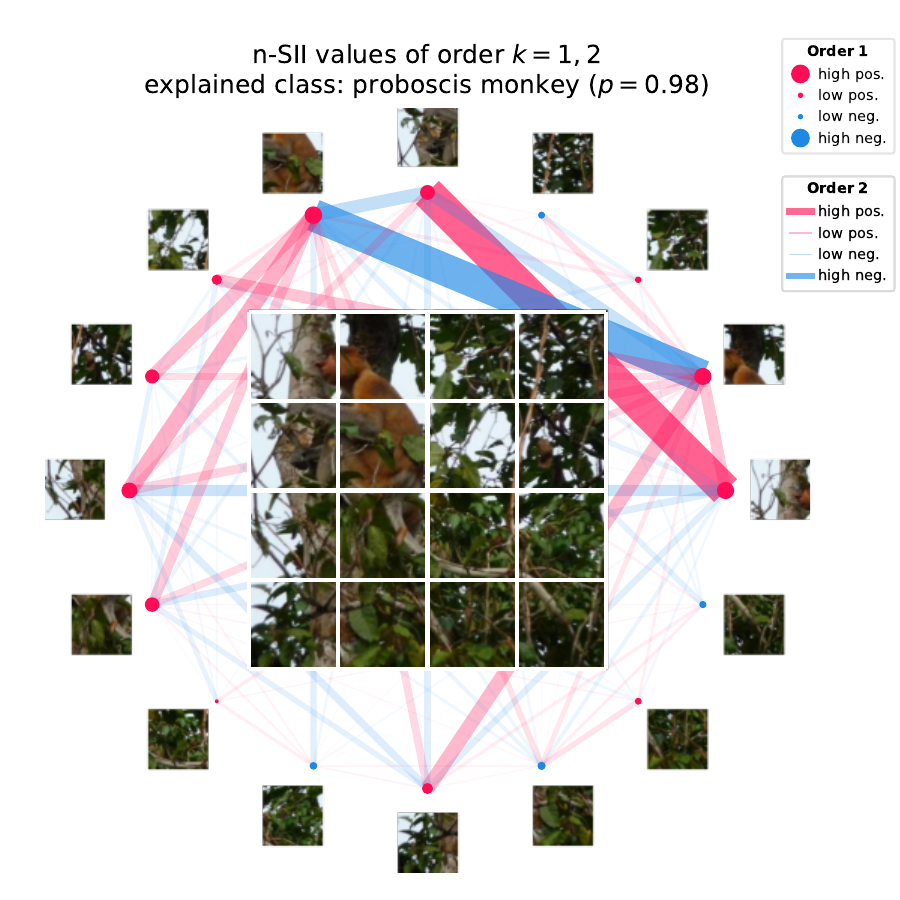}
     \includegraphics[width=0.32\textwidth]{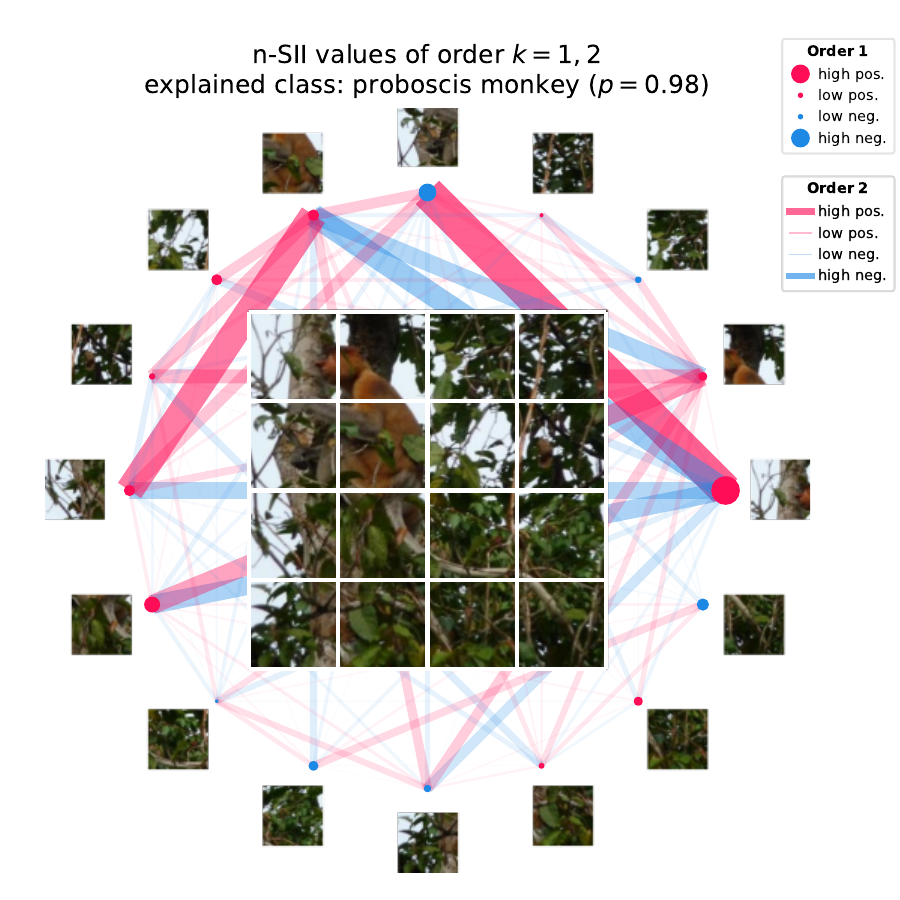}
     \\
     \includegraphics[width=0.32\textwidth]{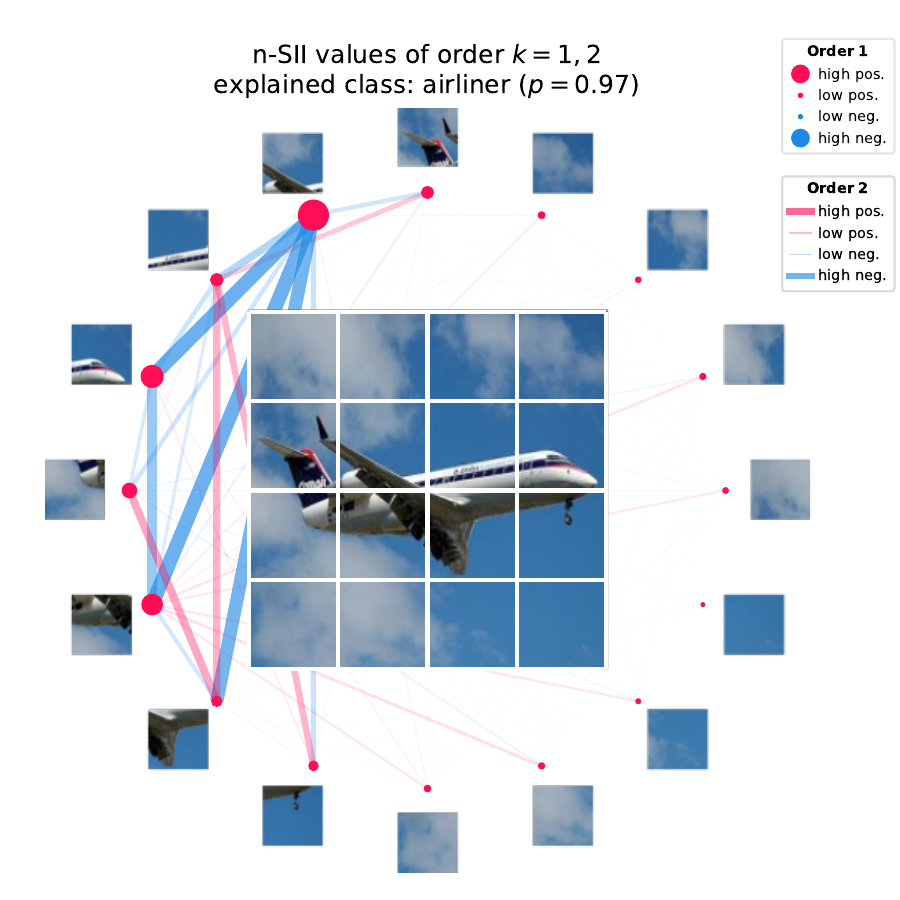}
     \includegraphics[width=0.32\textwidth]{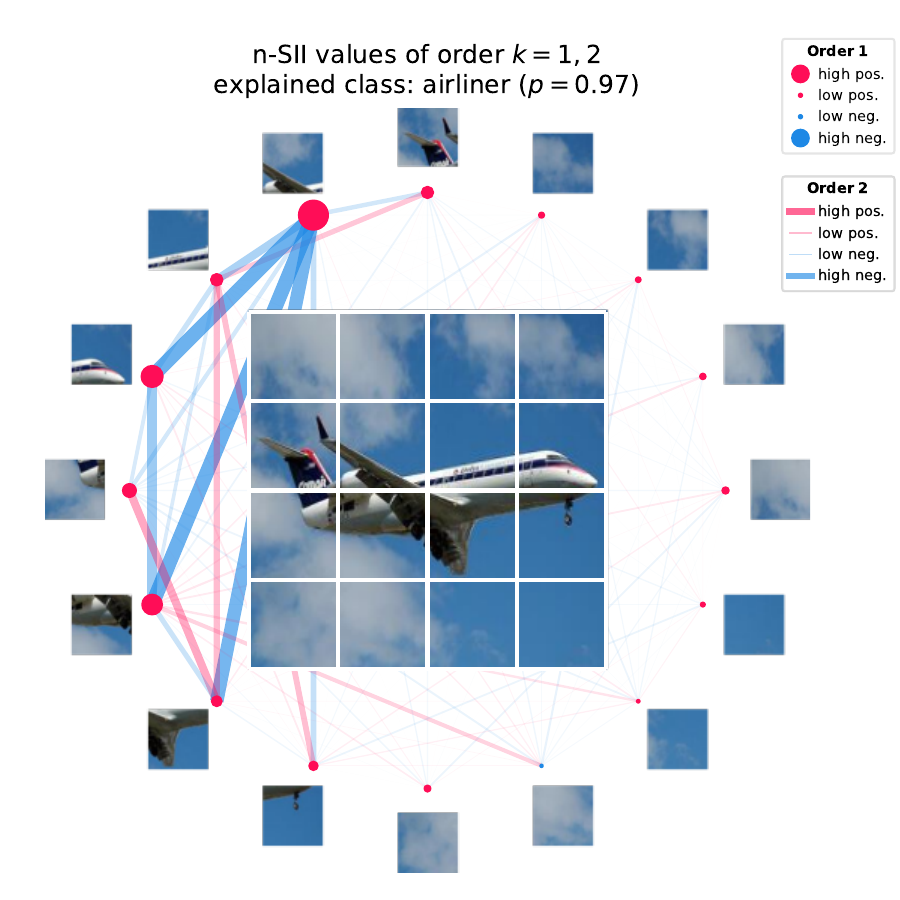}
     \includegraphics[width=0.32\textwidth]{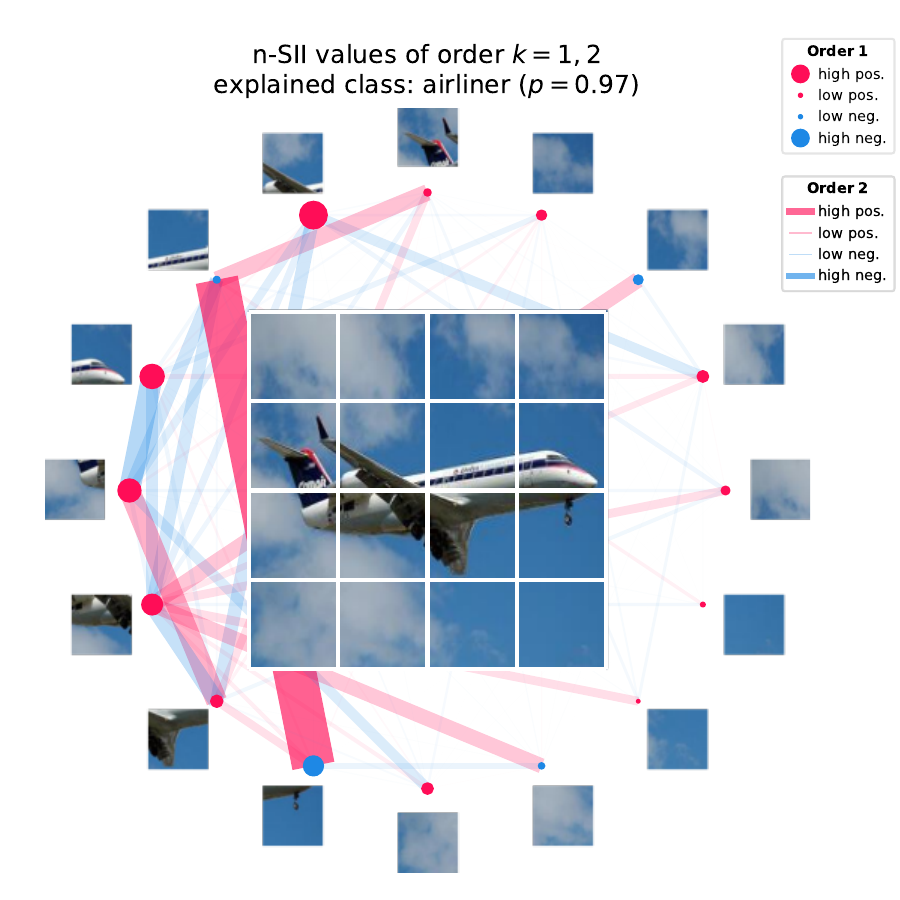}
    \caption{Row-wise comparison of ground-truth n-SII values of order $k=1,2$ for the predicted class probability of a ViT (first row) against n-SII values estimated by SVARM-IQ (second column) and permutation sampling (third row) with 5000 model evaluations. The pictures are taken from ImageNet \citep{ImageNet} and sliced into a grid of 16 patches ($n=16$).}
    \label{fig_network_comp}
\end{figure*}

\clearpage
\section{HARDWARE DETAILS}
\label{app:hardware_details}

This section contains the hardware details required to run and evaluate the empirical results.
All experiments where developed and run on a single DELL XPS 15 9510 notebook with Windows 10 Education installed as the operating system.
This laptop contains one 11th Gen Intel(R) Core(TM) i7-11800H clocking at 2.30GHz base frequency, 16.0 GB (15.7 GB usable) of RAM, and a NVIDIA GeForce RTX 3050 Ti Laptop GPU.

The model-function calls were pre-computed in around 10 hours on the graphics card.
The evaluation of the approximation quality required around 50 hours of work on the CPU.
In total, running the experiments took around 50 hours on a single core (no parallelization) and 10 hours on the graphics card.


\end{document}